\documentclass[11pt,a4paper]{article}

\usepackage[english]{babel}
\usepackage{german}
\usepackage{amssymb}
\usepackage{amsfonts}
\usepackage{amsmath}
\usepackage{fullpage}
\usepackage{cite}
\usepackage{enumitem}

\sloppy
\newtheorem{theorem}{Theorem}

\newtheorem{corollary}{Corollary}

\newtheorem{definition}{Definition}
\newtheorem{observation}{Observation}

\newtheorem{lemma}{Lemma}

\newenvironment{proof}[1][Proof]{\noindent\textbf{#1.} }{\ \rule{0.5em}{0.5em}}

\usepackage{tabularx}
\newcommand{\problemdef}[3]{
	\begin{center}
		\begin{minipage}{0.95\textwidth}
			\noindent
			#1
			\vspace{5pt}\\
			\setlength{\tabcolsep}{3pt}
			\begin{tabularx}{\textwidth}{@{}lX@{}}
				\textbf{Input:}& #2 \\
				\textbf{Question:}& #3
			\end{tabularx}
		\end{minipage}
	\end{center}
}

\newcommand{\ie}{i.e., }
\newcommand{\eg}{e.g., }

\newcommand{\NP}{\textrm{NP}}

\newcommand{\ML}{\textsc{ML}}

\newcommand{\MALlong}{\textsc{Min.~Aged Labeling}}
\newcommand{\MAL}{\textsc{MAL}}

\newcommand{\MSLlong}{\textsc{Min.~Steiner Labeling}}
\newcommand{\MSL}{\textsc{MSL}}

\newcommand{\MASLlong}{\textsc{Min.~Aged Steiner Labeling}}
\newcommand{\MASL}{\textsc{MASL}}

\newcommand{\ST}{{\textsc{Steiner Tree}}}
\newcommand{\VC}{\textsc{Vertex Cover}}
\newcommand{\MCC}{\textsc{Multicolored Clique}}
\newcommand{\MAXSAT}{\textsc{Monotone Max XOR(3)}}
\newcommand{\MAXSATshort}{\textsc{MonMaxXOR(3)}}

\usepackage{caption} 
\usepackage{subcaption}

\usepackage{graphicx}

\usepackage[capitalise,nameinlink, noabbrev]{cleveref}
\bibliographystyle{abbrv}

\begin{document}

\title{\vspace{-0.5cm}The complexity of computing optimum labelings \\
	for temporal connectivity}

\author{Nina Klobas\thanks{%
Department of Computer Science, Durham University, UK. Email: \texttt{nina.klobas@durham.ac.uk}}
\and George B.~Mertzios \thanks{%
Department of Computer Science, Durham University, UK. Email: \texttt{george.mertzios@durham.ac.uk}} \thanks{Supported by the EPSRC grant EP/P020372/1.} 
\and 
Hendrik Molter\thanks{%
Department of Industrial Engineering and Management, Ben-Gurion~University~of~the~Negev, Israel. Email: \texttt{molterh@post.bgu.ac.il}} \thanks{Supported by the ISF, grant No.~1070/20.}
\and Paul G. Spirakis\thanks{%
Department of Computer Science, University of Liverpool, UK and Computer Engineering \& Informatics Department, University of Patras, Greece. Email: \texttt{p.spirakis@liverpool.ac.uk}} \thanks{Supported by the NeST initiative of the School of EEE and CS at the University of Liverpool and by the EPSRC grant EP/P02002X/1.}}
\date{\vspace{-1.0cm}}

\maketitle

\begin{abstract}
	A graph is temporally connected if there exists a strict temporal path, \ie a path whose edges have \emph{strictly increasing} labels, from every vertex $u$ to every other vertex $v$. 
	In this paper we study \emph{temporal design} problems for undirected temporally connected graphs. 
	The basic setting of these optimization problems is as follows: given a connected undirected graph~$G$, what is the smallest number $|\lambda|$ of time-labels that we need to add to the edges of $G$ such that the resulting temporal graph $(G,\lambda)$ is temporally connected? 
	As it turns out, this basic problem, called \textsc{Minimum Labeling} (\ML), can be optimally solved in polynomial time. 
	However, exploiting the temporal dimension, the problem becomes more interesting and meaningful in its following variations, which we investigate in this paper. 
	First we consider the problem \MALlong\ (\MAL) of temporally connecting the graph when we are given an upper-bound on the allowed \emph{age} (\ie maximum label) of the obtained temporal graph $(G,\lambda)$. 
	Second we consider the problem \MSLlong\ (\MSL), where the aim is now to have a temporal path between any pair of ``important'' vertices which lie in a subset $R\subseteq V$, which we call the \emph{terminals}. 
	This relaxed problem resembles the problem \textsc{Steiner Tree} in static (\ie non-temporal) graphs. However, due to the requirement of \emph{strictly} increasing labels in a temporal path, \textsc{Steiner Tree} is \emph{not} a special case of \MSL. 
	Finally we consider the age-restricted version of \MSL, namely \MASLlong\ (\MASL). 
	Our main results are threefold: we prove that (i)~\MAL\ becomes NP-complete on undirected graphs, while (ii)~\MASL\ becomes W[1]-hard with respect to the number $|R|$ of terminals. 
	On the other hand we prove that (iii)~although the age-unrestricted problem \MSL\ remains NP-hard, it is in FPT with respect to the number $|R|$ of terminals. 
	That is, adding the age restriction, makes the above problems \emph{strictly harder} (unless P=NP or W[1]=FPT).\\
	
	\noindent\textbf{Keywords:} Temporal graph, graph labeling, foremost temporal path, temporal connectivity, \textsc{Steiner Tree}.
\end{abstract}

\section{Introduction}\label{intro-sec}

A temporal (or dynamic) graph is a graph whose underlying topology is subject to discrete changes over time. 
This paradigm reflects the structure and operation of a great variety of modern networks; 
social networks, wired or wireless networks whose links change dynamically, transportation networks, and several physical systems are only a few examples of networks that change over time~\cite{michailCACM,Nicosia-book-chapter-13,holme2019temporal}. 
Inspired by the foundational work of Kempe et al.~\cite{KKK00}, we adopt here a simple model for
temporal graphs, in which the vertex set remains unchanged while each edge is equipped with a set of integer time-labels.

\begin{definition}[temporal graph~\cite{KKK00}]
	\label{temp-graph-def} A \emph{temporal graph} is a pair $(G,\lambda)$,
	where $G=(V,E)$ is an underlying (static) graph and $\lambda :E\rightarrow 2^\mathbb{N}$ is a \emph{time-labeling} function which assigns to every edge of $G$ a set of discrete time-labels.
\end{definition}

Here, whenever $t\in \lambda(e)$, we say that the edge $e$ is \emph{active} or \emph{available} at time $t$. 
Throughout the paper we may refer to ``time-labels'' simply as ``labels'' for brevity. 
Furthermore, the \emph{age} (or \emph{lifetime}) $\alpha(G,\lambda)$ of the temporal graph $(G,\lambda)$ is the largest time-label used in it, \ie $\alpha(G,\lambda) = \max \{t\in \lambda(e) : e\in E\}$. 
One of the most central notions in temporal graphs is that of a \emph{temporal path} (or \emph{time-respecting path}) 
which is motivated by the fact that, due to causality, entities and information in temporal graphs can
``flow'' only along sequences of edges whose time-labels are strictly increasing, or at least non-decreasing.

\begin{definition}[temporal path]
	\label{temp-path-def} 
	Let $(G,\lambda)$ be a temporal graph, where $G=(V,E)$ is the underlying static graph. 
	A \emph{temporal path} in $(G,\lambda)$ is a sequence $(e_1,t_1), (e_2,t_2), \ldots, (e_k,t_k)$, where 
	$(e_1,e_2,\ldots,e_k)$ is a path in $G$, $t_i\in \lambda(e_i)$ for every $i=1,2,\ldots, k$, and $t_1 < t_2 < \ldots < t_k$.
\end{definition}

A vertex $v$ is \emph{temporally reachable} (or \emph{reachable}) from vertex $u$ in $(G,\lambda)$ if there exists a temporal path from $u$ to $v$. If every vertex $v$ is reachable by every other vertex $u$ in $(G,\lambda)$, then $(G,\lambda)$ is called \emph{temporally connected}. 
Note that, for every temporally connected temporal graph $(G,\lambda)$, we have that its age is at least as large as the diameter $d_G$ of the underlying graph $G$. Indeed, the largest label used in any temporal path between two anti-diametrical vertices cannot be smaller than $d_G$.
Temporal paths have been introduced by Kempe et al.~\cite{KKK00} for temporal graphs which have only one label per edge, \ie $|\lambda(e)|=1$ for every edge $e\in E$, and this notion has later been extended by Mertzios et al.~\cite{Mertzios2013Temporal} to temporal graphs with multiple labels per edge. 
Furthermore, depending on the particular application, both variations of temporal paths with non-decreasing~\cite{KKK00,AxiotisF16,KlobasMMNZ21-TemporalDisjoint} and with strictly increasing~\cite{Mertzios2013Temporal,EnrightMMZ21} labels have been studied. In this paper we focus on temporal paths with \emph{strictly increasing} labels. 
Due to the very natural use of temporal paths in various contexts, several path-related notions, 
such as temporal analogues of distance, diameter, reachability, exploration, and centrality have also been  studied~\cite{XuanFJ03,AkridaMNRSZ19,AkridaGMS16,HaagMNR22,AxiotisF16,CasteigtsPS21,DeligkasP20,EnrightMMZ21,enright2021assigning,ErlebachHK15,ErlebachS18,BussMNR20,Mertzios2013Temporal,MichailS-TSP16,AkridaGMS17,MolterRZ21,KlobasMMNZ21-TemporalDisjoint,ThejaswiLG21}. 

Furthermore, some non-path temporal graph problems have been recently introduced too, including
for example temporal variations of maximal cliques~\cite{viardCliqueTCS16,HimmelMNS17}, vertex cover~\cite{AkridaMSZ18-TVC,HammKMS22-TVC}, 
vertex coloring \cite{MertziosMZ21coloring}, 
matching \cite{MertziosMNZZmatching20}, and transitive orientation~\cite{MertziosMRSZ21-transitivity}. 
Motivated by the need of restricting the spread of epidemic, Enright et al.~\cite{EnrightMMZ21} studied the problem of removing the smallest number of time-labels from a given temporal graph such that every vertex can only temporally reach a limited number of other vertices. 
Deligkas et al.~\cite{DeligkasES21-arxiv} studied the problem of accelerating the spread of information for a set of sources to all vertices in a temporal graph, by only using delaying operations, \ie by shifting specific time-labels to a later time slot. 
The problems studied in~\cite{DeligkasES21-arxiv} are related but orthogonal to our temporal connectivity problems. 
Various other temporal graph modification problems have been also studied, see for example~\cite{AxiotisF16,DeligkasP20,MolterRZ21,CasteigtsPS21,enright2021assigning}.

The time-labels of an edge $e$ in a temporal graph indicate the discrete units of time (\eg days, hours, or even seconds) 
in which $e$ is active. However, in many real dynamic systems, \eg in synchronous mobile distributed systems that operate in 
discrete rounds, or in unstable chemical or physical structures, maintaining an edge over time requires energy and thus comes at a cost. 
One natural way to define the \emph{cost} of the whole temporal graph $(G,\lambda)$ is the \emph{total number} of time-labels used in it, \ie the total cost of $(G,\lambda)$ is $|\lambda|=\sum_{e\in E} |\lambda_e|$.

In this paper we study \emph{temporal design} problems of undirected temporally connected graphs. 
The basic setting of these optimization problems is as follows: given an undirected graph~$G$, 
what is the smallest number $|\lambda|$ of time-labels that we need to add to the edges of $G$ such that $(G,\lambda)$ is temporally connected? 
As it turns out, this basic problem can be optimally solved in polynomial time, thus answering to a conjecture made in~\cite{AkridaGMS17}. 
However, exploiting the temporal dimension, the problem becomes more interesting and meaningful in its following variations, which we investigate in this paper. 
First we consider the problem variation where we are given along with the input also an upper bound of the allowed \emph{age} (\ie maximum label) of the obtained temporal graph $(G,\lambda)$. 
This age restriction is sensible in more pragmatic cases, where delaying the latest arrival time of any temporal path incurs further costs, \eg when we demand that all agents in a safety-critical distributed network are synchronized as quickly as possible, and with the smallest possible number of communications among them. 
Second we consider problem variations where the aim is to have a temporal path between any pair of ``important'' vertices which lie in a subset $R\subseteq V$, which we call the \emph{terminals}. 
For a detailed definition of our problems we refer to \cref{prelim-sec}.

Here it is worth noting that the latter relaxation of temporal connectivity resembles the problem \textsc{Steiner Tree} in static (\ie non-temporal) graphs. Given a connected graph $G=(V,E)$ and a set $R\subseteq V$ of terminals, \textsc{Steiner Tree} asks for a smallest-sized subgraph of $G$ which connects all terminals in $R$. Clearly, the smallest subgraph sought by \textsc{Steiner Tree} is a tree. 
As it turns out, this property does not carry over to the temporal case. Consider for example an arbitrary graph $G$ and a terminal set $R=\{a,b,c,d\}$ such that $G$ contains an induced cycle on four vertices $a,b,c,d$; that is, $G$ contains the edges $ab, bc, cd, da$ but not the edges $ac$ or $bd$. Then, it is not hard to check that only way to add the smallest number of time-labels such that all vertices of $R$ are temporally connected is to assign one label to each edge of the cycle on $a,b,c,d$, \eg $\lambda(ab)=\lambda(cd)=1$ and $\lambda(bc)=\lambda(cd)=2$. The main underlying reason for this difference with the static problem \textsc{Steiner Tree} is that temporal connectivity is \emph{not transitive} and \emph{not symmetric:} if there exists temporal paths from $u$ to $v$, and from $v$ to $w$, it is not a priori guaranteed that a temporal path from $v$ to $u$, or from $u$ to $w$ exists.

Temporal network design problems have already been considered in previous works. 
Mertzios et al.~\cite{Mertzios2013Temporal} proved that it is APX-hard to compute a minimum-cost labeling for temporally connecting an input \emph{directed} graph $G$, where the age of the graph is upper-bounded by the diameter of $G$. 
This hardness reduction was strongly facilitated by the careful placement of the edge directions in the constructed instance, in which every vertex was reachable in the static graph by only constantly many vertices. Unfortunately this cannot happen in an undirected connected graph, where every vertex is reachable by all other vertices. 
Later, Akrida et al.~\cite{AkridaGMS17} proved that it is also APX-hard to \emph{remove} the largest number of time-labels from a given temporally connected (undirected) graph $(G,\lambda)$, while still maintaining temporal connectivity. In this case, although there are no edge directions, the hardness reduction was strongly facilitated by the careful placement of the initial time-labels of $\lambda$ in the input temporal graph, in which every pair of vertices could be connected by only a few different temporal paths, among which the solution had to choose. Unfortunately this cannot happen when the goal is to add time-labels to an undirected connected graph, where there are potentially multiple ways to temporally connect a pair of vertices (even if we upper-bound the largest time-label by the diameter).

Summarizing, the above technical difficulties seem to be the reason why the problem of \emph{adding} the minimum number of time-labels with an age-restriction to an \emph{undirected} graph to achieve temporal connectivity remained open until now for the last decade. 
In this paper we overcome these difficulties by developing a hardness reduction from a variation of the problem \textsc{Max~XOR~SAT} (see \cref{thm:NPDiameterStatic} in \cref{MAL-NP-complete-sec}) where we manage to add the appropriate (undirected) edges among the variable-gadgets such that simultaneously~(i) the distance between any two vertices from different variable gadgets remains small (constant) and~(ii) there is no shortest path between two vertices of the \emph{same} variable gadget that leaves this gadget.

\textbf{Our contribution and road-map.} 
In the first part of our paper, in \cref{MAL-NP-complete-sec}, we present our results on \MALlong\ (\MAL). 
This problem is the same as \ML, with the additional restriction that we are given along with the input an upper bound on the allowed \emph{age} of the resulting temporal graph $(G,\lambda)$. 
Using a technically involved reduction from a variation of \textsc{Max~XOR~SAT}, 
we prove that \MAL\ is NP-complete on undirected graphs, even when the required maximum age is equal to the diameter $d_G$ of the input static graph $G$.

In the second part of our paper, in \cref{Steiner-sec}, we present our results on the Steiner-tree versions of the problem, namely on \MSLlong\ (\MSL) and \MASLlong\ (\MASL). 
The difference of \MSL\ from \ML\ is that, here, the goal is to have a temporal path between any pair of ``important'' vertices which lie in a given subset $R\subseteq V$ (the \emph{terminals}). 
In \cref{MSL-NP-complete-subsec} we prove that \MSL\ is NP-complete by a reduction from \textsc{Vertex Cover}, the correctness of which requires showing structural properties of \MSL. 
Here it is worth recalling that, as explained above, the classical problem \textsc{Steiner Tree} on static graphs is \emph{not} a special case of \MSL, due to the requirement of strictly increasing labels in a temporal path. 
Furthermore, we would like to emphasize here that, as temporal connectivity is neither transitive nor symmetric, a straightforward NP-hardness reduction from \textsc{Steiner Tree} to \MSL\ does not seem to exist. 
For example, as explained above, in a graph that contains a $C_4$ with its four vertices as terminals, labeling a Steiner tree is sub-optimal for \MSL.

In \cref{MSL-FPT-subsec} we provide a fixed-parameter tractable (FPT) algorithm for \MSL\ with respect to the number $|R|$ of terminal vertices, by providing a parameterized reduction to \textsc{Steiner Tree}. 
The proof of correctness of our reduction, which is technically quite involved, is of independent interest, 
as it proves crucial graph-theoretical properties of minimum temporal \textsc{Steiner} labelings. 
In particular, for our algorithm we prove (see~\cref{lem:MSLstructure}) that, for any undirected graph $G$ with a set $R$ of terminals, there always exists at least one minimum temporal \textsc{Steiner} labeling $(G,\lambda)$ which 
labels edges either from (i) a tree or from (ii) a tree with one extra edge that builds a $C_4$.

In \cref{MASL-W1-hard-subsec} we prove that \MASL\ is W[1]-hard with respect to the number $|R|$ of terminals. 
Our results actually imply the stronger statement that \MASL\ is W[1]-hard even with respect to the number of time-labels of the solution (which is a larger parameter than the number $|R|$ of terminals).

Finally, we complete the picture by providing some auxiliary results in our preliminary \cref{prelim-sec}. 
More specifically, in~\cref{gossip-polynomial-subsec} we prove that \ML\ can be solved in polynomial time, 
and in~\cref{DAG-subsec} we prove that the analogue minimization versions of \ML\ and \MAL\ on directed acyclic graphs are solvable in polynomial time.

\section{Preliminaries and notation}\label{prelim-sec}

Given a (static) undirected graph $G=(V,E)$, an edge between two vertices $u,v\in V$ is denoted by $uv$, and in this case
the vertices $u,v$ are said to be \emph{adjacent} in $G$. If the graph is directed, we will use the ordered pair $(u,v)$ (resp.~$(v,u)$) to denote the oriented edge from $u$ to $v$ (resp.~from $v$ to $u$). 
The \emph{age} of a temporal graph $(G,\lambda)$ is denoted by $\alpha(G,\lambda) = \max \{t\in \lambda(e) : e\in E\}$. 
A temporal path $(e_1,t_1),(e_2,t_2),\ldots, (e_k,t_k)$ from vertex $u$ to vertex $v$ is called \emph{foremost}, 
if it has the smallest arrival time $t_k$ among all temporal paths from $u$ to $v$. Note that there might be another temporal path from $u$ to $v$ that uses fewer edges than a foremost path. 
A temporal graph $(G,\lambda)$ is \emph{temporally connected} if, for every pair of vertices $u,v\in V$, there exists a temporal path (see \cref{temp-path-def}) $P_1$ from $u$ to $v$ and a temporal path $P_2$ from $v$ to $u$. 
Furthermore, given a set of terminals $R\subseteq V$, the temporal graph $(G,\lambda)$ is \emph{$R$-temporally connected} if, for every pair of vertices $u,v\in R$, there exists a temporal path from $u$ to $v$ and a temporal path from $v$ to $u$; note that $P_1$ and $P_2$ can also contain vertices from $V\setminus R$. 
Now we provide our formal definitions of our four decision problems.

\medskip

\noindent \fbox{ 
	\begin{minipage}{0.471\textwidth}
		\textsc{Min.~Labeling (ML)}
		
		\vspace{1.2mm}
		{\bf{Input:}} A static graph $G = (V,E)$ and \vskip 0pt
		a $k\in \mathbb{N}$. \\
		{\bf{Question:}} Does there exist a temporally \vskip 0pt
		connected temporal graph $(G,\lambda)$, \vskip 0pt
		where $|\lambda |\leq k$?
\end{minipage}} 
\vspace{0,2cm} \noindent \fbox{ 
	\begin{minipage}{0.471\textwidth}
		\textsc{Min.~Aged Labeling (MAL)}
		
		\vspace{1.2mm}
		{\bf{Input:}} A static graph $G = (V,E)$ \\
		and two integers $a,k\in \mathbb{N}$. \\
		{\bf{Question:}} Does there exist a temporally connected temporal graph $(G,\lambda)$, \\
		where $|\lambda|\leq k$ and $\alpha(\lambda)\leq a$?
\end{minipage}}

\vspace{0,2cm} \noindent \fbox{ 
	\begin{minipage}{0.471\textwidth}
		\textsc{Min.~Steiner Labeling (MSL)}
		
		\vspace{1.2mm}
		{\bf{Input:}} A static graph $G = (V,E)$, \\
		a 
		subset $R\subseteq V$ and a $k\in \mathbb{N}$. \\
		{\bf{Question:}} Does there exist an temporally \vskip 0pt
		$R$-connected temporal graph $(G,\lambda)$, \\
		where $|\lambda |\leq k$?
\end{minipage}} 
\vspace{0,2cm} \noindent \fbox{ 
	\begin{minipage}{0.471\textwidth}
		\textsc{Min.~Aged Steiner Labeling (MASL)}
		
		\vspace{1.2mm}
		{\bf{Input:}} A static graph $G = (V,E)$, \\
		a subset $R\subseteq V$, and two integers $a,k\in \mathbb{N}$. \\
		{\bf{Question:}} Does there exist a temporally $R$-connected temporal graph $(G,\lambda)$, \\
		where $|\lambda|\leq k$ and $\alpha(\lambda)\leq a$?
\end{minipage}}

\medskip

Note that, for both problems \MAL\ and \MASL, whenever the input age bound $a$ is strictly smaller than the diameter $d$ of $G$, the answer is always \textsc{NO}. Thus, we always assume in the remainder of the paper that $a\geq d$, 
where $d$ is the diameter of the input graph $G$. 
For simplicity of the presentation, we denote next by $\kappa(G,d)$ the smallest number $k$ for which $(G,k,d)$ is a \textsc{YES} instance for \MAL.

\begin{observation}\label{kappa-bound-obs}
	For every graph $G$ with $n$ vertices and diameter $d$, we have that $\kappa(G,d) \leq n(n-1)$.
\end{observation}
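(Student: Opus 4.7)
The plan is to exhibit an explicit labeling achieving the bound, using BFS trees from every vertex. For each vertex $v \in V$, fix a BFS tree $T_v$ of $G$ rooted at $v$. Since BFS assigns to every vertex $u$ its graph distance $\mathrm{dist}_G(v,u)$, the tree $T_v$ has exactly $n-1$ edges and its depth is at most $d$.

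Next, I would define the labeling $\lambda$ by superimposing one simple labeling per tree. Specifically, for every $v \in V$ and every edge $e$ of $T_v$, if $e$ joins a vertex at depth $i-1$ with a vertex at depth $i$ in $T_v$, then put the integer $i$ into $\lambda(e)$. Because the depth of $T_v$ is at most $d$, every label introduced in this way lies in $\{1, 2, \ldots, d\}$, so $\alpha(G,\lambda) \leq d$.

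For temporal connectivity, fix an arbitrary ordered pair $(v, w)$ of distinct vertices. The unique $v$-to-$w$ path in $T_v$ visits vertices of strictly increasing BFS-depths $0, 1, \ldots, \mathrm{dist}_G(v,w)$, so by construction its consecutive edges carry the labels $1, 2, \ldots, \mathrm{dist}_G(v,w)$, which is a valid strictly increasing label sequence and hence a temporal path from $v$ to $w$. Hence $(G,\lambda)$ is temporally connected with age at most $d$.

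Finally, the size bound follows by double counting: each tree $T_v$ contributes at most one label to each of its $n-1$ edges, so
\[
|\lambda| \;=\; \sum_{e \in E} |\lambda(e)| \;\leq\; \sum_{v \in V} |E(T_v)| \;=\; n(n-1),
\]
which proves $\kappa(G,d) \leq n(n-1)$. There is no real obstacle here; the only point worth stressing is that, because $\lambda(e)$ is a set, labels contributed independently by different BFS trees that happen to coincide on the same edge are counted only once, so the inequality is genuinely an upper bound rather than an equality.
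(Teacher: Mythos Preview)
Your proof is correct and follows essentially the same approach as the paper: superimpose BFS trees rooted at every vertex, label each tree edge by the depth of its deeper endpoint, and count at most $n(n-1)$ labels in total with age at most $d$. Your presentation is in fact slightly more explicit than the paper's, particularly in spelling out why the labeling yields strict temporal paths and in noting that coinciding labels from different trees are not double-counted.
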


\begin{proof}
	For every vertex $v$ of $G=(V,E)$, consider a BFS tree $T_v$ rooted at $v$, while every edge from a vertex $u\neq v$ to its parent in $T_v$ is assigned the time-label $dist(v,u)$, \ie the length of the shortest path from $v$ to $u$ in $G$. Note that each of these time-labels is smaller than or equal to the diameter $d$ of $G$. 
	Clearly, each BFS tree $T_v$ assigns in total $n-1$ time-labels to the edges of $G$, and thus the union of all BFS trees $T_v$, where $v\in V$, assign in total at most $n(n-1)$ labels to the edges of $G$. 
\end{proof}

\medskip

The next lemma shows that the upper bound of Observation~\ref{kappa-bound-obs} is asymptotically tight as, for cycle graphs $C_n$ with diameter $d$, 
we have that $\kappa(C_n,d)=\Theta(n^2)$.

\begin{lemma} \label{thm:BoundsForCycles}
	Let $C_n$ be a cycle on $n$ vertices, where $n \neq 4$, and let $d$ be its diameter. Then
	\begin{equation*}
		\kappa(C_n,d) = 
		\begin{cases}
			d^2, & \text{when $n = 2d$} \\
			2d^2 + d, & \text{when $n = 2d + 1$.}
		\end{cases}
	\end{equation*}
\end{lemma}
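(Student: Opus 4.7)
My plan rests on a single structural observation: any temporal path of age at most $d$ between two vertices at graph-distance exactly $d$ in $C_n$ must follow a length-$d$ shortest path in $C_n$ and carry the labels $1, 2, \ldots, d$ strictly increasingly along its edges. The crucial asymmetry between the two cases is that in $C_{2d+1}$ each pair at distance $d$ has a \emph{unique} shortest path (the other arc has length $d+1 > d$), whereas in $C_{2d}$ each antipodal pair has \emph{two} shortest paths (the two half-cycles). This is what drives the two formulas.

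For the odd case $n = 2d+1$, I would argue that the labeling is completely forced. Label the vertices $v_0, \ldots, v_{2d}$ cyclically and set $e_j = v_j v_{j+1}$ (indices modulo $2d+1$). Each vertex has two partners at distance $d$, giving $4d+2$ ordered pair-directions. For the direction $v_i \to v_{i+d}$, the unique length-$d$ shortest path forces the $d$ slots $(e_{i+t-1}, t)$ for $t \in \{1, \ldots, d\}$. As $i$ ranges over $\{0, \ldots, 2d\}$, the index $i + t - 1$ ranges over all residues modulo $2d+1$, so for every $t \in \{1, \ldots, d\}$ every edge is forced to carry label $t$. This yields the lower bound $|\lambda| \geq (2d+1)\,d = 2d^2 + d$; the matching upper bound is realised by assigning exactly this complete labeling, and one then checks that all remaining pair-connectivity requirements are automatically satisfied.

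For the even case $n = 2d$, I would use a double-counting argument. Denote the $2d$ ordered antipodal directions by $A_i\colon v_i \to v_{i+d}$ for $i \in \{0, \ldots, 2d-1\}$; each must follow one of two shortest length-$d$ paths, which I call CW or CCW. A short case analysis shows that any slot $(e_j, t)$ with $t \in \{1, \ldots, d\}$ can be forced by at most two of the $2d$ directions: namely $A_{j-t+1}$ choosing CW, or $A_{j+t}$ choosing CCW (indices modulo $2d$). Since the total demand is $2d \cdot d = 2d^2$ and each slot serves at most two demands, this gives $|\lambda| \geq d^2$.

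For the matching upper bound I would exhibit an explicit alternating-parity labeling: declare $A_i$ to be CW iff $i$ is even. One checks directly that every slot $(e_j, t)$ appearing in some required path is then shared by exactly two pair-directions (one CW from an even index and one CCW from an odd index), so $|\lambda| = d^2$. All non-antipodal pairs come for free, since any $(v_a, v_b)$ with $d(v_a, v_b) < d$ is a consecutive sub-arc of some antipodal length-$d$ path, whose strictly increasing label sub-sequence restricts to a valid temporal path from $v_a$ to $v_b$. The main obstacle I anticipate is the overlap bookkeeping in the even case: one must verify that the alternating-parity orientation globally saturates the multiplicity-$2$ upper bound, which I expect to reduce to a short parity argument on the indices of the set $\{i : A_i = \mathrm{CCW}\} \subseteq \{0, \ldots, 2d-1\}$, yielding exactly $d^2$ unused $(e_j,t)$ slots out of the $2d \cdot d = 2d^2$ possible ones.
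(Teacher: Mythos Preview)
Your approach is correct and, in fact, more complete than the paper's own proof.

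For the odd case $n=2d+1$ you and the paper argue identically: each diametral pair has a unique shortest path, so every edge is forced to carry every label in $\{1,\ldots,d\}$, giving both the lower bound and (trivially) the matching construction.

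For the even case your treatment differs from the paper's in one important respect. The paper only exhibits the alternating even/odd-label construction and verifies that it works with $d^2$ labels; it never proves the matching lower bound $\kappa(C_{2d},d)\ge d^2$. Your double-counting argument fills this gap cleanly: each of the $2d$ ordered antipodal requirements consumes $d$ slots, and a slot $(e_j,t)$ can serve at most the two requirements $A_{j-t+1}$ (CW) and $A_{j+t}$ (CCW), so at least $d^2$ distinct slots are necessary. This is a genuine addition. Your alternating-parity choice of orientations ($A_i$ CW iff $i$ even) produces exactly the same labeled edge set as the paper's construction---the slot $(e_j,t)$ is used iff $j+t$ is odd---so the upper-bound constructions coincide up to a shift in indexing.

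Your verification that non-antipodal pairs come for free is also correct, though it deserves one explicit line you only hint at: for any ordered pair $(v_a,v_b)$ at distance $k<d$ along, say, the CW arc, the $d-k+1\ge 2$ consecutive indices $i\in\{a-(d-k),\ldots,a\}$ are candidates for a CW antipodal path $A_i$ containing that arc, and at least one of them has the required parity. The symmetric argument covers the reverse direction. This makes precise the ``consecutive sub-arc'' claim and removes the only point where your sketch is informal.
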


\begin{proof}
	Let $V(C_{n})=\{v_{1},v_{2},\dots ,v_{n}\}$ be the vertices of $C_{n}$. In
	the following, if not specified otherwise, all subscripts are
	considered modulo $n$. We distinguish two cases, depending on the parity of $n$.
	
	First, when $n$ is odd, \ie $n=2d+1$. In this case one can observe that for
	each vertex $v_{i}\in V(C_{n})$ there are exactly two vertices on the
	distance $d$ from it, namely $v_{i+d}$ (on the \emph{right side} of $v_{i}$)
	and $v_{i-d}$ (on the \emph{left side} of $v_{i}$). Therefore, the $(v_{i},v_{i+d}/v_{i-d})$ and $(v_{i+d}/v_{i-d},v_{i})$-temporal paths must
	be labeled using all labels from $1$ to $d$, one per each edge. Note also
	that each edge $v_{i}v_{i+1}$ lies on the $d$ temporal paths when the
	starting vertex $v_{j}$ is on the left side of it ($j\in \{i,i-1,i-2,\dots
	,i-d\}$) and on $d$ temporal paths, when the starting vertex $v_{j^{\prime }}
	$ is on the right side of it ($j^{\prime }\in \{i,i+1,i+2,\dots ,i+d\}$).
	This results in edge $v_{i}v_{i+1}$ admitting all labels. As this is true
	for any edge of $C_{n}$, each edge is labeled with all labels. Therefore we
	need $n\cdot d=2d^{2}+1$ labels to ensure the existence of temporal paths
	among any two vertices in $C_{2d+1}$.
	
	Now let us continue with the case when $n$ is even, \ie$n=2d$. In this case
	each vertex $v_{i}\in V(C_{n})$ has exactly one vertex, $v_{i-d}=v_{i+d}$,
	on the distance $d$ from it and two on the distance $d-1$ from it ($v_{i-d+1}
	$ and $v_{i+d-1}$). Therefore we have to label two disjoint paths starting
	in $v_{i}$, one of length $d$ and the other of length $d-1$. Suppose that we
	chose the following labeling to label the edges of $C_{n}$. Let $i\in
	\{1,2,\dots ,d\}$, if the edge is of form $v_{2i}v_{2i+1}$ then it is
	labeled with all even labels, $\{2,4,6,\dots ,j\}$, where $j\leq d$, and if
	the edge is of form $v_{2i+1},v_{2i}$ then it is labeled with all odd
	labels, $\{1,3,5,\dots ,j^{\prime }\}$, where $j^{\prime }\leq d$. Now
	vertices $v_{2i-1}$ and $v_{2i}$ use the same labels (\ie the same temporal
	paths), to reach all other vertices from the cycle. 
	Namely, the ($v_{2i-1},v_{2i+d-1}$)-temporal path is of length $d$, uses all labels from $1
	$ to $d$ and visits vertices $v_{2i-1},v_{2i},v_{2i+1},\dots ,v_{2i+d-1}$.
	Therefore $v_{2i-1}$ and $v_{2i}$ can reach vertices $\{v_{2i+1},v_{2i+2},\dots ,v_{2i+d-1}\}$. 
	Similarly, the ($v_{2i},v_{2i-d}$)-temporal path is of
	length $d$, uses all labels from $1$ to $d$ and visits 
	vertices $v_{2i},v_{2i-1},v_{2i-2},\dots ,v_{2i-d}$. So $v_{2i-1}$ and $v_{2i}$ can
	reach vertices $\{v_{2i-2},v_{2i-3},\dots ,v_{2i-d}\}$. 
	This implies that that $v_{2i}$ and $v_{2i-1}$ reach all other vertices in the graph. This holds for
	any two endpoints of an edge in $C_{n}$. 
	Therefore we need $d\cdot \frac{n}{2}=d^{2}$ labels to ensure the existence of temporal paths among any two
	vertices in~$C_{2d}$.
\end{proof}

\subsection{A polynomial-time algorithm for \ML}\label{gossip-polynomial-subsec}

As a first warm-up, we study the problem \ML, where no restriction is imposed on the maximum allowed age of the output temporal graph. 
It is already known by Akrida et al.~\cite{AkridaGMS17} that any undirected graph can be made temporally connected by adding at most $2n-3$ time-labels, while for trees $2n-3$ labels are also necessary. 
Moreover, it was conjectured that every graph needs at least $2n-4$ time-labels~\cite{AkridaGMS17}. 
Here we prove their conjecture true by proving that, if $G$ contains (resp.~does not contain) the cycle $C_4$ on four vertices as a subgraph, 
then $(G,k)$ is a \textsc{YES} instance of \ML\ if and only if $k\geq 2n-4$ (resp.~$k\geq 2n-3$). 
The proof is done via a reduction to the gossip problem~\cite{Bumby1981Problem} (for a survey on gossiping see also~\cite{Hedetniemi1988Gossiping}). 

The related problem of achieving temporal connectivity by assigning to every edge of the graph at most one time-label, has been studied by G\"obel et al.~\cite{GobelCV91}, where the relationship with the gossip problem has also been drawn. Contrary to \ML, this problem is NP-hard~\cite{GobelCV91}. 
That is, the possibility of assigning two or more labels to an edge makes the problem computationally much easier. 
Indeed, in a $C_4$-free graph with $n$ vertices, an optimal solution to \ML\ consists in assigning in total $2n-3$ time-labels to the $n-1$ edges of a spanning tree. In such a solution, one of these $n-1$ edges receives one time-label, while each of the remaining $n-2$ edges receives two time-labels. Similarly, when the graph contains a $C_4$, it suffices to span the graph with four trees tooted at the vertices of the $C_4$, 
where each of the edges of the $C_4$ receives one time-label and each edge of the four trees receives two labels. That is, a graph containing a $C_4$ can be temporally connected using $2n-4$ time-labels.

In the gossip problem we have $n$ agents from a set $A$. At the beginning, every agent $x \in A$ holds its own secret. 
The goal is that each agent eventually learns the secret of every other agent. 
This is done by producing a sequence of unordered pairs $(x,y)$, where $x,y \in A$ and each such pair represents one phone call between the agents involved, during which the two agents exchange all the secrets they currently know.

The above gossip problem is naturally connected to \ML. The only difference between the two problems is that, in gossip, 
all calls are non-concurrent, while in \ML\ we allow concurrent temporal edges, \ie two or more edges can appear at the same time slot $t$. 
Therefore, in order to transfer the known results from gossip to \ML, it suffices to prove that in \ML\ we can equivalently consider solutions with non-concurrent edges (see~\cref{lemma:gossip-tgGeneral}).

From the set of agents $A$ and a sequence of calls $\mathcal{C}=c(1), c(2), \dots , c(m)$
we build a temporal graph $\mathcal{G}_\mathcal{C} = (G,\lambda)$ by the following procedure.
For every agent $x \in A$ we create a vertex $v_x \in V(G)$.
Every phone call $c(i)$ between two agents $x,y$ gives rise to a time edge $(v_x v_y, i)$ of $\mathcal{G}_\mathcal{C}$.
Therefore the labeling $\lambda$ is defined by the sequence of phone calls. 
Since no two calls are concurrent, we can order them linearly: for every $1\leq i\leq m$, the phone call $c(i)$ gives the label $i$ to the edge between the two agents involved.

\begin{observation}
	\label{obs:gossiptg1}
	If the sequence $c(1), c(2), \dots, c(m)$ of $m$ phone calls results in all agents knowing all secrets, then the above construction produces a temporally connected temporal graph $\mathcal{G}_\mathcal{C}=(G,\lambda)$ with $|\lambda|=m$.
\end{observation}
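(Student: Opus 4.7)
The plan is to formalize the natural correspondence between gossip semantics and temporal reachability in $\mathcal{G}_\mathcal{C}$. The key invariant I would establish is: for every $i \in \{0, 1, \ldots, m\}$ and every pair of agents $x,y$, agent $y$ knows agent $x$'s secret after the first $i$ calls if and only if there exists a temporal path from $v_x$ to $v_y$ in $\mathcal{G}_\mathcal{C}$ using only labels from $\{1,\ldots,i\}$, with the convention that every agent trivially knows its own secret via the empty path.

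I would prove this invariant by induction on $i$. The base case $i=0$ is immediate: no agent has yet heard any foreign secret, and no time edges exist. For the inductive step, consider the call $c(i+1)$ between agents $x'$ and $y'$, which by construction creates the time edge $(v_{x'} v_{y'},\, i+1)$. After this call, $y'$ knows $x$'s secret iff either $y'$ or $x'$ already knew it; by the inductive hypothesis this corresponds to a temporal path from $v_x$ to $v_{y'}$ or to $v_{x'}$ whose maximum label is at most $i$. In the latter case, appending the new time edge $(v_{x'}v_{y'},\, i+1)$ yields a valid temporal path from $v_x$ to $v_{y'}$, the strict-increase condition being automatic because $i+1 > i$. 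A symmetric argument handles $x'$ learning $y$'s secret, and no other agent's knowledge changes in this step.

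Applying the invariant at $i = m$ together with the hypothesis that all agents know all secrets at the end of the gossip sequence yields a temporal path between every ordered pair of vertices, so $\mathcal{G}_\mathcal{C}$ is temporally connected. The equality $|\lambda| = m$ is immediate from the construction: each call $c(i)$ contributes exactly one label $i$ to some edge, and the labels $1,2,\ldots,m$ are pairwise distinct, so $|\lambda| = \sum_{e \in E(G)} |\lambda(e)| = m$. I expect no serious obstacle here; the only point that requires attention is the strict-increase requirement in \cref{temp-path-def}, which is automatically satisfied because the gossip calls are linearly ordered in time and hence every propagation chain of secrets translates into a strictly monotone label sequence. The harder reverse direction, namely extracting a non-concurrent call sequence from an arbitrary solution to \ML, is what the companion \cref{lemma:gossip-tgGeneral} is for, and is not needed for this observation.
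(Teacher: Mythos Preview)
Your proof is correct. The paper does not actually prove this statement at all---it is stated as an \emph{Observation} and left as self-evident---so your careful induction on the number of calls is more than what the paper provides, and exactly the argument one would write out if asked to make the observation rigorous. One minor remark: you state the invariant as an ``if and only if'' but your inductive step only spells out the direction from gossip knowledge to temporal path; this is harmless here since that is the only direction needed for the Observation, but if you want to keep the biconditional you might add a sentence noting that any temporal path to $v_{y'}$ using label $i+1$ must end with the unique edge $v_{x'}v_{y'}$ carrying that label, which by the inductive hypothesis forces $x'$ (and hence $y'$ after the call) to know~$x$'s secret.
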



Now note that the temporal graph $\mathcal{G}_\mathcal{C}$ produced by the above procedure has the special property that, for every time-label $t = 1,2,\ldots,m$, there exists exactly one edge labeled with $t$. 
In the next lemma we prove the reverse statement of Observation~\ref{obs:gossiptg1}.

\begin{lemma}
	\label{lemma:gossip-tgGeneral}
	Let $(G,\lambda)$ be an arbitrary temporally connected temporal graph with $|\lambda|=m$ time-labels in total. 
	Then there exists a sequence $c(1), c(2), \dots, c(m)$ of $m$ phone calls that results in all agents knowing all secrets.
\end{lemma}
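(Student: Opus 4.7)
The plan is to convert $(G,\lambda)$ into a non-concurrent sequence of calls by taking the multiset of all (edge, time-label) pairs and sorting them primarily by time-label, breaking ties arbitrarily. This produces an ordered list $c(1), c(2), \dots, c(m)$ of $m$ phone calls (one for each time-label, since $|\lambda|=m$). I would present this construction first, and then verify that it achieves the gossip goal.

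For the verification, fix two agents $u, v \in A$ and let $x_u, x_v$ be their secrets; I would argue that after the sequence finishes, $x_u$ is known to $v$. Since $(G,\lambda)$ is temporally connected, there is a temporal path $(e_1,t_1),(e_2,t_2),\ldots,(e_k,t_k)$ from $v_u$ to $v_v$ with $t_1 < t_2 < \cdots < t_k$. In our constructed sequence, the call corresponding to time-label $t_i$ on edge $e_i$ appears at some position $\sigma_i$, and the strict inequalities $t_1 < \cdots < t_k$ together with the tie-breaking order guarantee $\sigma_1 < \sigma_2 < \cdots < \sigma_k$ regardless of how ties among other labels are resolved. A straightforward induction on $i$ then shows that right after call $c(\sigma_i)$, the $(i{+}1)$-th vertex of the temporal path knows the secret $x_u$: the base case $i=1$ holds because the first call on the path is between $v_u$ and its successor, and for the inductive step, the $(i{+}1)$-th vertex already knows $x_u$ (by hypothesis), so the later call $c(\sigma_{i+1})$ between it and the $(i{+}2)$-th vertex transmits $x_u$ onward. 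Hence $v$ knows $x_u$ after $c(\sigma_k)$, and since $u, v$ were arbitrary, every agent knows every secret.

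The main potential obstacle is making sure the tie-breaking for concurrent edges in $\lambda$ does not invalidate the argument, but this is painless precisely because the definition of temporal path (\cref{temp-path-def}) requires \emph{strictly} increasing labels: no two consecutive edges of any temporal path can have been concurrent in $(G,\lambda)$, so the ordering of concurrent calls is irrelevant to the propagation of secrets along any given temporal path. A brief remark to this effect, together with the induction above, completes the proof and therefore establishes the reverse direction of \cref{obs:gossiptg1}.
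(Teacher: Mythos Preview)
Your proposal is correct and follows essentially the same approach as the paper: linearize all $(e,t)$ pairs by sorting on $t$ with arbitrary tie-breaking, and observe that the strict-increase condition on temporal paths guarantees each such path survives any tie-breaking. The paper phrases the linearization as an $\varepsilon$-perturbation followed by renormalization and is somewhat terser in the verification, but the argument is the same.
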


\begin{proof}
	Let $(G,\lambda)$ be an arbitrary temporally connected temporal graph. W.l.o.g.~we may assume that, for every $t=1,2,\ldots,\alpha(G,\lambda)$, there exists at least one edge $e$ such that $t\in \lambda(e)$. Indeed, if such an edge does not exist in $(G,\lambda)$, we can replace in $(G,\lambda)$ every label $t'>t$ by $t'-1$, thus obtaining another temporally connected graph with a smaller age. 
	
	Now we proceed as follows. Let $t\in \{1,2,\ldots,\alpha(G,\lambda)\}$ be an arbitrary time step within the lifetime of $(G,\lambda)$, and let $\{e_{i_k}\}_{k=1}^{t}$ be the edges of $G$ such that $t\in \lambda(e_{i_k})$. Let $\varepsilon = \frac{1}{2t}$. For every $k=1,\ldots,t$, we replace the label $t$ of the edge $e_{i_k}$ by the label $t+k\varepsilon$. Finally, we normalize the new time-labels of the edges of $G$ such that they become the distinct consecutive natural numbers from 1 to $m$ (since $|\lambda|=m$ by the assumption of the lemma). Denote the resulting temporal graph by $(G,\lambda')$. Note that every temporal path in $(G,\lambda)$ corresponds to a temporal path in $(G,\lambda')$ with the same sequence of edges, and vice versa.
	
	Finally we create the required sequence of phone calls as follows: for every $i=1,2,\ldots,m$, if $(G,\lambda')$ contains the edge $e$ with time-label $i$, we add a phone call $c(i)$ between the two endpoints of the edge $e$. 
	Since both $(G,\lambda)$ aqnd $(G,\lambda')$ are temporally connected, it follows that after the sequence $c(1), c(2), \dots, c(m)$ of calls results in every agent knowing every secret. This completes the proof.
\end{proof}

\medskip

Now denote with $f(n)$ the minimum number of calls needed to complete gossiping among a set $A$ of $n$ agents, where a specific set of pairs of vertices $B\subseteq A \times A$ are allowed to make a direct call between each other. Let $G_0=(A,B)$ be the (static) graph having the agents in $A$ as vertices and the pairs of $B$ among them as edges. 
Then it is known by~\cite{Bumby1981Problem} that, if $G_0$ contains a $C_4$ as a subgraph then $f(n)=2n-4$, while otherwise $f(n)=2n-3$. Therefore the next theorem follows by Observation~\ref{obs:gossiptg1} and~\cref{lemma:gossip-tgGeneral} and by the results of~\cite{Bumby1981Problem}.

\begin{theorem}
	\label{thm:optLabGossipC4}
	Let $G=(V,E)$ be a connected graph.
	Then the smallest $k\in \mathbb{N}$ for which $(G,k)$ is a \textsc{YES} instance of \ML\ is:
	\begin{equation*}
		k = 
		\begin{cases}
			2n - 4, & \text{if $G$ contains $C_4$ as a subgraph,} \\
			2n - 3, & \text{otherwise.}
		\end{cases}
	\end{equation*}
\end{theorem}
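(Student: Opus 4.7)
The plan is to combine the two directions of correspondence with the gossip problem already set up above, namely Observation~\ref{obs:gossiptg1} and Lemma~\ref{lemma:gossip-tgGeneral}, with the classical result of Bumby~\cite{Bumby1981Problem}. Denote by $k^{\ast}(G)$ the minimum $|\lambda|$ over all temporally connected labelings of $G=(V,E)$, and denote by $f(G)$ the minimum number of pairwise calls needed for gossiping on the agent set $V$ when the allowed calls are restricted to the edges of $G$.

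First I would establish the inequality $k^{\ast}(G)\leq f(G)$. Starting from an optimal gossip sequence of length $f(G)$, whose calls by definition correspond to edges of $G$, the construction $\mathcal{G}_{\mathcal{C}}$ described just before Observation~\ref{obs:gossiptg1} yields, by that observation, a temporally connected labeling of $G$ using exactly $f(G)$ labels. Hence $k^{\ast}(G)\leq f(G)$.

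Next I would establish the reverse inequality $f(G)\leq k^{\ast}(G)$. Starting from any temporally connected labeling $(G,\lambda)$ with $|\lambda|=k^{\ast}(G)$, Lemma~\ref{lemma:gossip-tgGeneral} produces a sequence of $k^{\ast}(G)$ phone calls completing gossip; by construction every such call is made between the two endpoints of an edge of $G$, so the sequence is a valid gossip schedule in the graph-restricted instance on $G$. This gives $f(G)\leq k^{\ast}(G)$, and together with the previous step we obtain $k^{\ast}(G)=f(G)$.

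Finally, I would invoke the result of~\cite{Bumby1981Problem}, which states that $f(G)=2n-4$ exactly when $G$ contains $C_{4}$ as a subgraph, and $f(G)=2n-3$ otherwise. Substituting this into $k^{\ast}(G)=f(G)$ yields the claimed value for the smallest $k$ such that $(G,k)$ is a \textsc{YES} instance of \ML. I do not expect a real obstacle: the subtlety that an \ML\ solution may place multiple labels in the same time slot (whereas gossip calls are sequential) has already been absorbed into Lemma~\ref{lemma:gossip-tgGeneral} via the perturbation-and-renormalization trick, and the combinatorial content of the formula itself is handed to us by Bumby's theorem.
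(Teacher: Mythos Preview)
Your proposal is correct and follows essentially the same approach as the paper: combine Observation~\ref{obs:gossiptg1} and Lemma~\ref{lemma:gossip-tgGeneral} to identify $k^{\ast}(G)$ with the graph-restricted gossip number $f(G)$, and then invoke Bumby's theorem~\cite{Bumby1981Problem} for the value of $f(G)$. If anything, you have spelled out the two inequalities $k^{\ast}(G)\leq f(G)$ and $f(G)\leq k^{\ast}(G)$ more explicitly than the paper, which simply states that the theorem follows from the observation, the lemma, and Bumby's result.
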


\subsection{A polynomial-time algorithm for directed acyclic graphs}\label{DAG-subsec}

As a second warm-up, we show that the minimization analogues of \ML\ and \MAL\ on directed acyclic graphs (DAGs) are solvable in polynomial time. 
More specifically, for the minimization analogue of \ML\ we provide an algorithm which, 
given a DAG $G = (V,A)$ with diameter $d_G$, computes a temporal labeling function $\lambda$ which assigns the smallest possible number of time-labels on the arcs of $G$ with the following property: 
for every two vertices $u,v\in V$, there exists a directed temporal path from $u$ to $v$ in $(G,\lambda)$ if and only if there exists a directed path from $u$ to $v$ in~$G$. 
Moreover, the age $\alpha(G,\lambda)$ of the resulting temporal graph is \emph{equal} to $d_G$. 
Therefore, this immediately implies a polynomial-time algorithm for the minimization analogue of \MAL\ on DAGs. 
For notation uniformity, we call these minimization problems \ML$_{directed}$ and \MAL$_{directed}$, respectively. 
First we define a \emph{canonical layering} of a DAG, which is useful for our algorithm.

\begin{definition}
	Let $G = (V,A)$ be a DAG with $n$ vertices, $m$ arcs, and diameter $d$. 
	A partition $L_0, L_1, L_2, \dots, L_d$ of $V$ into $d+1$ sets is a \emph{canonical layering} of $G$ if, for every $0\leq i\leq d$, the set 
	$L_i$ contains all the source vertices in the induced subgraph $G_i := G[\{L_i, L_{i+1}, \dots , L_{d}\}]$.
\end{definition}

An example of a canonical layering of a DAG $G$ is illustrated in \cref{fig:canonical-layering}.

\begin{figure}[!htb]
	\centering
	\includegraphics[width=0.5\linewidth]{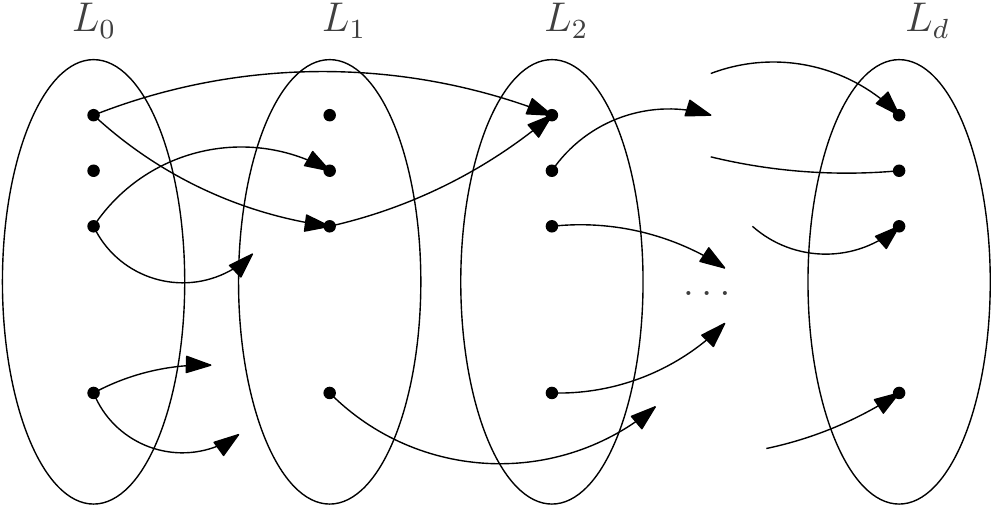}
	\caption{Example of a canonical layering.}
	\label{fig:canonical-layering}
\end{figure}

\begin{lemma}
	\label{canonical-alg-lem}
	Let $G=(V,E)$ be a DAG with $n$ vertices and $m$ arcs. We can produce the canonical layering of $G$ in linear $O(n+m)$ time.
\end{lemma}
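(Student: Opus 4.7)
The plan is to compute the canonical layering via a level-by-level Kahn-style peeling of sources. First, I would compute the in-degree $\deg^-(v)$ of every vertex $v$ in $O(n+m)$ time by a single scan over the arc list, and initialize $L_0$ as the set of all vertices with in-degree zero, which are exactly the sources of $G = G_0$.

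Next, I would iteratively build $L_{i+1}$ from $L_i$ as follows: for each vertex $v \in L_i$, scan its out-arcs $(v,u)$ and decrement a working copy of $\deg^-(u)$; whenever this working in-degree drops to zero, insert $u$ into $L_{i+1}$. The loop halts when the newly produced layer is empty. Because every vertex is emitted to exactly one layer, each vertex's out-adjacency list is scanned exactly once, and each arc triggers exactly one decrement, the total running time is $O(n+m)$.

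For correctness, I would prove by induction on $i$ that $L_i$ equals the set of sources of $G_i$. The base case $i=0$ is immediate from the definition of $G_0 = G$. For the inductive step, assuming $L_0, L_1, \ldots, L_i$ have been correctly produced, a vertex $u \notin L_0 \cup \cdots \cup L_i$ is a source of $G_{i+1}$ exactly when all of its in-neighbors lie in $L_0 \cup \cdots \cup L_i$. Under the peeling rule, the working in-degree of $u$ reaches zero precisely when the last of its in-neighbors is scanned; this last decrement occurs while scanning some $L_j$ with $j \le i$, and by the inductive hypothesis $u$ is then placed into $L_{j+1}$. A short case analysis confirms that $u \in L_{i+1}$ if and only if $u$ has at least one in-neighbor in $L_i$ and no in-neighbor in any later layer, which is exactly the condition for $u$ being a source of $G_{i+1}$ but not of $G_i$.

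Finally, one can observe as a sanity check that the number of non-empty layers is exactly $d+1$: a vertex $v$ lands in $L_i$ iff the longest directed path ending at $v$ in $G$ has length exactly $i$, so the deepest non-empty layer corresponds to the longest path length, which equals the diameter $d$ of $G$. I do not expect any real obstacle: the algorithm is a minor variant of Kahn's topological sort, and the only mildly delicate point is to verify that the in-degree decrement rule assigns each $u$ to the correct layer rather than to an earlier or later one. Everything else is a routine amortized counting argument bounding the work by $O(n+m)$.
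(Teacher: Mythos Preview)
Your proof is correct and takes essentially the same approach as the paper: both perform Kahn-style level-by-level source peeling, with you decrementing a working in-degree where the paper increments a counter $s_v$ toward $|N^-(v)|$ --- these are equivalent bookkeeping schemes, and your inductive correctness argument is in fact more explicit than the paper's. One small slip in your closing sanity check: the deepest layer index produced by the peeling equals the length of the longest directed path in $G$, which in a general DAG need not coincide with the diameter (the longest \emph{shortest} path; consider a directed path with an added shortcut arc from the first vertex to the last). This does not affect your main algorithm, its correctness, or the $O(n+m)$ running-time bound.
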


\begin{proof}
	First we initialize an auxiliary vertex subset $S=\emptyset$ and a counter $s_v=0$ for every vertex $v$.
	We start by computing the vertices of $L_0$ in $O(n+m)$ time by just visiting all vertices and arcs of $G$; $L_0$ contains all vertices $u$ such that $N^-(u)=\emptyset$. 
	Now, for every $i\geq 0$ we proceed as follows. First we set $S=\emptyset$. Then, for every arc $(u,v)$, where $u\in L_i$, 
	we add $v$ to $S$ and we increase the counter $s_v$ by 1. 
	Then we set $L_{i+1} = \{v\in S : s_v = |N^-(v)|\}$. 
	Before we continue to the next iteration $i+1$, we reset the set $S$ to be $\emptyset$, 
	and we iterate until we reach all vertices of $G$, \ie until we add every vertex $u$ to one of the sets $L_0, L_1, \ldots, L_d$. 
	
	It is easy to check that the above procedure is correct, as at every iteration $i+1$ (where $i\geq 0$) we include to $L_i$ all vertices 
	$v$ which have zero in-degree in the graph induced by the vertices in $V\setminus \bigcup_{k=1}^{i}L_k$. 
	Furthermore, the running time is clearly $O(n+m)$ as we visit each vertex and arc a constant number of times.
\end{proof}

\medskip

The following observations will be useful when considering the canonical layering.
\begin{observation}
	\label{obs1}
	Each layer $L_i$ is an independent set in $G$.
\end{observation}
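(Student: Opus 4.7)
The plan is to prove the statement by a short contradiction argument that leverages the defining property of the canonical layering directly. Suppose, towards a contradiction, that for some index $i$ the layer $L_i$ is not independent, \ie there exist two vertices $u,v \in L_i$ such that $G$ contains an arc between them. Since $G$ is a DAG, exactly one direction of this arc exists; without loss of generality assume it is the arc $(u,v)$.

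Next, I would invoke the definition of $G_i = G[L_i \cup L_{i+1} \cup \dots \cup L_d]$. Both endpoints $u$ and $v$ lie in $L_i \subseteq V(G_i)$, so the arc $(u,v)$ survives in the induced subgraph $G_i$. In particular, $v$ has the in-neighbor $u$ inside $G_i$, so $v$ is not a source of $G_i$. This contradicts the fact that $L_i$, by definition of the canonical layering, consists of \emph{all} source vertices of $G_i$. Hence no such arc can exist, and $L_i$ must be an independent set.

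I do not expect any real obstacle here: the only subtlety to check is that both endpoints of the hypothetical arc genuinely lie in $V(G_i)$, which is immediate from $u,v \in L_i$. The argument uses no properties beyond the definition of canonical layering and the fact that a source vertex has empty in-neighborhood in its ambient subgraph, so the proof should be a single short paragraph.
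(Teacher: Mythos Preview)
Your argument is correct and is exactly the natural one-line justification: if $u,v\in L_i$ with an arc $(u,v)$, then $v$ has an in-neighbor in $G_i$ and hence is not a source of $G_i$, contradicting $v\in L_i$. The paper in fact states this as an observation without proof, so there is nothing further to compare; your write-up simply supplies the omitted routine verification.
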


\begin{observation}
	\label{obs2}
	For every $i=0,1,\ldots,d-1$ and every $u \in L_i$, there exists an arc $(u,v)\in A$ such that $v \in L_{i+1}$.
\end{observation}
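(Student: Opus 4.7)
The plan is to construct the desired arc $(u,v)$ along a longest path of $G$ through $u$. Fix $u\in L_i$ with $i\leq d-1$, and let $P=v_0v_1\ldots v_d$ be a path of length $d$ in $G$ with $v_i=u$; such a path is available in the regime where this observation is applied, and in any case can be guaranteed by a preprocessing step that removes from $G$ any vertex not lying on some longest path (such vertices can be identified in linear time by forward and reverse topological sweeps that compute, for every vertex $w$, the longest source-to-$w$ and $w$-to-sink path lengths).

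The key step is to show that $v_{i+1}\in L_{i+1}$, since then $(u,v_{i+1})=(v_i,v_{i+1})\in A$ is the required arc. For this I would first argue that $v_0$ must be a source of $G$: otherwise, prepending an in-neighbor of $v_0$ to $P$ would produce a path of length $d+1$, contradicting that $d$ is the diameter. Iterating the same reasoning, for every $k\in\{0,1,\ldots,d\}$ the vertex $v_k$ satisfies $\ell(v_k)=k$, where $\ell(w)$ denotes the length of a longest source-to-$w$ path in $G$: the prefix $v_0v_1\ldots v_k$ already witnesses $\ell(v_k)\geq k$, while any strictly longer source-to-$v_k$ path, concatenated with the suffix $v_kv_{k+1}\ldots v_d$, would exceed length $d$ and again contradict the diameter. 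Since the iterative procedure in the proof of \cref{canonical-alg-lem} places each vertex $w$ precisely into the layer $L_{\ell(w)}$ (a vertex $w$ enters some $L_j$ at the first iteration at which the counter $s_w$ reaches $|N^-(w)|$, which coincides with $j=\ell(w)$), we conclude $v_{i+1}\in L_{i+1}$, as desired.

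The main obstacle I expect lies in the first step, namely securing the existence of a length-$d$ path through $u$. This property is automatic in the intended setting but does not follow solely from $u\in L_i$ with $i<d$ in a completely arbitrary DAG; a vertex $u$ with $\ell(u)<d$ that happens to be a sink would have no out-neighbor at all. I would therefore either invoke the broader algorithmic context (where the canonical layering is used to label a DAG in which every vertex participates in temporal paths of full length) or prepend the brief preprocessing outlined above, both of which ensure that the longest-path argument can be carried out for every $u\in L_0\cup\ldots\cup L_{d-1}$.
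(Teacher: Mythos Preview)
Your core technical argument is sound: if $u=v_i$ lies on a length-$d$ path $v_0v_1\cdots v_d$ with $v_0$ a source, then your computation $\ell(v_k)=k$ for all $k$ is correct, and in particular $v_{i+1}\in L_{i+1}$ gives the required arc. The gap is in the opening step. You \emph{assume} that a longest path through $u$ at position $i$ exists, and you yourself pinpoint the obstruction: a vertex $u\in L_i$ with $i<d$ that happens to be a sink has no out-neighbour whatsoever, so the observation as stated is simply false for such $u$. (Concretely: take vertices $a,b,c,e$ with arcs $(a,b),(b,c),(a,e)$; then $d=2$, $e\in L_1$, and $e$ has no out-arc.)

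Neither of your proposed remedies closes this gap. Appealing to ``the regime where this observation is applied'' is not a hypothesis one can cite; in fact the observation is not invoked in the proof of \cref{thmDAGpoly}, so there is no surrounding regime to borrow from. And the preprocessing step---deleting vertices that lie on no longest path---changes the instance: you would then be proving the observation for a different graph, not for the given $G$. What you have actually established is the conditional statement ``if every vertex of $G$ lies on some longest path, then Observation~\ref{obs2} holds,'' which is correct but strictly weaker than what is claimed. The paper offers no proof of the observation either, so this is best read as an imprecision in the statement rather than a deficiency in your reasoning; the honest conclusion is that the observation requires the additional hypothesis you identified.
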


\begin{observation}
	\label{obs3}
	For every arc $(u,v)\in A$, where $u\in L_i$ and $v\in L_{i+1}$ for some $i\in \{0,1,\ldots,d-1\}$, 
	there is no directed path of length two or more from $u$ to $v$ in $G$.
\end{observation}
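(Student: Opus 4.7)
The plan is to prove a stronger monotonicity statement first, namely that every arc strictly increases the layer index, and then derive Observation~3 as an immediate consequence.

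First I would establish the following auxiliary claim: for every arc $(x,y)\in A$, if $x\in L_p$ and $y\in L_q$, then $q\geq p+1$. To see this, suppose instead that $q\leq p$. The case $q=p$ is ruled out by Observation~1, since $L_p$ is independent. For the case $q<p$, recall that $L_q$ was defined as the set of sources of $G_q = G[L_q\cup L_{q+1}\cup\cdots\cup L_d]$. Since $p>q$, we have $x\in L_p\subseteq V(G_q)$, so the arc $(x,y)$ lies in $G_q$; but then $y$ has positive in-degree in $G_q$, contradicting $y\in L_q$.

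Next I would argue by a straightforward induction on path length that if $u\in L_i$ and $u=w_0,w_1,\ldots,w_\ell$ is any directed path in $G$, then $w_k\in L_{j_k}$ with $j_k\geq i+k$ for every $k=0,1,\ldots,\ell$. The base case $k=0$ is trivial, and the inductive step follows by applying the auxiliary claim to the arc $(w_{k-1},w_k)$.

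Now I can finish Observation~3. Assume for contradiction that there exists a directed path $u=w_0,w_1,\ldots,w_\ell=v$ in $G$ with $\ell\geq 2$. By the induction above, $v=w_\ell$ lies in $L_{j_\ell}$ with $j_\ell\geq i+\ell\geq i+2$. But by assumption $v\in L_{i+1}$, and since the layers $L_0,L_1,\ldots,L_d$ form a partition of $V$, this forces $i+1\geq i+2$, a contradiction. Hence no such path of length at least two exists.

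The only step requiring real attention is the auxiliary claim about arcs respecting the layering; once that is in hand, the rest is bookkeeping. I do not expect any genuine obstacle, but a small care point is to explicitly invoke Observation~1 to exclude arcs within the same layer, which is the one case not immediately handled by the ``source in $G_q$'' argument.
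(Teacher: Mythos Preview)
Your argument is correct. The paper itself leaves Observation~3 unproved (it is stated as a bare observation following directly from the definition of the canonical layering), so your proposal is a valid and complete elaboration of what the authors took as evident.

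One minor remark: your separate appeal to Observation~1 for the case $q=p$ is actually unnecessary. Your ``source in $G_q$'' argument already covers $q\leq p$ uniformly, since for $q=p$ we still have $x\in L_p\subseteq V(G_q)$ and hence $(x,y)$ lies in $G_q$, giving $y$ positive in-degree there. So the case split can be collapsed, though of course it does no harm.
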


We use the canonical layering to prove the following result.

\begin{theorem}\label{thmDAGpoly}
	Let $G=(V,E)$ be a DAG with $n$ vertices and $m$ arcs. Then \ML$_{directed}(G)$ and \MAL$_{directed}(G)$ can be both computed in $O(n(n+m))$ time.
\end{theorem}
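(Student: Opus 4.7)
My plan is to reduce $\ML_{directed}$ to computing the transitive reduction of the DAG and labeling its surviving arcs by the layer index of their tail: because layers strictly increase along every directed path, every static path automatically yields a temporal path.

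\emph{Setup.} I would first compute the canonical layering $L_0,\ldots,L_d$ in $O(n+m)$ time via \cref{canonical-alg-lem} and write $\ell(v)$ for the layer of $v$; by \cref{obs1} every arc $(u,v)\in A$ satisfies $\ell(u)<\ell(v)$. Next, for every vertex $u$ I would compute by DFS its reachability set $R_u=\{w:u\text{ reaches }w\text{ in }G\}$, stored as a bit-vector; this is $n$ traversals at total cost $O(n(n+m))$. From the $R_u$'s I extract
\[
A':=\bigl\{(u,v)\in A:\ v\notin R_w\ \text{for every }w\in N^+(u)\setminus\{v\}\bigr\},
\]
i.e., those arcs $(u,v)$ for which the only $u$-to-$v$ path in $G$ is the arc itself. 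Each arc-check uses $O(|N^+(u)|)$ bit-vector lookups, giving $\sum_u|N^+(u)|^2=O(nm)$ for assembling $A'$. Altogether the algorithm runs in $O(n(n+m))$ time.

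\emph{Labeling, correctness, optimality.} I would then set $\lambda(u,v):=\{\ell(u)+1\}$ for every $(u,v)\in A'$ and leave all other arcs unlabeled, so $|\lambda|=|A'|$ and $\alpha(G,\lambda)\leq d$. I would verify that for every $(u,v)$ with $v\in R_u$ there is a path $u=v_0\to v_1\to\cdots\to v_k=v$ using only arcs of $A'$; its label sequence $\ell(v_0)+1,\ldots,\ell(v_{k-1})+1$ is strictly increasing, hence a temporal path. \cref{obs2,obs3} provide an arc from $L_{d-1}$ to $L_d$ that is forced into $A'$ and receives label $d$, pinning $\alpha(G,\lambda)=d$. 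For the matching lower bound on any valid labeling $\lambda^{\star}$, note that by the defining property of $A'$, the only $u$-to-$v$ path in $G$ for $(u,v)\in A'$ is the direct arc; so any temporal $u$-to-$v$ path is forced to use it, giving $|\lambda^{\star}(u,v)|\geq 1$ and, by summation, $|\lambda^{\star}|\geq |A'|=|\lambda|$. Since $\alpha(G,\lambda)=d$, the very same $\lambda$ is optimal for $\MAL_{directed}$ whenever $a\geq d$, while $a<d$ is trivially infeasible, so both problems are solved in the claimed time.

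\emph{Main obstacle.} The delicate step is proving that $A'$ still supports a path for every reachable pair, since $A'$ may be strictly smaller than $A$ and could a priori destroy temporal reachability. I would handle this by induction on $\ell(v)-\ell(u)$: given $v\in R_u$ with $u\neq v$, the set $S:=\{w\in N^+(u):v\in R_w\}$ is non-empty; pick $w^{*}\in S$ that is maximal under the reachability partial order on $V$. If $(u,w^{*})\notin A'$ then some $w_0\in N^+(u)\setminus\{w^{*}\}$ satisfies $w^{*}\in R_{w_0}$, whence $v\in R_{w^{*}}\subseteq R_{w_0}$ and $w_0\in S$, contradicting the maximality of $w^{*}$. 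Thus $(u,w^{*})\in A'$, and the inductive hypothesis applied to $(w^{*},v)$ (with strictly smaller layer gap) yields an $A'$-path from $w^{*}$ to $v$, which prepended with the arc $(u,w^{*})$ completes the construction.
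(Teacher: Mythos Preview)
Your proposal is correct and follows essentially the same approach as the paper: compute the transitive reduction of the DAG and label each surviving arc by a layer index so that labels strictly increase along every directed path, arguing optimality from the fact that each such arc is the unique $u$--$v$ path. The only cosmetic differences are that the paper labels $(u,v)$ by the head's layer $\ell(v)$ rather than your $\ell(u)+1$, and asserts reachability-preservation of the reduction without your explicit induction (where, under the usual convention $a\preceq b\Leftrightarrow b\in R_a$, you want $w^{*}$ \emph{minimal} in $S$, not maximal; the argument itself is correct once the word is fixed).
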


\begin{proof}
	For the purposes of simplicity of the proof, we denote by $\kappa(G)$ the optimum value of \ML$_{directed}$ with the DAG $G$ as its input.
	First we calculate the canonical layering $L_0, L_1, \dots, L_d$ of $G$ in $O(n+m)$ time by \cref{canonical-alg-lem}. 
	For simplicity of the presentation, denote by $G_v$ the induced subgraph of $G$ that contains $v$ and all vertices that are reachable by $v$ in $G$ with a directed path. Let $d_v$ be the diameter of $G_v$; note that $d_v$ is the length of the longest shortest directed path in $G$ that starts at $v$. 
	For every vertex $u\in V$, we define the set $L_0^u = \{u\}$ and we initialize the set $S_u=N^+(u)$. Then, similarly to the proof of \cref{canonical-alg-lem}, we iterate over all vertices $v\in S_u=N^+(u)$ and over all vertices $w\in N^+(v)$. Whenever we encounter a vertex $w\in N^+(v) \cap N^+(u)$, we remove $v$ from $S_u$. At the end of this procedure, the set $S_u$ contains exactly those vertices of $v\in N^+(u)$, for which there is no directed path of length two or more from $u$ to $v$ in $G$. The above procedure can be completed in $O(n(n+m))$ time, as for every vertex $u$, we iterate at most over all arcs in $G$ a constant number of times.
	
	Now we define the labeling $\lambda$ of $G$ as follows: Every arc $(u,v) \in A$, where $u \in L_{i}$, $v \in L_{j}$, and $v\in S_u$, gets the label $\lambda((u,v))=j$. 
	Note here that $1\leq \lambda((u,v))\leq d$ for every arc of $G$, and thus the age $\alpha(G,\lambda)$ of the resulting temporal graph is equal to the diameter $d$ of $G$. 
	We will prove that $|\lambda|=\kappa(G)$. To prove that $|\lambda|\leq \kappa(G)$, it suffices to show that 
	every label of $\lambda$ must participate in every temporal labeling of $G$ which preserves temporal reachability. 
	In fact, this is true as the only arcs of $G$, which have a label in $\lambda$, are those arcs $(u,v)$ such that there is no other directed path from $u$ to $v$. That is, in order to preserve temporal reachability, we need to assign at least one label to all these arcs. 
	
	Conversely, to prove that $|\lambda|\geq \kappa(G)$, it suffices to show that $\lambda$ preserves all temporal reachabilities. 
	For this, observe first that, every directed path $P=(a,\ldots,b)$ in $G$ can be transformed to a directed path $P'=(a,\ldots, b)$ such that, for every arc $(u,v)$ in $P'$, there is no other directed path from $u$ to $v$ in $G$ apart from the arc $(u,v)$ (\ie there is no ``shortcut'' from $u$ to $v$ in $G$). Therefore, since every arc in $P'$ is assigned a label in $\lambda$ and these labels are increasing along $P'$, it follows that $\lambda$ preserves all temporal reachabilities, and thus $|\lambda|\geq \kappa(G)$. Summarizing, $|\lambda|= \kappa(G)$ and the labeling $\lambda$ can be computed in $O(n(n+m))$ time.
	
	Finally, since $\alpha(G,\lambda)=d$, the obtained optimum labeling for \ML\ is also an optimum labeling for \MAL\ (provided that the upper bound $a$ in the input of \MAL\ is at least $d$).
\end{proof}

\section{\MAL\ is NP-complete}\label{MAL-NP-complete-sec}

In this section we prove that it is \NP-hard to determine the number of labels in an optimal labeling of a static, undirected graph $G$, where the age, \ie the maximum label used, is not larger than the diameter of the input graph.

To prove this we provide a reduction from the NP-hard problem \MAXSAT\ (or \MAXSATshort\ for short).
This is a special case of the classical Boolean satisfiability problem,
where the input formula $\phi$ consists of the conjunction of \emph{monotone} \textsc{XOR} clauses of the form $(x_i \oplus x_j)$, \ie variables $x_i, x_j$ are non-negated.
If each variable appears in exactly $r$ clauses, then $\phi$ is called a \emph{monotone} \textsc{Max XOR($r$)} formula.
A clause $(x_i \oplus x_j)$ is \emph{\textsc{XOR}-satisfied} (or simply \emph{satisfied}) if and only if $x_i \neq x_j$.
In \textsc{Monotone Max XOR($r$)} we are trying to find a truth assignment $\tau$ of $\phi$ which satisfies the maximum number of clauses.
As it can be easily checked, \MAXSATshort\ encodes the problem \textsc{Max-Cut} on cubic graphs, which is known to be NP-hard \cite{Alimonti1997Hardness}. Therefore we conclude the following.

\begin{theorem}[\hspace{-0.0001cm}\cite{Alimonti1997Hardness}]\label{maxsat-hard-thm}
	\MAXSATshort\ is NP-hard.
\end{theorem}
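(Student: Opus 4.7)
The plan is to reduce \textsc{Max-Cut} on cubic graphs to \MAXSATshort{} via a straightforward structural correspondence, exactly as hinted by the paragraph preceding the theorem. Since \textsc{Max-Cut} on cubic graphs is NP-hard by \cite{Alimonti1997Hardness}, this reduction will transfer NP-hardness to \MAXSATshort.

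Concretely, given an instance of \textsc{Max-Cut} consisting of a cubic graph $G=(V,E)$ and an integer $K$, I would construct a monotone \textsc{XOR} formula $\phi_G$ as follows. For every vertex $v\in V$, introduce a Boolean variable $x_v$; for every edge $uv\in E$, introduce the monotone \textsc{XOR} clause $(x_u\oplus x_v)$. Since $G$ is $3$-regular, each variable $x_v$ appears in exactly three clauses, so $\phi_G$ is indeed a valid instance of \MAXSATshort. This construction is clearly polynomial time.

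Next I would establish the bijection between truth assignments and cuts. Any truth assignment $\tau\colon\{x_v\}_{v\in V}\to\{0,1\}$ induces the partition $(S_\tau,\overline{S_\tau})$ of $V$, where $S_\tau=\{v : \tau(x_v)=1\}$. A clause $(x_u\oplus x_v)$ is satisfied under $\tau$ if and only if $\tau(x_u)\neq \tau(x_v)$, i.e., if and only if the edge $uv$ crosses the cut $(S_\tau,\overline{S_\tau})$. Thus the number of clauses of $\phi_G$ satisfied by $\tau$ equals the number of edges in the cut induced by $\tau$. Conversely, any partition of $V$ gives rise to a truth assignment with matching satisfaction count. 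Hence $G$ admits a cut of size at least $K$ if and only if $\phi_G$ admits a truth assignment satisfying at least $K$ clauses, completing the reduction.

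There is essentially no technical obstacle here: the reduction is a literal translation, and both the polynomial-time bound and the correctness follow immediately from the definitions. The only mild point worth flagging is ensuring the ``exactly $r=3$'' requirement in the definition of \MAXSATshort, which is precisely what cubicity of $G$ delivers; if one instead started from \textsc{Max-Cut} on graphs of maximum degree $3$, one would need dummy-variable padding, which is unnecessary in our setting.
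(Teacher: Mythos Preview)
Your proposal is correct and matches the paper's approach exactly: the paper simply observes that \MAXSATshort\ encodes \textsc{Max-Cut} on cubic graphs (which is NP-hard by \cite{Alimonti1997Hardness}), and your write-up spells out precisely this encoding and its correctness. There is nothing to add.
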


Now we explain our reduction from \MAXSATshort\ to the problem \textsc{Minimum Aged Labeling (MAL)}, where the input static graph $G$ is undirected and the desired age of the output temporal graph is the diameter $d$ of $G$ .
Let $\phi$ be a monotone \textsc{Max XOR($3$)} formula with $n$ variables $x_1, x_2, \dots , x_n$ and $m$ clauses $C_1, C_2, \dots , C_m$. 
Note that $m = \frac{3}{2} n$, since each variable appears in exactly $3$ clauses. 
From $\phi$ we construct a static undirected graph $G_\phi$ with diameter $d=10$, 
and prove that there exists a truth assignment $\tau$	which satisfies at least $k$ clauses in $\phi$, if and only if
there exists a labeling $\lambda_\phi$ of $G_\phi$, with $|\lambda_\phi|\leq \frac{13}{2}n^2+ \frac{99}{2}n-8k$ labels and with age $\alpha(G,\lambda)\leq 10$. 

\paragraph*{High-level construction}
For each variable $x_i$, $1\leq i \leq n$, we construct a variable gadget $X_i$ that consists of a ``starting'' vertex $s_i$ and three ``ending'' vertices $t_i^\ell$ (for $\ell \in \{1,2,3\}$); these ending vertices correspond to the appearances of $x_i$ in three clauses of $\phi$.
In an optimum labeling $\lambda(\phi)$, in each variable gadget there are exactly two labelings that temporally connect starting and ending vertices, which correspond to the \textsc{True} or \textsc{False} truth assignment of the variable in the input formula $\phi$.
For every clause $(x_i \oplus x_j)$ we identifying corresponding ending vertices of $X_i$ and $X_j$ (as well as some other auxiliary vertices and edges).
Whenever $(x_i \oplus x_j)$ is satisfied by a truth assignment of $\phi$, the labels of the common edges of $X_i$ and $X_j$ in an optimum labeling coincide (thus using few labels); otherwise we need additional labels for the common edges of $X_i$ and $X_j$. 

\paragraph*{\boldmath Detailed construction of $G_\phi$}
For each variable $x_i$ from $\phi$ we create a variable gadget $X_i$, that consists of a \emph{base} $BX_i$ on $11$ vertices, 
$BX_i = \{s_i, a_i, b_i, c_i, d_i, e_i, \overline{a_i}, \overline{b_i}, \overline{c_i}, \overline{d_i}, \overline{e_i}\}$, 
and three \emph{forks} $F^1X_i, F^2X_i,F^3X_i$, 
each on $9$ vertices,
$F^\ell X_i = \{ t^\ell_i, f^\ell_i, g^\ell_i, h^\ell_i, m^\ell_i, \overline{f_i}^\ell, \overline{g_i}^\ell, \overline{h_i}^\ell, \overline{m_i}^\ell \}$, where $\ell \in \{1,2,3\}$.
Vertices in the base $BX_i$ are connected in the following way: 
there are two paths of length $5$: $s_i a_i b_i c_i d_i e_i$ and $s_i \overline{a_i} \overline{b_i} \overline{c_i} \overline{d_i} \overline{e_i}$, 
and $5$ extra edges of form $y_i \overline{y_i}$, where $y \in \{a, b,c,d,e\}$.
Vertices in each fork $F^\ell X_i$ (where $\ell \in \{1,2,3\}$) are connected in the following way:
there are two paths of length $4$: $t^\ell_i m^\ell_i h^\ell_i g^\ell_i f^\ell_i$ and
$t^\ell_i  \overline{m_i}^\ell \overline{h_i}^\ell \overline{g_i}^\ell \overline{f_i}^\ell$, 
and $4$ extra edges of form $y_i \overline{y_i}^\ell$, where $y \in \{m,h,g,f\}$.
The base $BX_i$ of the variable gadget $X_i$ is connected to each of the three forks $F^\ell X_i$ via two edges $e_i f_i^\ell$ and $\overline{e_i} \overline{f_i}^\ell$, where $\ell \in \{1, 2, 3\}$.
For an illustration see~\cref{fig:npdstatic-variablegadget}.

For an easier analysis we fix the following notation.
The vertex $s_i \in B X_i$ is called a \emph{start vertex} of $X_i$, vertices $t_i^\ell$ ($\ell \in \{1,2,3\}$) are called \emph{ending vertices} of $X_i$,
a path connecting $s_i, t_i^\ell$ that passes through vertices $a_i b_i c_i d_i e_i f_i^\ell g_i^\ell h_i^\ell m_i^\ell$ (resp.~$\overline{a_i} \overline{b_i} \dots \overline{m_i}^\ell$) is called the \emph{left} (resp.~\emph{right}) $s_i, t_i^\ell$-path.
The left (resp.~right) $s_i, t_i^\ell$-path is a disjoint union of the left (resp.~right) path on vertices of the base $BX_i$ of $X_i$, 
an edge of form $e_i f_i^\ell$ (resp.~$\overline{e_i} \overline{f_i}^\ell$) called the left (resp.~right) \emph{bridge edge} and
the left (resp.~right) path on vertices of the $\ell$-th fork $F^\ell X_i$ of $X_i$.
The edges $y_i \overline{y_i}$, where $y \in \{a, b, c, d, e, f^\ell, g^\ell, h^\ell, m^\ell\}$, $\ell \in \{1,2,3\}$, are called \emph{connecting edges}.

\begin{figure}[htb]
	\centering
	\includegraphics[width=0.6\linewidth]{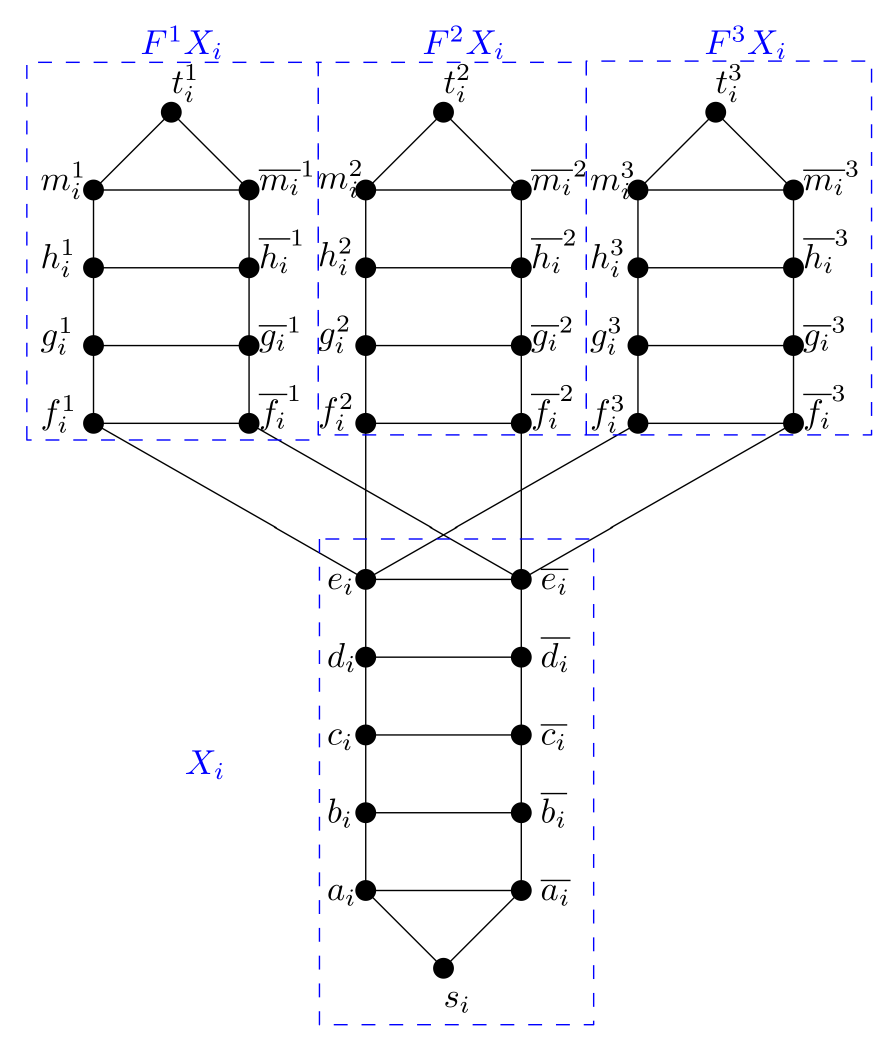}
	\caption{An example of a variable gadget $X_i$ in $G_\phi$, corresponding to the variable $x_i $ from $\phi$.}
	\label{fig:npdstatic-variablegadget}
\end{figure}

\paragraph*{Connecting variable gadgets}
There are two ways in which we connect two variable gadgets, depending whether they appear in the same clause in $\phi$ or not.
\begin{enumerate}
	\item Two variables $x_i, x_j$ do not appear in any clause together.
	In this case
	we add the following edges between the variable gadgets $X_i$ and $X_j$:
	\begin{itemize}
		\item from $e_i$ (resp.~$\overline{e_i}$) to $f_j^{\ell'}$ and $\overline{f_j}^{\ell'}$, where $\ell' \in \{1,2,3\}$,
		\item from $e_j$ (resp.~$\overline{e_j}$) to $f_i^\ell$ and $\overline{f_i}^\ell$, where $\ell \in \{1,2,3\}$,
		\item from $d_i$ (resp.~$\overline{d_i}$) to $d_j$ and $\overline{d_j}$.
	\end{itemize}
	We call these edges the \emph{variable edges}.
	For an illustration see~\cref{fig:npdstatic-connectingvariablegadget}.
	
	\begin{figure}[t]
		\centering
		\includegraphics[width=0.85\linewidth]{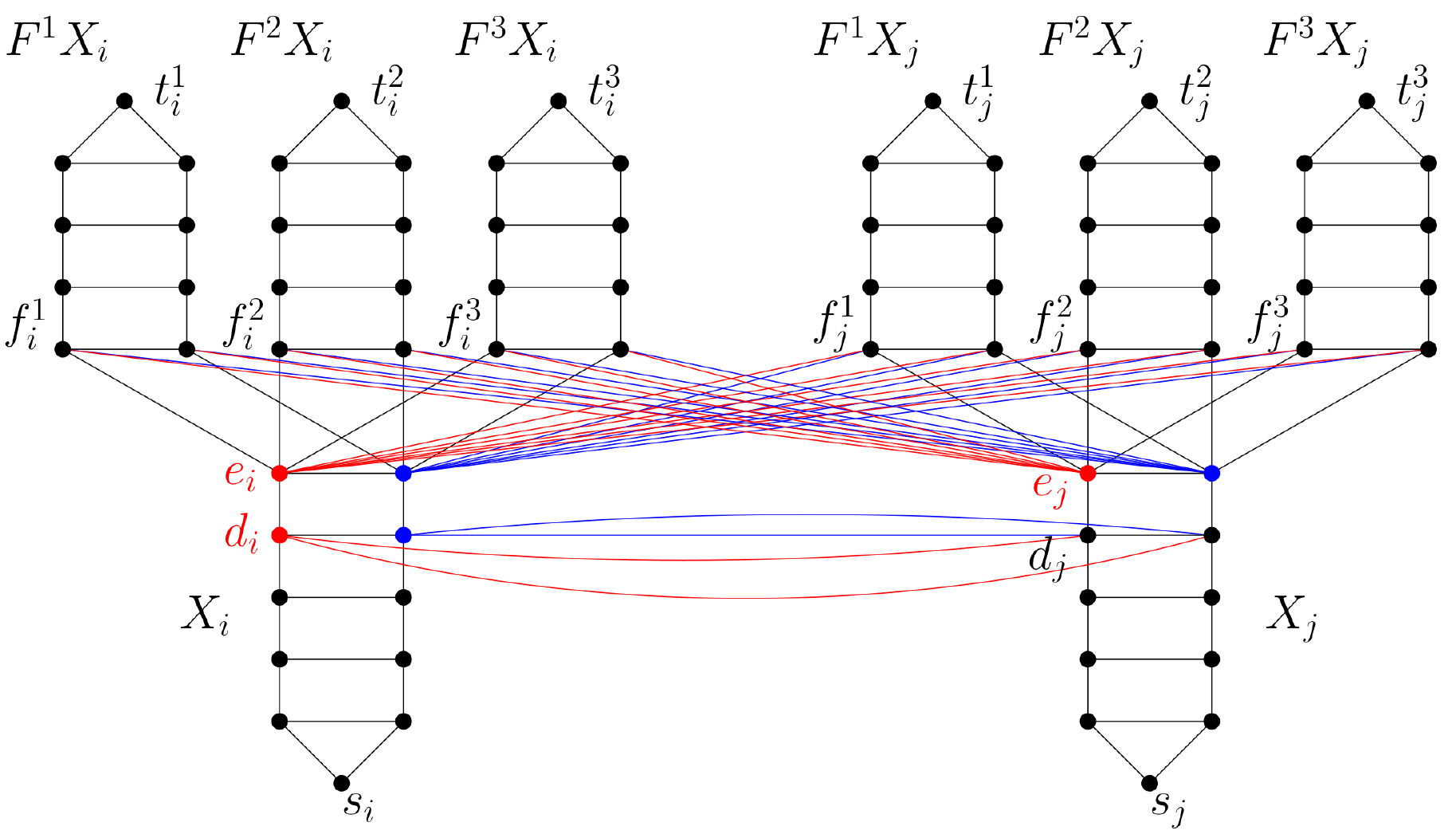}
		\caption{An example of two non-intersecting variable gadgets and variable edges among them.}
		\label{fig:npdstatic-connectingvariablegadget}
	\end{figure}
	
	\item Let $C = (x_i \oplus x_j)$ be a clause of $\phi$, that contains the $r$-th appearance of the variable $x_i$ and $r'$-th appearance of the variable $x_j$.
	In this case we identify the $r$-th fork $F^r X_i$ of $X_i$ with the $r'$-th fork $F^{r'} X_j$ of $X_j$ in the following way:
	\begin{itemize}
		\item $t_i^r = t_j^{r'}$,
		\item $\{f_i^r, g_i^r, h_i^r, m_i^r\} = \{ \overline{f_j}^{r'}, \overline{g_j}^{r'}, \overline{h_j}^{r'}, \overline{m_j}^{r'} \}$ respectively, and
		\item $\{\overline{f_i}^{r}, \overline{g_i}^{r}, \overline{h_i}^{r}, \overline{m_i}^{r} \} =
		\{f_j^{r'}, g_j^{r'}, h_j^{r'}, m_j^{r'}\}$ respectively.
	\end{itemize}
	Besides that we add the following edges between the variable gadgets $X_i$ and $X_j$:
	\begin{itemize}
		\item from $e_i$ (resp.~$\overline{e_i}$) to $f_j^{\ell'}$ and $\overline{f_j}^{\ell'}$, where $\ell' \in \{1,2,3\} \setminus \{r'\}$,
		\item from $e_j$ (resp.~$\overline{e_j}$) to $f_i^\ell$ and $\overline{f_i}^\ell$, where $\ell \in \{1,2,3\} \setminus \{r\}$,
		\item from $d_i$ (resp.~$\overline{d_i}$) to $d_j$ and $\overline{d_j}$.
	\end{itemize}
	For an illustration see~\cref{fig:npdstatic-clausegadget}.

	\begin{figure}[!tbh]
		\centering
		\includegraphics[width=0.65\linewidth]{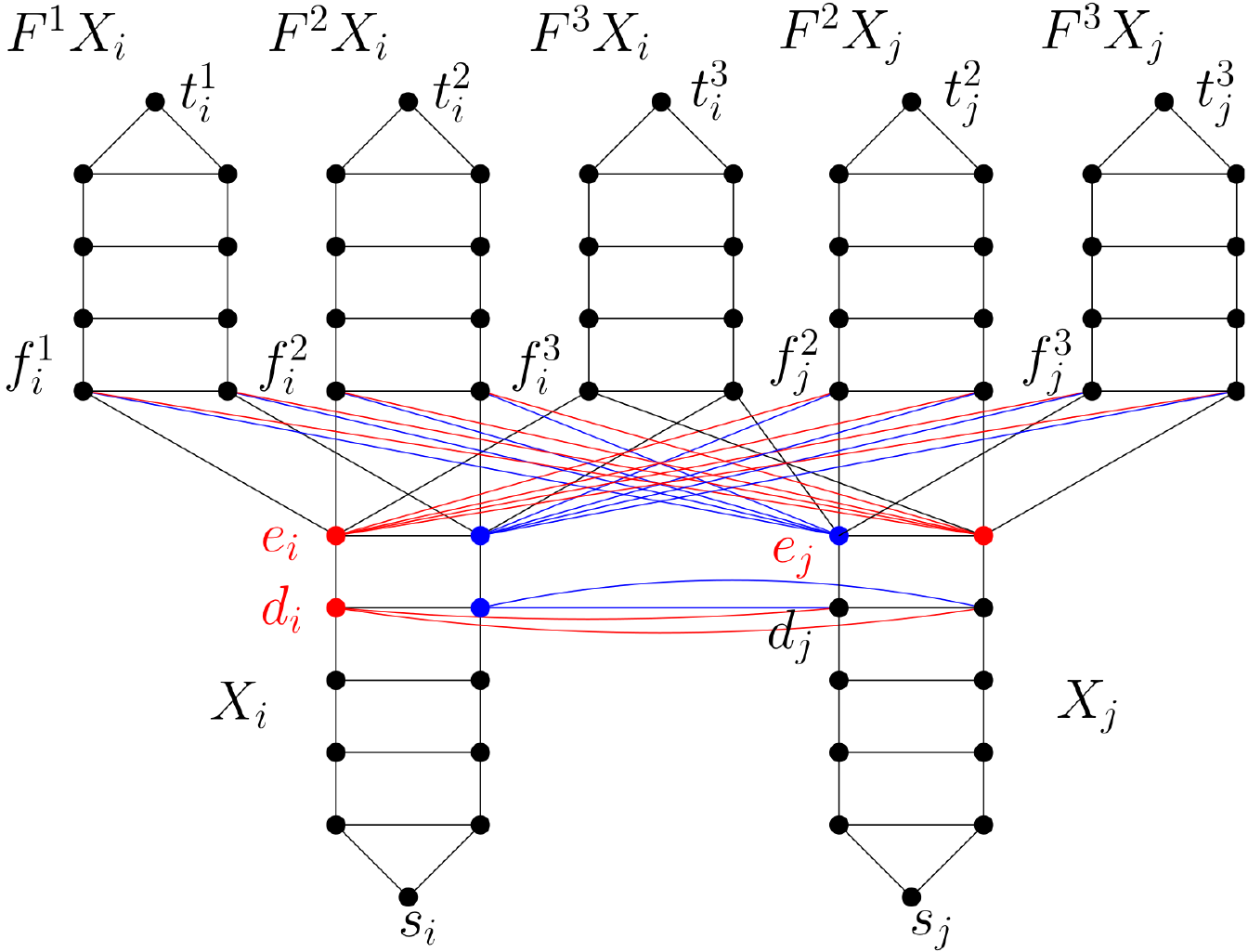}
		\caption{An example of two intersecting variable gadgets $X_i, X_j$ corresponding to variables $x_i,x_j$, that appear together in some clause in $\phi$, where it is the third appearance of $x_i$ and the first appearance of $x_j$.}
		\label{fig:npdstatic-clausegadget}
	\end{figure}
	
\end{enumerate}
This finishes the construction of $G_\phi$. 
Before continuing with the reduction, we prove the following structural property of $G_\phi$.

\begin{lemma}\label{diameter-G-phi-lem}
	The diameter $d_\phi$ of $G_\phi$ is $10$.
\end{lemma}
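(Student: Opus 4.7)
The plan is to prove $\mathrm{diam}(G_\phi) = 10$ by matching lower and upper bounds. The lower bound will be witnessed by the pair $(s_i, t_i^\ell)$ for any variable-gadget $X_i$ and any fork index $\ell$, and the upper bound will follow from a case check that no pair of vertices is farther than $10$ apart.

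For $d(s_i, t_i^\ell) = 10$, the direct walk $s_i \to a_i \to b_i \to c_i \to d_i \to e_i \to f_i^\ell \to g_i^\ell \to h_i^\ell \to m_i^\ell \to t_i^\ell$ immediately yields $d(s_i, t_i^\ell) \le 10$. For the matching lower bound I would use a bottleneck argument: inspecting the construction, the only vertices of the fork $F^\ell X_i$ with any neighbor outside the fork are $f_i^\ell$ and $\overline{f_i}^\ell$; all remaining fork vertices (the $g$, $h$, $m$ vertices, their barred copies, and $t_i^\ell$ itself) have exclusively fork-internal neighbors, and this stays true when the fork is identified with a fork of another variable through a shared clause. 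Hence every $s_i$-to-$t_i^\ell$ walk must pass through one of $f_i^\ell, \overline{f_i}^\ell$, and since inside the fork $d(f_i^\ell, t_i^\ell) = d(\overline{f_i}^\ell, t_i^\ell) = 4$, it remains to prove $d(s_i, \{f_i^\ell, \overline{f_i}^\ell\}) \ge 6$.

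To establish this second bound I would observe that every neighbor of $f_i^\ell$ or $\overline{f_i}^\ell$ outside the fork is an $e$-type vertex in some variable gadget, i.e.\ lies in $\bigcup_j \{e_j, \overline{e_j}\}$ (bridge edges contribute $e_i, \overline{e_i}$, and variable edges contribute $e_j, \overline{e_j}$ for $j \ne i$). The shortest walk from $s_i$ to $\{e_i, \overline{e_i}\}$ uses the full length-$5$ base path of $X_i$. Any walk from $s_i$ reaching $\{e_j, \overline{e_j}\}$ with $j \ne i$ must first use a variable edge out of $X_i$, which originates at $\{d_i, \overline{d_i}, e_i, \overline{e_i}\}$; the nearest of these is at distance $4$ from $s_i$, a single variable edge then moves into $X_j$ and one more step is needed to reach an $e$-vertex there, for at least $6$ steps in total. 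Combining, $d(s_i, t_i^\ell) \ge 6 + 4 = 10$.

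For the upper bound I would do a short case analysis on $(u,v)$. If $u,v$ lie in the same gadget $X_i$, I would route through $\{e_i, \overline{e_i}\}$: every vertex of $X_i$ is within distance $5$ of this set, so $d(u,v) \le 5+5 = 10$. If $u \in X_i$ and $v \in X_j$ with $i \ne j$, I would split on which halves (base vs.\ fork) contain $u$ and $v$ and use the appropriate variable edge (of type $d$-$d$ or $e$-$f$); the extremal sub-cases are pairs like $(s_i, t_j^{\ell'})$, handled by the path $s_i \to \dots \to e_i \to f_j^{\ell'} \to \dots \to t_j^{\ell'}$ of length $5+1+4=10$, together with their symmetric counterparts, and all remaining pairs turn out strictly closer. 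The hard part will be the bound $d(s_i, \{f_i^\ell, \overline{f_i}^\ell\}) \ge 6$: because variable edges link many distant parts of $G_\phi$ cheaply, one must rule out every potential shortcut through a neighboring variable gadget, which is exactly what the two-step reduction above accomplishes.
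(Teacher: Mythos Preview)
Your approach is essentially the same as the paper's---a lower bound witnessed by $d(s_i,t_i^\ell)=10$ together with a case analysis for the upper bound---and the plan is sound. Two remarks are worth making.

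First, a small imprecision: when you argue $d(s_i,\{e_j,\overline{e_j}\})\ge 6$ you say the first variable edge out of $X_i$ ``originates at $\{d_i,\overline{d_i},e_i,\overline{e_i}\}$''. That is not literally true: the edges $e_j f_i^{\ell'}$ and $\overline{e_j} f_i^{\ell'}$ also leave $X_i$ from $f_i^{\ell'},\overline{f_i}^{\ell'}$. The fix is easy---to reach any $f_i^{\ell'}$ from $s_i$ \emph{inside} $X_i$ one must already traverse the length-$5$ base and a bridge edge, so such an exit is at distance $\ge 6$ anyway---but the argument as written is circular at that point. The cleanest non-circular version is a direct BFS: the ball of radius $4$ around $s_i$ is exactly $\{s_i,a_i,\overline{a_i},b_i,\overline{b_i},c_i,\overline{c_i},d_i,\overline{d_i}\}$ (none of these vertices has any edge outside $X_i$), and the ball of radius $5$ adds only $e_i,\overline{e_i}$ and the $d_j,\overline{d_j}$ for $j\ne i$; hence every $f$-vertex lies at distance $\ge 6$.

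Second, the paper frames the lower bound differently: it argues that every $s_i$--$t_i^\ell$ path must cross the nine ``level sets'' $\{a_i,\overline{a_i}\},\ldots,\{m_i^\ell,\overline{m_i}^\ell\}$ in order, giving length $\ge 10$ directly. This is stated for the gadget in isolation and the paper does not explicitly rule out shortcuts through other gadgets after variable edges are added; your bottleneck argument through $\{f_i^\ell,\overline{f_i}^\ell\}$ is actually more careful about the full graph $G_\phi$. Either presentation works once the radius-$5$ ball around $s_i$ is pinned down as above.
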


\begin{proof}
	We prove this in two steps. 
	First we show that the diameter of any variable gadget is $10$ and then show that the diameter does not increase, 
	when the variable edges are introduced, \ie vertices in any two variable gadgets are at most $10$ apart.
	
	Let us start with fixing a variable gadget $X_i$. 
	A path from the starting vertex $s_i$ to any ending vertex  $t_i^\ell$ ($\ell \in \{1,2,3\}$) has to go through at least one of the vertices from $\{a_i, \overline{a_i}\}$, then through at least one of the vertices from $\{b_i, \overline{b_i}\}$,
	then through $\{c_i, \overline{c_i}\}$, $\{d_i, \overline{d_i}\}$,
	$\{e_i, \overline{e_i}\}$, $\{f^\ell_i, \overline{f_i}^\ell\}$, $\{g^\ell_i, \overline{g_i}^\ell\}$, $\{h^\ell_i, \overline{h_i}^\ell\}$ and finally through $\{m^\ell_i, \overline{m_i}^ \ell\}$, before reaching the ending vertex.
	The shortest $s_i,t_i^ \ell$ path will go through exactly one vertex from each of the above sets. 
	Therefore it is of length $10$.
	Because of the construction of $X_i$, there are exactly two $s_i, t_i^\ell$ paths of length $10$, 
	which are edge and vertex disjoint, as they share only the starting and ending vertices.
	One of this paths uses the vertices $a_i, b_i, c_i, d_i, e_i, f_i^\ell, g_i^\ell, h_i^\ell, m_i^\ell$ (\ie the left path) 
	and the other uses vertices $\overline{a_i}, \overline{b_i}, \dots , \overline{m_i}^\ell$ (\ie the right path).
	A path between any two ending vertices $t_i^\ell, t_i^{\ell'}$ (where $\ell, \ell' \in \{1,2,3\}$ and $\ell \neq \ell'$),
	has to go through the following sets of vertices,
	$\{m_i^\ell, \overline{m_i}^\ell\}$,$\{m_i^{\ell'}, \overline{m_i}^{\ell'}\}$, 
	$\{h_i^\ell, \overline{h_i}^\ell\}$,$\{h_i^{\ell'}, \overline{h_i}^{\ell'}\}$,
	$\{g_i^\ell, \overline{g_i}^\ell\}$,$\{g_i^{\ell'}, \overline{g_i}^{\ell'}\}$,
	$\{f_i^\ell, \overline{f_i}^\ell\}$,$\{f_i^{\ell'}, \overline{f_i}^{\ell'}\}$,
	$\{e_i, \overline{e_i}\}$.
	Similarly as before, the shortest path uses exactly one vertex from each set and is of size $10$.
	Even more, 
	there are exactly two $t_i^\ell, t_i^{\ell'}$ paths of length $10$.
	They are edge and vertex disjoint, as they share only the starting and ending vertices.
	One of this paths uses the vertices without the line in the label (\ie the left path) 
	and the other uses vertices with the line in the label (\ie the right path).
	It is not hard to see that the distance between any other vertex  in $X_i$ and starting or ending vertices is at most $9$, 
	as that vertex lies on one of the $s_i, t_i^\ell$ or $t_i^\ell,t_i^{\ell'}$-paths, but it is not an endpoint of it.
	By the similar reasoning there exists a path between any two vertices in $X_i$ (different than $s_i, t_i^\ell$), of distance at most $9$.
	Therefore the diameter of $X_i$ is $10$.
	
	Now let us fix two variable gadgets $X_i, X_j$, that share no fork (\ie $x_i$ and $x_j$ appear in no clause of $\phi$).
	The shortest path from the starting vertex $s_i$ of $X_i$
	to the starting vertex $s_j$ of $X_j$
	has to reach vertex $d_i$ (resp.~$\overline{d_i}$), which is done in $4$ steps,
	from where it connects to either $d_j$ or $\overline{d_j}$, using a variable edge, and continues toward $s_j$, with $4$ edges.
	Therefore, $d(s_i,s_j) = 9$.
	The shortest path connecting vertex $s_i$ with $t_j^{\ell'}$,
	uses one of the vertices $e_i$ or $\overline{e_i}$, that are on the distance $5$ from $s_i$,
	then using one variable edge reaches $f_j^{\ell'}$ or $\overline{f_j}^{\ell'}$, which is on the distance $4$ from the ending vertex $t_j^{\ell'}$.
	Therefore, $d(s_i,t_j^{\ell'}) = 10$, for all $\ell' \in \{1,2,3\}$.
	Lastly, the shortest path between an ending vertex $t_i^\ell$ of $X_i$ and an ending vertex $t_j^{\ell'}$
	uses $4$ edges in the fork $F^\ell X_i$ to reach the vertex $f_i^\ell$ or $\overline{f_i^\ell}$,
	from where it uses a variable edge that connects it to the vertex $e_j$ or $\overline{e_j}$, 
	that is on the distance $5$ from the $t_j^{\ell'}$.
	Therefore, $d(t_i^\ell,t_j^{\ell'}) = 10$, for all $\ell, \ell' \in \{1,2,3\}$.
	It is not hard to see that if two variable gadgets $X_i, X_j$ share a fork the shortest path among any two vertices does not increase.
	
	We proved that the distance among any two vertices in $G_\phi$ is at most $10$ and thus its diameter is $10$.
\end{proof}

\begin{theorem}\label{thm:APXhardnessFirstDirection}
	If OPT$_{\MAXSATshort}(\phi) \geq k$ then OPT$_{\MAL}(G_\phi,d_\phi) \leq \frac{13}{2}n^2+ \frac{99}{2}n-8k$, 
	where $n$ is the number of variables in the formula $\phi$.
\end{theorem}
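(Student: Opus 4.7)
The plan is to take any truth assignment $\tau$ of $\phi$ that satisfies at least $k$ clauses and construct an explicit labeling $\lambda_\phi$ of $G_\phi$ with age exactly $d_\phi=10$ and $|\lambda_\phi|\le \frac{13}{2}n^2+\frac{99}{2}n-8k$. The construction will be assembled from three ingredients: labels \emph{inside} each variable gadget (dictated by $\tau(x_i)$), labels on the \emph{variable edges} between pairs of gadgets, and the label \emph{savings} generated at the shared forks of every clause satisfied by $\tau$.

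First, I would fix, for every variable $x_i$, two canonical labelings of $X_i$ of equal size: a \emph{true-labeling} $\lambda_i^T$ and a \emph{false-labeling} $\lambda_i^F$ (mirror images of one another that swap the roles of ``left'' and ``right''). In $\lambda_i^T$, the left $s_i,t_i^\ell$-path is oriented \emph{outwards} from $s_i$ and receives the labels $1,2,\dots,10$ in order (the five base edges $s_ia_i,\dots,d_ie_i$ labeled $1,\dots,5$; each left bridge $e_if_i^\ell$ labeled $6$; each left fork's four edges labeled $7,8,9,10$), which yields foremost $s_i\to t_i^\ell$ paths with arrival time exactly $10$. The right path is labeled so as to simultaneously support $t_i^\ell\to s_i$ and the cross-fork paths $t_i^\ell\to t_i^{\ell'}$ that must ``pivot'' at $\overline{e_i}$: its right-fork edges receive $1,2,3,4$ starting at $t_i^\ell$, each right bridge $\overline{e_i}\,\overline{f_i}^\ell$ receives \emph{both} labels $5$ and $6$ (so that after arriving at $\overline{e_i}$ with label $5$ we can leave it via another fork at label $6$), and the five right-base edges receive $7,8,9,10$ from $\overline{e_i}$ towards $s_i$. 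A small constant number of additional labels on the connecting edges $y_i\overline{y_i}$ then completes temporal connectivity among the internal vertices of $X_i$ within age $10$. A direct case check shows that $|\lambda_i^T|=|\lambda_i^F|=c_X$ for a constant $c_X$ that does not depend on $i$.

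Next I would handle inter-gadget temporal connectivity. For every non-intersecting pair $X_i,X_j$, I would assign labels to a suitable subset of the $28$ variable edges between them so that every ordered pair $(u,v)\in X_i\times X_j$ admits a strictly-increasing temporal path of length at most $10$ (using $\mathrm{dist}_{G_\phi}(u,v)\le 10$ from \cref{diameter-G-phi-lem}); the labels on such variable edges are essentially forced by the distances from $u$ (resp.~$v$) to the endpoints of the variable edge in their respective bases/forks. The point where $\tau$ enters crucially is at clauses. For each intersecting pair $(x_i,x_j)$ the shared fork identifies the ``left'' side of $X_i$ with the ``right'' side of $X_j$, and vice versa. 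Consequently, when $(x_i\oplus x_j)$ is satisfied by $\tau$ (\ie $\tau(x_i)\ne\tau(x_j)$), the labels imposed by $\lambda_i^T$ (resp.~$\lambda_i^F$) on the shared fork \emph{coincide edge-by-edge} with those imposed by $\lambda_j^F$ (resp.~$\lambda_j^T$), so the two labelings merge without duplication and exactly $8$ labels are saved (the $8$ ``path'' edges of the shared fork that would otherwise have carried two coincident labels). When the clause is unsatisfied, both labelings coexist on the shared fork and no saving occurs.

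The final step is the bookkeeping. The total label count is at most
\[
c_X \cdot n \;+\; c_V \cdot \binom{n}{2} \;+\; O(n) \;-\; 8\,\bigl|\{C\in\phi:\tau\text{ satisfies }C\}\bigr|,
\]
where $c_V$ is the constant number of labels used on variable edges per non-intersecting pair. Once the canonical labelings are fully spelled out, the constants $c_X$ and $c_V$ are determined by direct counting and give exactly $\frac{13}{2}n^2+\frac{99}{2}n-8k$. I expect the main obstacle to be the rigorous verification of cross-gadget temporal connectivity through \emph{shared} forks: one must show that for every ordered pair of vertices in different gadgets the combined within-gadget and variable-edge labels admit a strictly-increasing temporal path of length at most $10$, and also that the $8$ labels saved per satisfied clause are genuinely redundant rather than silently needed elsewhere. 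Aside from this, the remaining label bookkeeping reduces to a mechanical, albeit lengthy, finite case analysis.
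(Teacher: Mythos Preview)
Your overall strategy---build an explicit labeling from a satisfying assignment $\tau$, count the labels, and exhibit an $8$-label saving per satisfied clause---is exactly the paper's approach. However, the concrete labeling you propose for a single variable gadget is different from the paper's and, as written, does not work.

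In the paper, if $\tau(x_i)=\textsc{True}$, \emph{only the left side} of $X_i$ carries path labels, but each left-path edge receives \emph{two} labels (the ``outward'' label for the $s_i\!\to\! t_i^\ell$ path and the ``inward'' label for the $t_i^\ell\!\to\! s_i$ path); all connecting edges $y_i\overline{y_i}$ get the labels $1$ and $10$, and the right-path edges get \emph{no} labels at all. This yields exactly $74$ labels per gadget, and crucially it makes the cross-fork paths $t_i^\ell\!\to\! t_i^{\ell'}$ trivial: go down the left fork of $F^\ell X_i$ with $1,2,3,4$, left bridge $5$, arrive at $e_i$, left bridge $6$ into $F^{\ell'}X_i$, then up its left fork with $7,8,9,10$. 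Your scheme instead puts the outgoing direction on the left side and the incoming direction on the right side. As you describe it, the cross-fork path goes down the right fork of $F^\ell X_i$ with $1,2,3,4$, right bridge $5$, pivot at $\overline{e_i}$, right bridge $6$ into $F^{\ell'}X_i$---and then you are at $\overline{f_i}^{\ell'}$, but the only labels on the right fork edges of $F^{\ell'}X_i$ are $1,2,3,4$, so there is no strictly increasing continuation to $t_i^{\ell'}$. (Relatedly, you assign only four labels ``$7,8,9,10$'' to the five right-base edges.) This is a genuine gap, not a bookkeeping slip: to repair it you would need additional labels on either the right-fork edges or the fork connecting edges, which changes your per-gadget constant $c_X$ and your shared-fork savings.

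A second gap is that you never compute $c_X$ or $c_V$; you simply assert that they produce $\frac{13}{2}n^2+\frac{99}{2}n-8k$. In the paper the count is $74n+13\binom{n}{2}-4m-16k-8(m-k)$ with $m=\frac{3}{2}n$, and the $-8k$ arises because a satisfied clause saves $16$ labels on the shared fork while an unsatisfied one still saves $8$ (the connecting-edge labels coincide in both cases). Your claim that an unsatisfied clause yields ``no saving'' is incompatible with the paper's count and would have to be offset elsewhere; since you do not carry out the arithmetic, the proof is incomplete even modulo the connectivity issue above.
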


\begin{proof}
	Let $\tau$ be an optimum truth assignment of $\phi$, \ie a truth assignment that satisfies at least $k$ clauses of $\phi$. 
	We will prove that there exists a temporal labeling $\lambda_\phi$ of $G_\phi$ which uses $|\lambda_\phi| \leq \frac{13}{2}n^2+ \frac{99}{2}n-8k$ labels, such that $(G,\lambda)$ is temporally connected and $\alpha(G,\lambda)=d_\phi=10$. 
	Recall that, since $\phi$ is an instance of \MAXSATshort\ with $n$ variables, it has $m=\frac{3}{2}n$ clauses.
	We build the labeling $\lambda_\phi$ using the following rules. For an illustration see \cref{fig:NP-labeled}.
	\begin{enumerate}
		\item
		If a variable $x_i$ from $\phi$ is set to be \textsc{True} by the truth assignment $\tau$, 
		we label the edges in $X_i$ in the following way:
		\begin{itemize}
			\item all three left $(s_i, t_i^\ell)$-paths, for all $\ell \in \{1,2,3\}$,
			get the labels $1, 2, 3, \dots , 10$, one on each edge,
			\item similarly, all left $(t_i^\ell , s_i)$-paths,
			get the labels $1, 2, 3, \dots , 10$, one on each edge,
			\item all connecting edges (\ie edges of form $y_i \overline{y_i}$, where $y \in \{a, b, c, d, e , f^\ell, g^\ell, h^\ell, m^\ell\}$) get the labels $1$ and $10$.
		\end{itemize}
		If a variable $x_i$ from $\phi$ is set to be \textsc{False} by the truth assignment $\tau$, 
		we label the edges in $X_i$ in the following way:
		\begin{itemize}
			\item all three right $(s_i, t_i^\ell)$-paths, for all $\ell \in \{1,2,3\}$,
			get the labels $1, 2, 3, \dots , 10$, one on each edge,
			\item similarly, all right $(t_i^\ell , s_i)$-paths,
			get the labels $1, 2, 3, \dots , 10$, one on each edge,
			\item all connecting edges 
			get the labels $1$ and $10$.
		\end{itemize}
		Labeling $\lambda_\phi$ uses $10$ labels on the left (resp.~right) path of the base $BX_i$, 
		$10$ labels on the left (resp.~right) path of each fork $F^\ell X_i$, where $\ell \in \{1,2,3\}$ and 
		$10 + 3 \cdot 8$ labels on the connecting edges.
		All in total $\lambda_\phi$ uses $74$ labels on the variable gadget $X_i$.
		
		We still need to prove that there exists a temporal path among any two vertices in $X_i$.
		There is a (unique) temporal path from the starting $s_i$ vertex to all three ending vertices $t_i^\ell$, where $\ell \in \{1,2,3\}$, 
		using left (in case of $x_i$ being \textsc{True}) or right (in case of $x_i$ being \textsc{False}) paths of the base
		$BX_i$ and forks $F^\ell X_i$.
		Similarly it holds for all temporal $(t_i^\ell,s_i)$-paths.
		The temporal path connecting two ending vertices $t_i^{\ell_1}, t_i^{\ell_2}$, 
		uses first the left (in case of $x_i$ being \textsc{True}) or right (in case of $x_i$ being \textsc{False}) path
		of the fork $F^{\ell_1}X_i$ to reach $e_i$ (in case of $x_i$ being \textsc{True}) or $\overline{e_i}$ (in case of $x_i$ being \textsc{False}), using the labels $1$ to $5$,
		and then continues on the left (in case of $x_i$ being \textsc{True}) or right (in case of $x_i$ being \textsc{False})
		path of the $F^{\ell_2}X_i$ from $e_i$ or $\overline{e_i}$ to $t_i^{\ell_2}$, using labels $6$ to $10$.
		Any vertex not on the left (in case of $x_i$ being \textsc{True}) or right (in case of $x_i$ being \textsc{False}) path, 
		can reach the starting vertex or any of the ending vertices, using a connecting edge 
		at time $1$.
		Similarly it hold for the paths in the opposite direction, where the connecting edges
		have the label $10$.
		A temporal path among two vertices not on the left (in case of $x_i$ being \textsc{True}) or right (in case of $x_i$ being \textsc{False}) path uses first 
		a connecting edge
		at time $1$, then a portion of the left (in case of $x_i$ being \textsc{True}) or right (in case of $x_i$ being \textsc{False}) path and again the appropriate connecting edge
		at time $10$.
		This proves that 
		$\lambda_\phi$ on $X_i$
		admits a temporal path among any two vertices in $X_i$.
		
		\item
		If two variable gadgets $X_i$ and $X_j$ do not share a fork, 
		\ie variables $x_i$ and $x_j$ are not in the same clause in $\phi$,
		and are both set to \textsc{True} by $\tau$, then we label the following variable gadgets: 
		\begin{itemize}
			\item the edge $d_i d_j$, connecting the left path of $BX_i$ with the left path of $BX_j$, gets the label $5$,
			\item three edges of the form $e_i f_j ^{\ell'}$ ($\ell ' \in \{1,2,3\}$), that connect the left path of $BX_i$ to left paths of $F^{\ell'}X_j$,
			with the labels $4$ and $6$,
			\item three edges of the form $e_j f_i ^{\ell}$ ($\ell  \in \{1,2,3\}$), that connect the left path of $BX_j$ to left paths of $F^{\ell }X_i$,
			with the labels $4$ and $6$.
		\end{itemize}
		The labeling $\lambda_\phi$ uses $74$ labels for each variable gadget and $13$ labels on $7$ variable edges that connect both variable gadgets.
		Note, the three other combinations ($x_i,x_j$ are both \textsc{False}, one of $x_i, x_j$ is \textsc{True} and the other \textsc{False}) 
		give rise to the labeling $\lambda_\phi$ that uses the same number of labels on both variable gadgets and variable edges,
		where the labeled variable edges are chosen appropriately.
		
		Since labeling variable edges does not change the labeling on each variable gadget, we know that there is still a temporal path among any two vertices from the same variable gadget.
		We need to prove now that there is a temporal path among any two vertices from $X_i$ and $X_j$.
		The edge $d_i d_j$, with the label $5$,
		connects all the vertices from the $BX_i \setminus \{e_i, \overline{e_i}\}$ 
		to the vertices from the 
		$BX_j \setminus \{e_j, \overline{e_j}\}$ 
		and vice versa.
		To go from the starting vertex $s_i$ of $X_i$ to the ending vertex $t_j^\ell$ of $X_j$ we use the following route.
		From $s_i$ to $e_i$ we use the left labeled path on $X_i$ with labels from $1$ to $5$, 
		then the edge $e_i f_j^{\ell'}$ at time $6$ to reach the corresponding fork $F^{ell'} X_j$ of $X_j$ 
		and from $f_j^{\ell'}$ to the ending vertex $t_j^{\ell'}$
		we use the left labeled path of $X_j$ with labels $7$ to $10$.
		This temporal path connects all vertices in the base $BX_i$ to all vertices in the forks $FX_j^{\ell'}$, where $\ell' \in \{1,2,3\}$.
		Similar we obtain temporal paths from vertices in the base $BX_j$ to vertices in the forks $FX_i^\ell$, where $\ell \in \{1,2,3\}$.
		To go from any vertex in the fork $F^\ell X_i$ to any vertex of the $X_j$ we use the following route.
		First, we reach the vertex $f_i^{\ell}$, by the time $4$,
		using the left labeled path of $X_i$.
		Then we use the edge
		$f_i^\ell e_j$ at time $5$.
		Now, by the construction of $\lambda_\phi$ of $X_j$, 
		each vertex in $X_j$ can be reached from $e_j$ from time $5$ to $10$.
		Therefore all vertices from $F^\ell X_i$ can reach any vertex in  $X_j$. 
		This is true for all $\ell \in \{1,2,3\}$.
		Similarly it holds for temporal paths from any vertex in the fork $F^{\ell'} X_j$ ($\ell' \in \{1,2,3\}$) to vertices of the $X_i$.
		The only thing left to show is that the vertices $\{e_i, \overline{e_i}$ can reach all other vertices in $BX_j$.
		This is true as there is a temporal path using the edge $e_i f_j^{\ell'}$ at time $5$
		and then, from $f_j^{\ell'}$ to any vertex in the base $BX_j$, the left labeled path of $BX_j$, that is labeled by $\lambda_\phi$. 
		This is true for all $\ell' \in \{1,2,3\}$.
		Similarly it holds for the temporal paths from $\{e_j, \overline{e_j}\}$ to the vertices in $BX_i$.
		Therefore $\lambda_\phi$ admits a temporal path among any two vertices of variable gadgets, that do not share the fork.
		\item
		If two variable gadgets $X_i$ and $X_j$ share a fork, 
		\ie variables $x_i$ and $x_j$ are in the same clause,
		are both set to \textsc{True}
		and
		$F^rX_i = F^{r'}X_j$,
		then we label the following variable edges:
		\begin{itemize}
			\item the edge $d_i d_j$, connecting the left path of $BX_i$ and $BX_j$, gets the label $5$,
			\item two edges of the form $e_i f_j ^{\ell'}$ ($\ell ' \in \{1,2,3\} \setminus \{r'\}$), that connect the left path of $BX_i$ to left paths of $F^{\ell'}X_j$,
			with the labels $4$ and $6$,
			\item two edges of the form $e_j f_i ^{\ell}$ ($\ell  \in \{1,2,3\} \setminus \{r\}$), that connect the left path of $BX_j$ to left paths of $F^{\ell }X_i$,
			with the labels $4$ and $6$.
		\end{itemize}
		The labeling $\lambda_\phi$ uses $9$ labels on $5$ variable edges that connect both variable gadgets.
		Note, the three other combinations ($x_i, x_j$ are both \textsc{False}, one of $x_i, x_j$ is \textsc{True} and the other \textsc{False}) 
		give rise to the labeling $\lambda_\phi$ that uses the same number of labels on variable edges,
		where the labeled edges are chosen accordingly to the truth values of $x_i$ and $x_j$.
		The only difference is in the labeling of the shared fork $F^rX_i = F^{r'}X_j$.
		There are two possibilities, one when the truth value of $x_i$ and $x_j$ is the same and one when it is different, \ie $x_i = x_j$ or $x_i \neq x_j$.
		\begin{enumerate}[label=\alph*)]
			\item Let us start with the case when $x_i \neq x_j$. 
			Without loss of generality (w.l.o.g.) we can assume that $x_i$ is \textsc{True} and $x_j$ \textsc{False}.
			In the labeling $\lambda_\phi$ we label all left paths in the variable gadget $X_i$ and all right paths in $X_j$.
			By the construction of the graph $G_\phi$ (and the rules of how to identify vertices of the two forks), the left labeling of $F^rX_i$ coincides with the right labeling of $F^{r'}X_j$.
			Therefore $\lambda_\phi$ uses $2 \cdot 74 - 16 = 132$ labels on both variable gadgets.		
			\item Let us now observe the case when $x_i = x_j$.
			W.l.o.g.~we can assume that both variables are \textsc{True}.
			In the labeling $\lambda_\phi$ we label all left paths of both variable gadgets.
			By the construction of the graph $G_\phi$ (and the rules of how to identify vertices of the two forks), the fork $F^rX_i = F^{r'}X_j$
			gets labeled from both sides, \ie all edges in the fork get $2$ labels.
			Therefore $\lambda_\phi$ uses $2 \cdot 74 - 8 = 140$ labels on both variable gadgets.
		\end{enumerate}
		Identifying two forks $F^rX_i = F^{r'}X_j$
		and labeling them using the union of both labelings on each fork,
		preserves temporal paths among all the vertices from $X_i$ and $X_j$.
		This is true as the labeling in each variable is not changed by the labeling in the other variable.
		Among forks that are not in the intersection there are still the variable edges left, 
		which assure that vertices from different variable gadgets can reach them or can be reached by them.
		Therefore the labeling $\lambda_\phi$ admits a temporal path among any two vertices from the variable gadgets $X_i, X_j$, that have a fork in the intersection.
	\end{enumerate}
	
	\begin{figure}[ht!]
		\begin{subfigure}{\textwidth}
			\centering
			\includegraphics[width=\linewidth]{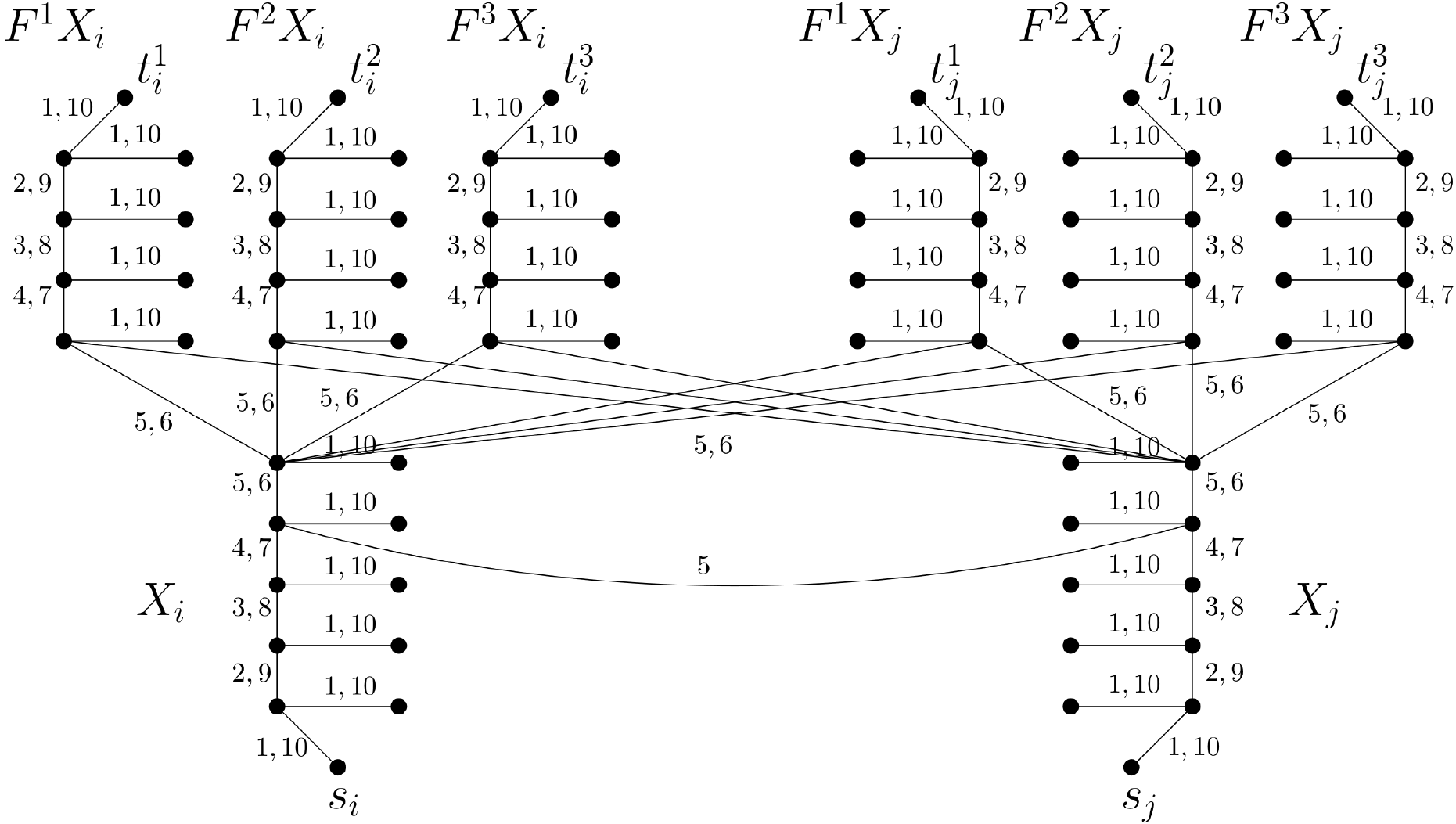}
			\caption{$x_i$ and $x_j$ do not appear together in any clause.\vspace{0,1cm}}
			\label{fig:NP-labeledNonIntersecting}
		\end{subfigure}%
		
		\begin{subfigure}{\textwidth}
			\centering
			\includegraphics[width=0.8\linewidth]{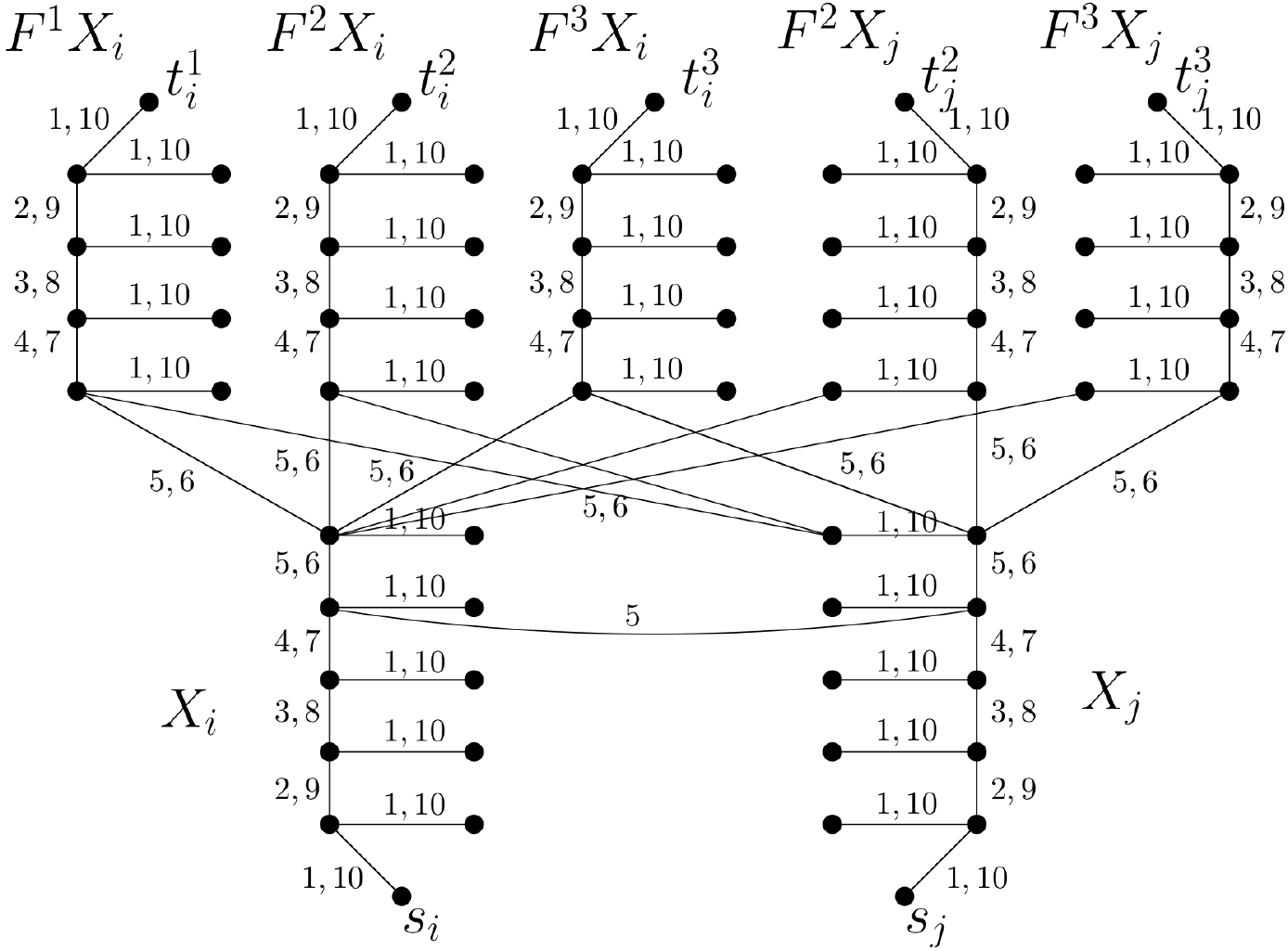}
			\caption{$x_i$ and $x_j$ appear together in a clause, where $x_i$ appears with its third and $x_j$ with its first appearance.}
			\label{fig:NP-labeledIntersecting}
		\end{subfigure}
		\caption{Example of the labeling $\lambda$ on variable gadgets $X_i, X_j$ and variable edges between them, where $x_i$ is \textsc{True} and $x_j$ \textsc{False} in $\phi$. Note, edges that are not labeled are omitted, $F^3X_i = F^1X_j$ and $t_i^3 = t_j^1$. \label{fig:NP-labeled}}
	\end{figure}
	
	Summarizing all of the above we get that the labeling $\lambda_\phi$ uses
	$74$ labels on each variable gadget and  
	$13$ labels on variable edges among any two variables,
	from which we have to subtract the following:
	\begin{itemize}
		\item $4$ labels for each pairs of variable edges between two variables that appear in the same clause,
		\item $16$ labels for the shared fork between two variables, that appear in a satisfied clause,
		\item $8$ labels for the shared fork between two variables, that appear in a non-satisfied clause.
	\end{itemize}
	Altogether sums up to the
	$74n +13 \frac{n(n-1)}{2} - 4m -16k - 8(m-k)$ labels.
	Therefore, if $\tau$ satisfies at least $k$ clauses of $\phi$,
	the labeling $\lambda_\phi$ consists of at most
	$\frac{13}{2}n^2+ \frac{99}{2}n-8k$ labels.
\end{proof}

\medskip

Before proving the statement in the other direction, we have to show some structural properties.
Let us fix the following notation.
If a labeling $\lambda_\phi$ labels all left (resp.~right) paths of the variable gadget $X_i$ (\ie both bottom-up from $s_i$ to $t_i^1, t_i^2, t_i^3$ and top-down from $t_i^1, t_i^2, t_i^3$ to $s_i$ with labels $1,2\ldots,10$ in this order), 
then we say that the variable gadget $X_i$ is \emph{left-aligned} (resp.~\emph{right-aligned}) in the labeling $\lambda_\phi$.
Note, if at least one edge on any of these left (resp.~right) paths of $X_i$ is not labeled with the appropriate label between $1$ and $10$, then the variable gadget is \emph{not} left-aligned (resp.~\emph{not} right-aligned).
Every temporal path from $s_i$ to $t_i^\ell$ (resp.~from $t_i^\ell$ to $s_i$) of length 10 in $X_i$ is called an \emph{upward path} (resp.~a \emph{downward path}) in $X_i$. Any part of an upward (resp.~downward) path is called a \emph{partial} upward (resp.~downward) path. 
Note that, for any $\ell,\ell'\in\{1,2,3\}$, $\ell\neq \ell'$, a temporal path from $t_i^\ell$ to $t_i^{\ell'}$ of length 10 is the union of 
a partial downward path on the fork $F_i^\ell$ and a partial upward path on $F_i^{\ell'}$. Moreover, note that these two partial downward/upward paths must be either both parts of a left temporal path or both parts of a right temporal path between $s_i$ and $t_i^\ell,t_i^{\ell'}$. 
The following technical lemma will allow us to prove the correctness of our reduction.

\begin{lemma} \label{lem:APXproofSecondDirectionAlignedVariables}	
	Let $\lambda_\phi$ be a minimum labeling of $G_\phi$. Then $\lambda_\phi$ can be modified in polynomial time to a minimum labeling of $G_\phi$ in which each variable gadget $X_i$ is either left-aligned or right-aligned.
\end{lemma}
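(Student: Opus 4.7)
The plan is to exploit the tight relationship $\alpha(G_\phi,\lambda_\phi) \le 10 = d_\phi$: any temporal path between two vertices at static distance exactly $10$ must follow a static shortest path carrying the labels $1, 2, \ldots, 10$ one per edge in strictly increasing order. Within $X_i$, the pairs $(s_i, t_i^\ell)$ and $(t_i^\ell, t_i^{\ell'})$ all have distance exactly $10$, and by inspection of the construction every shortest path between such a pair is one of exactly two options (the left and the right one) and is fully contained in $X_i$ (any alternative routed through variable edges has length strictly greater than $10$). Hence, for each of the twelve such ordered extremal pairs, $\lambda_\phi$ must fully label at least one of its two shortest paths with $1, 2, \ldots, 10$.

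Next I would prove the key counting statement that $\lambda_\phi$ uses at least $74$ labels on the edges of $X_i$, split as $40 + 34$. For the first $40$: the selected side (left or right) of $X_i$ consists of $20$ edges, namely the $5$ base edges on that side together with the bridge edge and the $4$ fork edges on that side for each of the three forks, and each such edge must carry at least two labels, one contributed by an upward extremal path traversing it and one by a downward one; if $\lambda_\phi$ selects different sides for different extremal pairs, then both the left and the right copies of base or fork edges accumulate labels and the total can only grow. For the remaining $34$: each of the $17$ connecting edges $y_i \overline{y_i}$ must carry both label $1$ and label $10$, because the vertex on the unused side of $y_i \overline{y_i}$ must still reach and be reached by vertices of $X_i$ on the opposite side within the $10$-label budget, and under the tight age bound the only way to switch sides inside $X_i$ is to cross the single connecting edge at time $1$ when departing its side and at time $10$ when arriving.

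Armed with this lower bound, I would construct $\lambda'$ by choosing a side for $X_i$ (say $L$), deleting every label of $\lambda_\phi$ on the edges of $X_i$, and installing the explicit left-aligned labeling of $X_i$ given in the proof of Theorem~\ref{thm:APXhardnessFirstDirection}, which uses exactly $74$ labels on $X_i$; labels on edges outside $X_i$ are left untouched. The aligned scheme by construction realizes all intra-$X_i$ temporal connectivity. For temporal paths of $\lambda_\phi$ passing through $X_i$ from outside, they can enter and leave $X_i$ only via the boundary vertices $s_i, t_i^\ell, d_i, \overline{d_i}, e_i, \overline{e_i}, f_i^\ell, \overline{f_i}^\ell$, and the aligned scheme provides, between any such boundary vertex and any internal vertex of $X_i$, a temporal path of the required length and arrival time, so each external path can be re-routed through $X_i$ using the same entry and exit vertices and times. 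Repeating the modification gadget by gadget, each in polynomial time, yields the claimed minimum labeling. The main obstacle is the $74$-label lower bound, in particular the forcing on the $17$ connecting edges, which requires the careful case analysis sketched above.
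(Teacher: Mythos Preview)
Your approach is different from the paper's and contains a genuine gap in the re-routing step. Suppose the original $\lambda_\phi$ labels a variable edge into a right-side boundary vertex of $X_i$, say $\overline{e_j}\,\overline{f_i}^{\,1}$ with label $6$, and uses the right side of $F^1X_i$ to continue towards $t_i^1$. After you delete all labels inside $X_i$ and install the left-aligned scheme, the only labelled edge incident with $\overline{f_i}^{\,1}$ inside $X_i$ is the connecting edge $f_i^1\overline{f_i}^{\,1}$ with labels $1$ and $10$; entering at time $6$ you can only cross to $f_i^1$ at time $10$ and are then stuck, so $s_j$ no longer reaches $t_i^1$. In general a non-aligned $\lambda_\phi$ may have labelled variable edges to \emph{both} sides of $X_i$, so neither choice of alignment repairs all external paths ``with the same entry and exit vertices and times''. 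Your wholesale-replacement idea therefore needs, at a minimum, to relabel the variable edges as well, and to argue this can be done without increasing their total count---which is exactly the delicate point, and it depends on structural information about $\lambda_\phi$ you have not established. A second related issue is that shared forks belong to two gadgets simultaneously, so replacing the labeling of $X_i$ is not independent of $X_j$.

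The paper takes the opposite route: rather than proving a tight $74$-label lower bound first, it performs a case analysis on the pattern of partial upward/downward paths in the forks of a non-aligned gadget. In almost every case it exhibits a strictly smaller labeling, contradicting minimality; in the one surviving configuration it gives a targeted local modification that aligns $X_i$ while also adjusting the connecting edges and the variable edges without changing the number of labels on each. Your $74$-lower-bound idea is closer in spirit to what the paper does later in Theorem~\ref{thm:APXhardnessSecondDirection}, but there it is applied \emph{after} the alignment lemma is already in hand; it is not used to prove the lemma itself.
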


\begin{proof}
	Let $\lambda_\phi$ be a minimum labeling of $G_\phi$ that admits at least one variable gadget $X_i$ that is neither left-aligned nor right-aligned. 
	
	First we will prove that there exists a fork $F^\ell X_i$ of $X_i$ that admits at least three partial upward or downward paths,
	\ie it either has two partial upward paths (one on each side of the fork) and at least one partial downward path, 
	or two partial downward paths (one on each side of the fork) and at least one partial upward path.
	For the sake of contradiction, suppose that each of the forks $F^1 X_i, F^2 X_i, F^3 X_i$ contains at most two partial upward or downward paths. Then, since $\lambda_\phi$ must have in $X_i$ at least one upward and at least one downward path between $s_i$ and $t_i^\ell$, $\ell\in \{1,2,3\}$, it follows that each fork $F^\ell X_i$ has \emph{exactly} one partial upward and \emph{exactly} one partial downward path. 
	
	Assume that each of the forks $F^1 X_i, F^2 X_i, F^3 X_i$ has both its partial upward and downward paths on the same side of $X_i$ (\ie either both on the left or both on the right side of $X_i$). If all of them have their partial upward and downward paths on the left (resp.~right) side of $X_i$, then $X_i$ is left-aligned (resp.~right-aligned), which is a contradiction. Therefore, at least one fork (say $F^1 X_i$) has its partial upward and downward paths on the left side of $X_i$ and at least one other fork (say $F^2 X_i$) has its partial upward and downward paths on the right side of $X_i$. But then there is no temporal path from $t_i^1$ to $t_i^2$ of length 10 in $\lambda_\phi$, which is a contradiction. Therefore there exists at least one fork $F^\ell X_i$ (say, $F^1 X_i$ w.l.o.g.), in which (w.l.o.g.) the partial upward path is on the right side and the partial downward path is on the left side of $X_i$.
	
	Since the partial downward path of $F^1 X_i$ is on the left side of $X_i$, it follows that the partial upward path of each of $F^2 X_i$ and $F^2 X_i$ is on the left side of $X_i$. Indeed, otherwise there is no temporal path of length 10 from $t_i^1$ to $t_i^2$ or $t_i^3$ in $\lambda_\phi$, a contradiction. Similarly, since the partial upward path of $F^1 X_i$ is on the right side of $X_i$, it follows that the partial downward path of each of $F^2 X_i$ and $F^2 X_i$ is on the right side of $X_i$. But then, there is no temporal path of length 10 from $t_i^2$ to $t_i^3$, or from $t_i^3$ to $t_i^2$ in $\lambda_\phi$, which is also a contradiction. 
	Therefore at least one fork $F^\ell X_i$ (say $F^3 X_i$) of $X_i$ admits at least three partial upward or downward paths.
	
	W.l.o.g.~we can assume that the fork $F^3 X_i$ has two partial downward paths and at least one partial upward path which is on the left side of $X_i$. We distinguish now the following cases.
	
	\noindent\textbf{Case A.} The fork $F^3 X_i$ has no partial upward path on the right side of~$X_i$. Then the base $BX_i$ has a partial upward path on the left side of $X_i$. Furthermore, each of the forks $F^1 X_i,F^2 X_i$ has a partial downward path on the left side of~$X_i$. 
	
	\medskip
	
	\noindent\textbf{Case A-1.} The base $BX_i$ of $X_i$ has no partial downward path on the left side of~$X_i$; that is, there is no temporal path from vertex $e_i$ to vertex $s_i$ with labels ``6,7,8,9,10''. Then the base $BX_i$ of $X_i$ has a partial downward path on the right side of~$X_i$, as otherwise there would be no temporal path of length 10 from any of $t_i^1,t_i^2,t_i^3$ to $s_i$. For the same reason, each of the forks $F^1 X_i,F^2 X_i$ has a partial downward path on the right side of $X_i$.
	
	\noindent\textbf{Case A-1-i.} None of the forks $F^1 X_i,F^2 X_i$ has a partial upward path on the left side of~$X_i$. 
	Then each of the forks $F^1 X_i,F^2 X_i$ has a partial upward path on the right side of~$X_i$, as otherwise there would be no temporal path of length 10 from $s_i$ to $t_i^1$ or $t_i^2$. For the same reason, the base $BX_i$ has a partial upward path on the right side of $X_i$. Therefore we can remove the label ``5'' from the left bridge edge $e_i f_i^3$ of the fork $F^3 X_i$, thus obtaining a labeling with fewer labels than $\lambda_\phi$, a contradiction.
	
	\noindent\textbf{Case A-1-ii.} Exactly one of the forks $F^1 X_i,F^2 X_i$ (say $F^1 X_i$) has a partial upward path on the left side of~$X_i$. Then the fork $F^2 X_i$ has a partial upward path on the right side of $X_i$. 
	Furthermore the base $BX_i$ has a partial upward path on the right side of $X_i$, since otherwise there would be no temporal path of length 10 from $s_i$ to $t_i^2$. In this case we can modify the solution as follows: remove the labels ``1,2,3,4,5'' from the partial right-upward path of $BX_i$ and add the labels ``6,7,8,9,10'' to the partial left-upward path of the fork $F^2 X_i$. Finally we can remove the label ``5'' from the right bridge edge $\overline{e_i} \overline{f_i}^3$ of the fork $F^3 X_i$, thus obtaining a labeling with fewer labels than $\lambda_\phi$, a contradiction.
	
	\noindent\textbf{Case A-1-iii.} Each of the forks $F^1 X_i,F^2 X_i$ has a partial upward path on the left side of~$X_i$. 
	In this case we can modify the solution as follows: remove the labels ``10,9,8,7,6'' from the partial right-downward path of $BX_i$ and add the same labels ``10,9,8,7,6'' to the partial left-downward path of the base $BX_i$. Finally we can remove the label ``5'' from the right bridge edge $\overline{e_i} \overline{f_i}^3$ of the fork $F^3 X_i$, thus obtaining a labeling with fewer labels than $\lambda_\phi$, a contradiction.
	
	\medskip
	
	\noindent\textbf{Case A-2.} The base $BX_i$ of $X_i$ has a partial downward path on the left side of~$X_i$; that is, there is a temporal path from vertex $e_i$ to vertex $s_i$ with labels ``6,7,8,9,10''. 
	
	\noindent\textbf{Case A-2-i.} None of the forks $F^1 X_i,F^2 X_i$ has a partial upward path on the left side of~$X_i$. 
	Then the base $BX_i$ and each of the forks $F^1 X_i,F^2 X_i$ have a partial upward path on the right side of~$X_i$, as otherwise there would be no temporal paths of length 10 from $s_i$ to $t_i^1,t_i^2$. 
	Moreover, as none of $F^1 X_i,F^2 X_i$ has a partial left-upward path, it follows that each of $F^1 X_i,F^2 X_i$ has a partial downward path on the right side of~$X_i$. Indeed, otherwise there would be no temporal paths of length 10 between $t_i^1$ and $t_i^2$. 
	In this case we can modify the solution as follows: remove the labels ``1,2,3,4,5'' from the partial left-upward path of $BX_i$ and add the labels ``6,7,8,9,10'' to the partial right-upward path of the fork $F^3 X_i$. Finally we can remove the label ``6'' from the left bridge edge $e_i f_i^3$ of the fork $F^3 X_i$, thus obtaining a labeling with fewer labels than $\lambda_\phi$, a contradiction.

	\noindent\textbf{Case A-2-ii.} Exactly one of the forks $F^1 X_i,F^2 X_i$ (say $F^1 X_i$) has a partial upward path on the right side of~$X_i$. Then the fork $F^2 X_i$ has a partial upward path on the left side of $X_i$. 
	Furthermore the base $BX_i$ must have a partial right-upward path, as otherwise there would be no temporal path from $s_i$ to $t_i^2$. In this case we can modify the solution as follows: remove the labels ``1,2,3,4,5'' from the partial right-upward path of $BX_i$ and add the labels ``6,7,8,9,10'' to the partial left-upward path of the fork $F^2 X_i$. Finally we can remove the label ``5'' from the right bridge edge $\overline{e_i} \overline{f_i}^3$ of the fork $F^3 X_i$, thus obtaining a labeling with fewer labels than $\lambda_\phi$, a contradiction.

	\noindent\textbf{Case A-2-iii.} Each of the forks $F^1 X_i,F^2 X_i$ has a partial upward path on the right side of~$X_i$. Then we we can simply remove the label ``5'' from the right bridge edge $\overline{e_i} \overline{f_i}^3$ of the fork $F^3 X_i$, thus obtaining a labeling with fewer labels than $\lambda_\phi$, a contradiction.

	\medskip
	
	\noindent\textbf{Case B.} The fork $F^3 X_i$ has also a partial upward path on the right side of~$X_i$. That is, $F^3 X_i$ has 
	partial upward-left, upward-right, downward-left, and downward-right paths. 
	
	\medskip
	
	\noindent\textbf{Case B-1.} The base $BX_i$ of $X_i$ has no partial downward path on the left side of~$X_i$. Then the base $BX_i$ of $X_i$ has a partial downward path on the right side of~$X_i$, as otherwise there would be no temporal path of length 10 from any of $t_i^1,t_i^2,t_i^3$ to $s_i$. For the same reason, each of the forks $F^1 X_i,F^2 X_i$ has a partial downward path on the right side of $X_i$.
	
	Note that Case B-1 is symmetric to the case where the base $BX_i$ of $X_i$ has no partial right-downward (resp.~left-upward, right upward) path.
	
	\noindent\textbf{Case B-1-i.} None of the forks $F^1 X_i,F^2 X_i$ has a partial upward path on the left side of~$X_i$. 
	This case is the same as Case A-1-i.
	
	\noindent\textbf{Case B-1-ii.} Exactly one of the forks $F^1 X_i,F^2 X_i$ (say $F^1 X_i$) has a partial upward path on the left side of~$X_i$. Then both the base $BX_i$ and the fork $F^2 X_i$ has a partial right-upward path, as otherwise there would be no temporal path of length 10 from $s_i$ to $t_i^2$. In this case, we can always remove the label ``6'' from the left bridge edge $e_i f_i^3$ of the fork $F^3 X_i$ (without compromising the temporal connectivity), thus obtaining a labeling with fewer labels than $\lambda_\phi$, a contradiction. 
	
	\noindent\textbf{Case B-1-iii.} Each of the forks $F^1 X_i,F^2 X_i$ has a partial upward path on the left side of~$X_i$. That is, each of $F^1 X_i,F^2 X_i$ has a partial left-upward and a partial right-downward path.
	The following subcases can occur:
	
	\noindent\textbf{Case B-1-iii(a).} None of the forks $F^1 X_i,F^2 X_i$ has a partial right-upward path. 
	Then each of the forks $F^1 X_i,F^2 X_i$ has a partial left-downward path, since otherwise there would not exist temporal paths of length 10 between $t_i^1$ and $t_i^2$. 
	Furthermore, the base $BX_i$ has a partial left-upward path, since otherwise there would not exist a temporal path of length 10 from $s_i$ to $t_i^1$ and $t_i^2$. 
	In this case, we can remove the label ``6'' from the right bridge edge $\overline{e_i} \overline{f_i}^3$ of the fork $F^3 X_i$, thus obtaining a labeling with fewer labels than $\lambda_\phi$, a contradiction.
	
	\noindent\textbf{Case B-1-iii(b).} Exactly one of the forks $F^1 X_i,F^2 X_i$ (say $F^1 X_i$) has a partial right-upward path. 
	Then the base $BX_i$ has a partial left-upward path, since otherwise there would not exist a temporal path of length 10 from $s_i$ to $t_i^2$. 
	Similarly, the fork $F^1 X_i$ has a partial left-downward path, since otherwise there would not exist a temporal path of length 10 from $t_i^1$ to $t_i^2$. 
	In this case we can modify the solution as follows: 
	First, remove the labels ``10,9,8,7,6'' from the partial right-downward path of $BX_i$ and add the labels ``10,9,8,7,6'' to the partial left-downward path of $BX_i$. 
	Second, remove the labels ``5,6'' from each of t two right bridge edges $\overline{e_i} \overline{f_i}^1$ and $\overline{e_i} \overline{f_i}^3$ of the forks $F^1 X_i$ and $F^3 X_i$, respectively. 
	Third, remove the label ``5'' from the right bridge edge $\overline{e_i} \overline{f_i}^1$ of the fork $F^2 X_i$. 
	Finally, add the five labels ``5,4,3,2,1'' to the partial left-downward path of the fork $F^2 X_i$. 
	The resulting labeling $\lambda_\phi^*$ still preserves the temporal reachabilities and has the same number of labels as $\lambda_\phi$, while the variable gadget $X_i$ is aligned.
	
	\noindent\textbf{Case B-1-iii(c).} Each of the forks $F^1 X_i,F^2 X_i$ has a partial right-upward path. 
	In this case, we can always remove the label ``5'' from the left bridge edge $e_i f_i^3$ of the fork $F^3 X_i$, thus obtaining a labeling with fewer labels than $\lambda_\phi$, a contradiction. 
	
	\medskip
	
	\noindent\textbf{Case B-2.} The base $BX_i$ of $X_i$ has partial left-downward, right-downward, left-upward, and right-upward paths. Then, due to symmetry, we may assume w.l.o.g.~that the fork  $F^1 X_i$ has a left-upward path. 
	Suppose that $F^1 X_i$ has also a left-downward path. 
	In this case we can modify the solution as follows: remove the labels ``1,2,3,4,5'' and ``10,9,8,7,6'' from the partial right-upward and right-downward paths of $BX_i$ and add the labels ``6,7,8,9,10'' and ``5,4,3,2,1'' to the partial left-upward and left-downward paths of the fork $F^2 X_i$. Finally we can remove the label ``6'' from the right bridge edge $\overline{e_i} \overline{f_i}^3$ of the fork $F^3 X_i$, thus obtaining a labeling with fewer labels than $\lambda_\phi$, a contradiction.
	
	Finally suppose that $F^1 X_i$ has no partial left-downward path. Then $F^1 X_i$ has a partial right-down path, since otherwise there would not exist any temporal path of length 10 from $t_i^1$ to $s_i$. 
	Similarly, the fork $F^2 X_i$ has a partial right-upward path, since otherwise there would not exist any temporal path of length 10 from $t_i^1$ to $t_i^2$. 
	In this case we can modify the solution as follows: First remove the labels ``1,2,3,4,5'' and ``10,9,8,7,6'' from the partial left-upward and left-downward paths of $BX_i$. 
	Second add the labels ``6,7,8,9,10'' to the partial right-upward path of the fork $F^1 X_i$ and add the labels ``5,4,3,2,1'' to the partial right-downward path of the fork $F^2 X_i$. 
	Finally remove the label ``6'' from the left bridge edge $e_i f_i^3$ of the fork $F^3 X_i$, thus obtaining a labeling with fewer labels than $\lambda_\phi$, a contradiction.
	
	\medskip
	
	Summarizing, starting from an optimum $\lambda_\phi$ of $G_\phi$, in which at least one variable gadget is neither left-aligned nor right-aligned, we can modify $\lambda_\phi$ to another labeling $\lambda_\phi^*$, such that $\lambda_\phi^*$ has one more variable-gadget that is aligned and $|\lambda_\phi|=|\lambda_\phi^*|$. Note that this modification can only happen in Case B-1-iii(b); in all other cases our case analysis arrived at a contradiction. 
	Note here that, by making the above modifications of $\lambda_\phi$, we need to also appropriately modify the ``connecting edges'' (within the variable gadgets) and the ``variable edges'' (between different variable gadgets), without changing the total number of labels in each of these edges. Finally, it is straightforward that all modifications of $\lambda_\phi$ can be done in polynomial time. This concludes the proof.
\end{proof}

\begin{theorem}\label{thm:APXhardnessSecondDirection}
	If OPT$_{\MAL}(G_\phi,d_\phi) \leq \frac{13}{2}n^2+ \frac{99}{2}n-8k$ then OPT$_{\MAXSATshort}(\phi) \geq k$, 
	where $n$ is the number of variables in the formula $\phi$.
\end{theorem}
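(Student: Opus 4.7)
The plan is to start from an optimum labeling $\lambda_\phi$ of $G_\phi$ of size $|\lambda_\phi| \leq \frac{13}{2}n^2 + \frac{99}{2}n - 8k$ and first invoke Lemma~\ref{lem:APXproofSecondDirectionAlignedVariables} in order to assume, without increasing $|\lambda_\phi|$, that every variable gadget $X_i$ is either left-aligned or right-aligned. This aligned labeling canonically induces a truth assignment $\tau$ for $\phi$ via $\tau(x_i)=\textsc{True}$ iff $X_i$ is left-aligned. Letting $k'$ be the number of clauses satisfied by $\tau$, the theorem reduces to proving $k'\geq k$.

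The strategy is to mirror the label-count of Theorem~\ref{thm:APXhardnessFirstDirection} by a matching lower bound on $|\lambda_\phi|$ expressed in $k'$. Concretely, I would argue that any aligned labeling of $G_\phi$ must contain at least $74n + 13\binom{n}{2} - 4m - 16k' - 8(m-k') = \frac{13}{2}n^2 + \frac{99}{2}n - 8k'$ labels. Combining this with the hypothesis immediately yields $-8k'\leq -8k$, i.e.\ $k'\geq k$, as required.

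To derive the lower bound I would decompose the labels of $\lambda_\phi$ into those internal to each gadget and those on variable edges between pairs of gadgets, and bound each part from below. For every aligned $X_i$, the alignment-enforced temporal paths $s_i \leftrightarrow t_i^\ell$ of length $10$ contribute $2$ labels on each of the $5$ base edges, each of the $3$ bridge edges, and each of the $12$ aligned fork-path edges ($40$ labels), while each of the $17$ connecting edges $y_i\overline{y_i}$ must carry both labels $1$ and $10$ so that the off-aligned vertices are temporally reachable from and able to reach $s_i$ and every $t_i^\ell$ within the age bound $\alpha(G_\phi,\lambda)\leq d_\phi=10$; together this gives $74$ labels per gadget. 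For each pair $(X_i, X_j)$ that shares no fork, the age-$10$ shortest temporal paths across the two gadgets force at least $13$ labels on their $7$ variable edges (one on $d_id_j$, two on each of the six $e$-to-$f$ bridges); if they share a fork only $5$ variable edges remain, contributing at least $9$ labels. Finally, when $X_i, X_j$ share a fork corresponding to a clause $(x_i\oplus x_j)$, the two per-gadget contributions overcount the shared fork: the identification rule in the construction of $G_\phi$ maps the aligned side of one gadget onto the non-aligned side of the other, so the two alignments' labels on the $4$ fork-path and $4$ fork-connecting edges physically coincide precisely when $\tau(x_i)\neq\tau(x_j)$, saving $16$ labels in that case and only $8$ (on the shared connecting edges of the fork alone) otherwise. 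Summing the per-gadget, per-pair and per-clause contributions produces the displayed lower bound.

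The main obstacle is the per-gadget lower bound of $74$: while the $40$ alignment-forced labels follow directly from the definition of left-/right-alignment, forcing both labels $1$ and $10$ onto \emph{each} of the $17$ connecting edges requires a careful case analysis leveraging the age restriction $\alpha(G_\phi,\lambda)\leq 10$ and the fact that the aligned left (or right) $s_i$--$t_i^\ell$ path is already saturated with the canonical labels $1,\ldots,10$, leaving essentially no room for an off-aligned vertex to enter or leave the aligned side at any other time step. A symmetric care is needed for the per-pair bound on variable edges, where the shortest-path distances computed in Lemma~\ref{diameter-G-phi-lem} ($d(s_i,s_j)=9$, $d(s_i,t_j^{\ell'})=d(t_i^\ell,t_j^{\ell'})=10$) leave only one admissible window of strictly increasing labels, thereby pinning down the labels on $d_id_j$ and on the six $e$-to-$f$ bridges.
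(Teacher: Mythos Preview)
Your proposal is correct and follows essentially the same approach as the paper: invoke Lemma~\ref{lem:APXproofSecondDirectionAlignedVariables} to assume every gadget is aligned, read off the truth assignment $\tau$, and establish the matching lower bound $74n + 13\binom{n}{2} - 4m - 16k' - 8(m-k')$ on $|\lambda_\phi|$ by separately lower-bounding the per-gadget, per-pair, and shared-fork contributions. One small imprecision: you assert that each connecting edge must carry \emph{exactly} the labels $1$ and $10$, whereas what is actually needed (and what the paper argues) is only that each connecting edge carries at least two labels; the specific values are not forced, but the count of $34$ labels on the $17$ connecting edges is.
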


\begin{proof}
	Recall by \cref{diameter-G-phi-lem} that $d_\phi=10$.
	Let $\lambda_\phi$ be an optimum solution to \textsc{MAL}$(G_\phi,10)$, which uses OPT$_{\MAL}(G_\phi,d_\phi)\leq \frac{13}{2}n^2+ \frac{99}{2}n-8k$ labels by the assumption of the theorem. We will prove that there exists a truth assignment $\tau$ that satisfies at least $k$ clauses of $\phi$. 
	Recall that, since $\phi$ is an instance of \MAXSATshort\ with $n$ variables, it has $m=\frac{3}{2}n$ clauses.

	Let $X_i$ and $X_j$ be two variable gadgets in $G_\phi$.
	First we observe that
	the temporal path from a starting vertex $s_i$ of $X_i$, to any of the ending vertices $t_i^\ell$, where $\ell \in \{1,2,3\}$,
	must only go through the vertices and edges of the variable gadget $X_i$.
	This is true since in any other case the temporal path would use at least one variable edge 
	and in this case the distance of the path would increase by at least one.
	Therefore, the path would be of length at least $11$, but since the diameter of the graph is $10$, the largest label that is allowed to be used is $10$
	and thus the longest temporal path can use at most $10$ edges.
	Similarly it holds for temporal paths from the ending vertices $t_i^\ell$ ($\ell \in \{1,2,3\}$) to the starting vertex $s_i$ and the temporal paths among the ending vertices.
	Even more, these temporal paths must be either all on the left or all on the right side of $X_i$, 
	\ie they have to use vertices and edges that are all on the left or the right side of the base $BX_i$ and each fork $F^\ell X_i$.
	This holds as paths of any other form (\ie containing vertices and edges of both sides) are of length at least $11$.
	Consequently, to label a $(s_i,t_i^1)$-path in both directions any labeling must use at least $2 \cdot 10$ labels.
	Now, to label $(s_i,t_i^2)$ and $(s_i,t_i^3)$-paths, the labels on the base $BX_i$ can be reused, which produces additional $10$ labels on each fork $F^2X_i$ and $F^3X_i$.
	In the case when all these labels were used on the same path of the variable gadget 
	\ie all labels were placed on the left or on the right side of $BX_i$ and $F^iX_i$,
	there are also temporal paths connecting all three ending vertices, without having to introduce any extra labels.
	The only missing part is to assure that also all the vertices from the opposite side 
	(\ie if the labeling used left paths, then the opposite vertices are on the right side, or vice versa)
	are able to reach and be reached by any other vertex.
	Therefore, we need at least $2$ more labels (one for incoming and one for outgoing temporal paths)
	on the edges connecting them with the path (vertices) on the other side.
	Altogether, to ensure the existence of a temporal path between any two vertices from $X_i$,
	a labeling must use at least $74$ labels on a variable gadget $X_i$.
	
	Now, let $X_i$ and $X_j$ be such variable gadgets that do not share the fork.
	As observed above, all vertices from each variable gadget can only be reached among each other, without using the variable edges.
	Therefore, the variable edges must be labeled in such a way, that they ensure a temporal path among vertices from different variable gadgets.
	W.l.o.g.~we can assume that $X_i$ is left-aligned and $X_j$ is right-aligned by $\lambda_\phi$ 
	(all the other cases of aligned and non-aligned labelings of $X_i$ and $X_j$ by $\lambda_\phi$, are symmetric).
	Since the starting vertex $s_i$ is on the distance $10$ from the ending vertices of $t_j^{\ell'}$ ($\ell' \in \{1,2,3\}$) of $X_j$, 
	there must be a temporal path using all labels, to connect them.
	This path must use the edge of the form $e_i \overline{f_j}^{\ell'}$, as any other path is longer than $10$. 
	Since the path must be traversed in both direction each edge $e_i \overline{f_j}^{\ell'}$ ($\ell' \in \{1,2,3\})$ must have at least $2$ labels.
	Similarly it holds for the $(s_j, t_i^\ell)$-paths ($\ell \in \{1,2,3\})$
	and the edges $\overline{e_j} f_i^\ell$ ($\ell \in \{1,2,3\})$.
	For a vertex $s_i$ to reach $s_j$ we must label the edge $d_i \overline{d_j}$, as any other $(s_i,s_j)$-path is longer than $10$.
	Therefore, we need at least one extra label for the edge $d_i \overline{d_j}$.
	Altogether, to ensure the existence of a temporal path among two vertices from two variable gadgets that do not share a fork,
	a labeling must use at least $13$ labels on the variable edges. 
	
	Lastly, let $X_i$ and $X_j$ be two variable gadgets that share a fork.
	W.l.o.g.~we can suppose that $X_i$ is left-aligned by the optimum labeling $\lambda_\phi$
	and that $F^rX_i = F^{r'}X_j$.
	By the construction of $G_\phi$, there exists a temporal path to and from all the vertices in the fork $F^rX_i = F^{r'}X_j$ to all vertices in $X_i$ and $X_j$,
	as there is a temporal path among all vertices from $X_i$
	and a temporal path among all vertices in $X_j$.
	As observed above, these paths do not use the variable edges,
	but the variable edges must be labeled in such a way, that they ensure a temporal path among vertices from different variable gadgets.
	Now if we suppose that the variable gadget $X_j$ is right-aligned by the labeling $\lambda_\phi$, then
	a temporal path between $s_i$ and $s_j$ must use the edge $d_i \overline{d_j}$ and therefore 
	at least one extra label is used for this edge.
	A temporal path between $s_i$ and $t_j^{\ell'}$, where $\ell' \in \{1,2,3\} \setminus \{r'\}$, must use the edge $e_i\overline{f_j}^{\ell'}$. 
	Since the edge of this form is traversed in both directions it must have at least two labels.
	Similarly it holds for the temporal paths between $t_i^\ell$ ($\ell \in \{1,2,3\} \setminus \{r\}$) and $s_j$.
	Altogether, to ensure the existence of a temporal path among any two vertices from two variable gadgets that share a fork,
	a minimum labeling must use at least $9$ labels on the variable edges. 
	Similarly we can see that also all other combinations of aligned and non-aligned labelings of $X_i$ and $X_j$ by $\lambda_\phi$,
	require at least $9$ labels on the variable edges.
	
	The only thing left to study, in the case of two variable gadgets that share a fork, is what happens in the intersecting fork.
	By \cref{lem:APXproofSecondDirectionAlignedVariables} we know that the variable gadgets $X_i$ and $X_j$ are aligned by the labeling $\lambda_\phi$. Suppose that $F^rX_i = F^{r'}X_j$.
	W.l.o.g.~we can assume that $X_i$ is left-aligned.
	We distinguish the following two cases.
	\begin{itemize}
		\item 
		The variable gadget $X_j$ is right-aligned.
		Then, by the construction of $G_\phi$, 
		the fork $F^1X_i = F^1X_j$ is labeled using the same labeling, 
		\ie the left labeling of the variable gadget $X_i$ coincides with the right labeling of the variable gadget $X_j$.
		This ``saves'' $16$ labels from the total number of labels used on variable gadgets $X_i$ and $X_j$.
		
		\item
		The variable gadget $X_j$ is left-aligned.
		In this case
		all edges in
		the fork $F^1X_i = F^1X_j$ admit two labels.
		This ``saves'' $8$ labels from the total number of labels used on variable gadgets $X_i$ and $X_j$,
		since both labelings coincide on the connecting edges.
	\end{itemize}
	
	From the labeling $\lambda_\phi$ of $G_\phi$ we construct a truth assignment $\tau$ of $\phi$ in the following way.
	If a variable gadget $X_i$ is left-aligned, we set $x_i$ to \textsc{True} and
	if it is right-aligned, we set $x_i$ to \textsc{False}.
	Using the results from above we deduce that the truth assignment $\tau$ satisfies at most $k$ clauses.
\end{proof}

\medskip

Since \MAL\ is clearly in NP, the next theorem follows directly by \cref{maxsat-hard-thm,thm:APXhardnessFirstDirection,thm:APXhardnessSecondDirection}.

\begin{theorem}\label{thm:NPDiameterStatic}
	\MAL\ is NP-complete on undirected graphs, even when the required maximum age is equal to the diameter of the input graph.
\end{theorem}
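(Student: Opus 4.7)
The plan is to derive NP-completeness of \MAL\ by combining the three ingredients already established: NP-hardness of \MAXSATshort\ (\cref{maxsat-hard-thm}), the forward direction \cref{thm:APXhardnessFirstDirection}, and the backward direction \cref{thm:APXhardnessSecondDirection}.

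First I would verify that \MAL\ is in NP. A labeling $\lambda$ is a polynomial-size certificate, and given $\lambda$ one can check in polynomial time both that $|\lambda|\leq k$ and $\alpha(G,\lambda)\leq a$, together with temporal connectivity of $(G,\lambda)$ (for instance by running a foremost-temporal-path computation from each vertex).

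For NP-hardness I would invoke the construction of $G_\phi$ from this section as a polynomial-time many-one reduction from \MAXSATshort. Given a monotone \textsc{Max XOR(3)} formula $\phi$ with $n$ variables, the graph $G_\phi$ is built in polynomial time and, by \cref{diameter-G-phi-lem}, has diameter $d_\phi=10$. Setting the age bound $a:=10$ and the label budget $k':=\tfrac{13}{2}n^2+\tfrac{99}{2}n-8k$, the combination of \cref{thm:APXhardnessFirstDirection,thm:APXhardnessSecondDirection} yields the equivalence: $\phi$ admits a truth assignment satisfying at least $k$ clauses if and only if $(G_\phi,a,k')$ is a \textsc{YES}-instance of \MAL. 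Since \MAXSATshort\ is NP-hard and the age bound $a=10$ coincides with the diameter of $G_\phi$, NP-hardness of \MAL\ holds even under the restriction that the required maximum age equals the diameter of the input graph.

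The substantive work has already been done in the preceding lemmas: the alignment argument of \cref{lem:APXproofSecondDirectionAlignedVariables} guarantees that in any optimum labeling each variable gadget is either fully left- or fully right-aligned, so the labeling encodes a Boolean assignment; and the careful placement of the variable edges keeps the diameter of $G_\phi$ at exactly $10$ while forcing inter-gadget temporal paths to traverse these edges. With those ingredients in hand, the final statement reduces to short bookkeeping, with no further obstacle to overcome.
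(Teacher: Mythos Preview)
Your proposal is correct and follows exactly the same approach as the paper: the paper's proof is a single sentence stating that \MAL\ is clearly in NP and that NP-hardness follows directly from \cref{maxsat-hard-thm,thm:APXhardnessFirstDirection,thm:APXhardnessSecondDirection}. You have simply unpacked that sentence, spelling out the NP-membership certificate and making explicit the choice of $a=d_\phi=10$ and $k'=\tfrac{13}{2}n^2+\tfrac{99}{2}n-8k$ in the reduction.
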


\section{The Steiner-Tree variations of the problem}\label{Steiner-sec}

In this section we investigate the computational complexity of the Steiner-Tree variations of the problem, namely \MSL\ and \MASL. 
First, we prove in Section~\ref{MSL-NP-complete-subsec} that the age-unrestricted problem \MSL\ remains NP-hard, using a reduction from \textsc{Vertex Cover}. In Section~\ref{MSL-FPT-subsec} we prove that this problem is in FPT, when parameterized by the number $|R|$ of terminals. 
Finally, using a parameterized reduction from \MCC, we prove in Section~\ref{MASL-W1-hard-subsec} that the age-restricted version \MASL\ is W[1]-hard with respect to $|R|$, even if the maximum allowed age is a constant.

\subsection{\MSL\ is NP-complete}\label{MSL-NP-complete-subsec}
\begin{theorem}\label{thm:MinCoreNP}
	\MSL\ is \NP-complete.
\end{theorem}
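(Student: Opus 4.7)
The plan is to establish membership in NP and then prove NP-hardness by reducing from \VC.

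NP-membership is immediate: a labeling $\lambda$ with $|\lambda|\le k$ is a polynomial-size certificate, and $R$-temporal connectivity can be verified by running a temporal reachability search from each $u\in R$ in both directions and checking that every other vertex of $R$ is reached.

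For hardness, given an instance $(H,k)$ of \VC\ with $H=(V_H,E_H)$, I will construct an \MSL\ instance $(G,R,k')$ as follows. For each $v\in V_H$ I introduce a ``node vertex'' $x_v$; for each edge $e=uv\in E_H$ I introduce an ``edge-terminal'' $y_e$ adjacent exactly to $x_u$ and $x_v$; and I attach a small globally connected gadget --- for instance one or two hub vertices joined to every $x_v$ --- whose role is to serve as a common connector and to impose directionality on the labels. The terminal set $R$ consists of the $y_e$'s together with the hubs, and $k'$ is a linear function of $k$ and $|E_H|$. The intended correspondence is: a vertex cover $S$ of $H$ of size $k$ produces an $R$-connecting labelling of cost $k'$ supported on the hub structure together with the edges incident to $\{x_v : v\in S\}$, with labels assigned in a gossip-style schedule along this support; conversely, every labelling of cost at most $k'$ activates a set of node vertices that must be a cover, because each $y_e$ is adjacent only to $x_u,x_v$ and so any temporal path incident to $y_e$ must traverse one of these two node vertices.

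The main obstacle is the converse direction, which demands structural control over arbitrary minimum labellings. Because temporal connectivity is neither transitive nor symmetric, an optimum \MSL\ solution need not be supported on a tree --- as highlighted in the introduction by the $C_4$ example, where labelling the four cycle edges strictly beats labelling any spanning tree. I plan to handle this by proving a dedicated structural lemma for the constructed $G$, in the spirit of the forthcoming \cref{lem:MSLstructure}: every optimum labelling can be normalised so that its support is a tree, or a tree plus one extra edge closing a four-cycle, while the gadget is designed so that no such four-cycle of $G$ contains four terminals simultaneously. Given this normalisation, the set of node vertices incident to supported edges covers every edge of $H$, and matching the label count against the bounds of \cref{thm:optLabGossipC4} yields $|S|\le k$, completing the reduction.
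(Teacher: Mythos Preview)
Your high-level strategy --- reduce from \VC, with a terminal per edge of $H$ whose only neighbours are the two endpoint vertex-gadgets --- matches the paper's. The divergence is in the construction and, decisively, in the converse direction.

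The paper does \emph{not} invoke anything like \cref{lem:MSLstructure} here. Instead of direct adjacencies it uses long dummy paths: each edge-vertex $u_e$ is joined to its two endpoint vertex-vertices by paths of length $6k+1$, and the hub $n_1$ to each vertex-vertex by a path of length~$3$, with $R^*=\{n_0\}\cup U_E$ and $k^*=6k+2m(6k+1)+1$. This budget permits exactly one ``forwarding'' and one ``returning'' long path per edge-vertex, plus at most $2k$ short hub paths. The converse is a direct combinatorial argument: one places a total order on the set $\{F_e,R_e:e\in E\}$ of forwarding/returning paths, shows that $F_e<R_{e'}$ for all $e\neq e'$, deduces that at most one $e$ can have $R_e<F_e$, and from this forces the short-path endpoints to form a cover of $H$. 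No global structural normalisation is needed; the long paths do the work.

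Your plan keeps the adjacencies short and defers the entire difficulty to a ``dedicated structural lemma in the spirit of \cref{lem:MSLstructure}''. That is not wrong --- once \cref{lem:MSLstructure} is available, \MSL\ reduces to \ST\ up to $\pm 1$ and NP-hardness is essentially immediate --- but you have not proved such a lemma, and the paper's proof of the general version takes several pages of case analysis; a version tailored to your gadget would not obviously be shorter. Your design condition ``no $C_4$ of $G$ contains four terminals'' is also off-target: the $C_4$ in \cref{lem:MSLstructure} need not consist of terminals, and with one hub $h$ your graph unavoidably contains the $4$-cycles $h\,x_u\,y_e\,x_v$. The $C_4$ case only shifts the label count by one, which integrality of $|S|$ can absorb if you set $k'=2(k+m)-1$, but this has to be argued, not excluded.

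In short: same reduction idea, but your converse is a placeholder for a substantial lemma, whereas the paper engineers its instance (via long dummy paths) so that a direct counting-and-ordering argument suffices with no structural input.
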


\begin{proof}
	\MSL\ is clearly contained in \NP.
	To prove that the \MSL\ is \NP-hard we provide a polynomial-time reduction from the \NP-complete \VC\ problem~\cite{Kar72}.
	
	\problemdef{\VC}
	{A static graph $G = (V,E)$, a positive integer $k$.}
	{Does there exist a subset of vertices $S \subseteq V$ such that $|S| = k$ and $\forall e \in E, e \cap S \neq \emptyset$.}	
	
	Let $(G,k)$ be an input of the \VC\ problem and denote $|V(G)| = n, |E(G)|=m$. We assume w.l.o.g.\ that $G$ does not admit a vertex cover of size $k-1$.
	We construct $(G^*,R^*,k^*)$, the input of \MSL\ using the following procedure.
	The vertex set $V(G^*)$ consists of the following vertices:
	\begin{itemize}
		\item two starting vertices $N = \{n_0, n_1\}$,
		\item a ``vertex-vertex'' corresponding to every vertex of G: $U_V = \{u_v | v \in V(G)\}$,
		\item an ``edge-vertex'' corresponding to every edge of G: $U_E = \{u_e | e \in E(G)\}$,
		\item $2 n + 12m \cdot k$ ``dummy'' vertices.
	\end{itemize}
	The edge set $E(G^*)$ consists of the following edges:
	\begin{itemize}
		\item an edge between starting vertices, \ie $n_0 n_1$,
		\item a path of length $3$ between a starting vertex $n_1$ and every vertex-vertex $u_v \in U_V$ using $2$ dummy vertices, and
		\item for every edge $e=vw \in E(G)$ we connect the corresponding edge-vertex $u_e$ with the vertex-vertices $u_v$ and $u_w$, each with a path of length $6k + 1$ using $6k$ dummy vertices.
	\end{itemize}
	We set $R^* = \{n_0\} \cup U_E$ and $k^* = 6k + 2m (6k + 1) + 1$. This finishes the construction. 
	It is not hard to see that this construction can be performed in polynomial time.
	For an illustration see \cref{fig:NP-MCLfromVC}.
	Note that any two paths in $G^*$ can intersect only in vertices from $N \cup U_V \cup U_E$ and not in any of the dummy vertices.
	At the end $G^*$ is a graph with $3n  + m (12k + 1) + 2$ vertices and $1 + 3n + 2m (6k + 1)$ edges.
	
	We claim that $(G,k)$ is a \textsc{YES} instance of the \VC\ if and only if $(G^*,R^*,k^*)$ is a \textsc{YES} instance of the \MSL.
	
	\begin{figure}[!htb]
		\centering
		\includegraphics[width=0.6\linewidth]{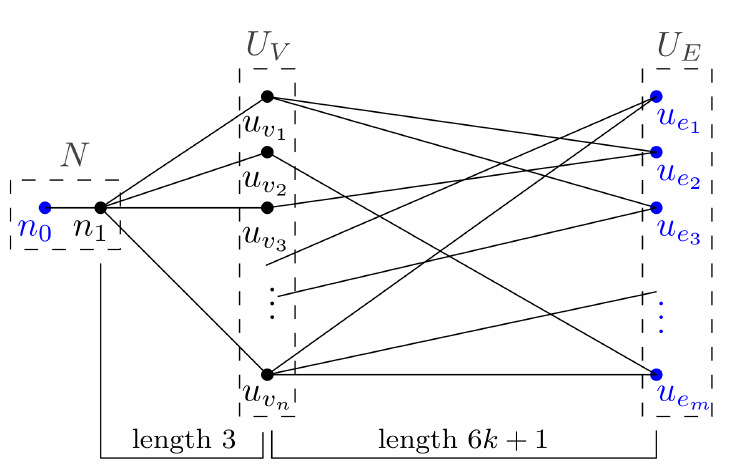}
		\caption{Example of a canonical layering of a directed acyclic graph (DAG).}
		\label{fig:NP-MCLfromVC}
	\end{figure}
	
	($\Rightarrow$):
	Assume $(G,k)$ is a \textsc{YES} instance of the \VC\ and let $S\subseteq V(G)$ be a vertex cover for $G$ of size $k$.
	We construct a labeling $\lambda$ for $G^*$ that uses $k^*$ labels and admits a temporal path between all vertices from $R^*$ as follows.	
	
	For the sake of easier explanation we use the following terminology. A temporal path starting at $n_0$ and finishing at some $u_e$ is called a \emph{returning path}.
	Contrarily, a temporal path from some $u_e$ to $n_0$ is called a \emph{forwarding path}.
	
	Let $U_S$ be the set of corresponding vertices to $S$ in $G^*$.
	From each edge vertex $u_e$ there exists a path of length $6k + 1$ to at least one vertex $u_v \in U_S$, since $S$ is a vertex cover in~$G$.
	We label exactly one of these paths, using labels $1, 2, \dots , 6k +1$. 
	Since $S$ is of size $k$, this part uses $k (6k+1)$ labels.
	Now we label a path from each $v \in U_S$ to $n_1$ using labels $6k + 2, 6k + 3, 6k + 4$.
	Each path uses $3$ labels, and since $S$ is of size $k$ we used $3k$ labels for all of them.
	At the end we label the edge $n_0 n_1$ with the label $\ell^* = 6k + 5$.
	Using this procedure we have created a forwarding path from each edge vertex $u_e$ to the start vertex $n_0$ and we used $3k  + m(6k + 1) + 1$ labels. 
	
	To create the returning paths, we label paths from $n_1$ to each vertex in $U_S$ with labels $\ell^* + 1,\ell^* + 2, \ell^* + 3$.
	Now again, we label exactly one path from vertices in $U_S$ to each edge-vertex $u_e$, using labels $\ell^* + 4, \ell^* + 5, \dots, \ell^* + 3 + 6k$.
	We used extra $3k + m(6k + 1)$ labels and created a returning path from $n_0$ to each vertex in $U_E$.
	
	All together, the constructed labeling uses $k^* = 6k + 2m (6k +1) + 1$ labels, the only thing left to show is that there exists a temporal path between any pair of edge-vertices
	$u_e, u_f \in U_E$.
	It is not hard to see that this holds, as we can construct a temporal path between two edge-vertices as a union of a (sub)path of a temporal path from the first edge-vertex to the starting vertex $n_0$ and a (sub)path of a temporal path from the starting vertex to the other edge-vertex.
	
	($\Leftarrow$):
	Assume that $(G^*,R^*,k^*)$ is a \textsc{YES} instance of the \MSL. We construct a vertex cover of size at most $k$ for $G$ as follows.	
	
	Let us first observe the following, a forwarding and returning path between the starting vertex $n_0$ and the same edge-vertex $u_e$, can intersect in at most one time edge.
	Even more, two temporal paths between the same pair of vertices, going in the opposite directions, intersect in at most one time edge.
	
	By the construction of $G^*$ each (temporal) path between $n_0$ and a vertex in $U_E$ passes through the set $U_V$. 
	Since there are $m$ vertices in $U_E$ and each path between a vertex $u_e \in U_E$ and some $u_v \in U_V$ is of length $6k + 1$, we need at least $m (6k + 1)$ labels to connect $U_E$ to $U_V$
	in ``one direction''.
	Using the observation from above, we get that there can be at most $1$ time edge in common between any two temporal paths among any pair of edge-vertices, 
	therefore we need at least $2 m (6k + 1) - 1$ labels for paths in both directions. We call these the forwarding path $F_{e}$ (from $u_e$ to some $u_v$) and the returning path $R_{e}$ (from some $u_{v'}$ to $u_e$) for $u_e$. It is straightforward to check that every $u_e$ can have at most one forwarding path and one returning path, since every additional path would require at least an additional $6k$ labels and then no connection between $n_0$ and $U_V$ would be possible.
	
	All labeled temporal paths between $N$ and $U_V$ can be split into two sets,
	one containing all temporal paths that are a part of (or can be extended to) some returning path, denote them $P^+_N$ and the others which are a part of (or can be extended to) some forwarding path, denote them $P^-_N$.
	It is not hard to see that each temporal path from $P^+_N$ or $P^-_N$ starts and ends in $N \cup U_V$, \ie no temporal path starts/ends in one of the dummy vertices.
	Therefore each temporal path in $P^+_N$ or $P^-_N$ uses $3$ labels.
	Again, using the above observation we get that temporal paths from $P^+_N$ and $P^-_N$ share at most one label. 
	Since this part uses at most $6k +1$ labels, there are at most $2k$ temporal paths in $P^+_N$ and $P^-_N$.
	Suppose that $|P^+_N| \le |P^-_N|$ (the case where $|P^+_N| > |P^-_N|$ is analogous). 
	Let $U_S \subseteq U_V$ be the set of vertices in $U_V$ such that $P^+_N \cap U_S \neq \emptyset$, 
	\ie $U_S$ consists of vertices that are endpoints of temporal paths in $P^+_N$.
	We claim that $S = \{v \mid u_v \in U_S \}$ is a vertex cover of $G$ and $|S| \leq k$.
	It is not hard to see that $|P^+_N| \leq k$ and therefore $|S| \leq k$.
	
	We first make the following observation. We define a partial order on the set $\mathcal{P}=\{F_e,R_e\mid e\in E\}$ of forwarding and returning paths as follows. For two paths $P,Q\in\mathcal{P}$, we say that $P<Q$ if all labels used in $P$ are strictly smaller than the smallest label used in $Q$.
	We can assume w.l.o.g.\ that the defined ordering is a total ordering on $\mathcal{P}$ since we can order incomparable path pairs arbitrarily by modifying the labels in a way that does not change the size and the connectivity properties of the labeling. Furthermore, we can observe that for any two $e,e'\in E$ with $e\neq e'$ we have that $F_e<R_{e'}$ since in order for $u_e$ to reach $u_{e'}$, the path $F_e$ needs to be used before the path $R_{e'}$. It follows that there is at most one edge $e\in E$ such that $R_e<F_e$, otherwise we would reach a contradiction to the above observation.
	
	Now assume for contradiction that $S$ is not a vertex cover of $G$. Then there is an edge $e=\{v,w\}\in E$ such that $\{v,w\}\cap S=\emptyset$.	
	To reach $u_e$ from $n_0$ there needs to be an edge $e'=\{v,w'\}$ (or symmetrically $\{w,w'\}$) such that we can reach $u_{w'}$ from $n_1$ via some path $P$, then continue to $u_{e'}$ using $R_{e'}$, then continue to $u_{v}$ using $F_{e'}$, finally reach $u_{e}$ using $R_e$. Notice that this requires $P<R_{e'}<F_{e'}<R_{e}$. This implies that the path from $n_0$ to $u_e$ cannot be longer since otherwise there would be two edges $e'$, $e''$ with $R_{e'}<F_{e'}$ and $R_{e''}<F_{e''}$, a contradiction. It also implies that edge $e$ is the only edge in $E$ with $e\cap S=\emptyset$.
	
	Now consider an edge $e''=\{w',v''\}\neq e'$ such that there is no direct path from $n_0$ to $u_{v''}$. If such an edge does not exist then $w'$ and all of its neighbors , different than $v$, are in $S$. Hence we can remove $w'$ from $S$ and add $v$ to $S$ to obtain a vertex cover for $G$ of size at most $k$. 
	Assume that edge $e''$ with the described properties exists and consider the temporal path from $u_{e'}$ to $u_{e''}$. This path must start with $F_{e'}$ thus reaching $u_v$. From there the path cannot continue to some $u_{e'''}$ since for all $e'''\neq e'$ we have that $F_{e'''}<R_{e'''}$ hence the path cannot continue from $u_{e'''}$. It follows that the path has to eventually reach $n_1$ continue to $u_{w'}$ from there. However, recall that $P<F_{e'}$ which means that we cannot use $P$ to reach $u_{w'}$ from $n_1$. Hence, there is a second temporal path $P'$ (using the same edges as $P$ with later labels) from $n_1$ to $u_{w'}$ with $F_{e'}<P'$.  This implies that $|S|<k$ and we can add $v$ to $S$ to obtain a vertex cover of size at most $k$ for $G$.
\end{proof}

\subsection{An FPT-algorithm for \MSL\ with respect to the number of terminals}\label{MSL-FPT-subsec}

In this section we provide an FPT-algorithm for \MSL, parameterized by the number $|R|$ of terminals. 
The algorithm is based on a crucial structural property of minimum solutions for \MSL: 
there always exists a minimum labeling $\lambda$ that labels the edges of a subtree of the input graph 
(where every leaf is a terminal vertex), and potentially one further edge that forms a $C_4$ with three edges of the subtree.

Intuitively speaking, we can use an FPT-algorithm for \textsc{Steiner Tree} parameterized by the number of terminals~\cite{Dreyfus1971Steiner} to reveal a subgraph of the \MSL\ instance that we can optimally label using~\cref{thm:optLabGossipC4}. 
Since the number of terminals in the created \textsc{Steiner Tree} instance is larger than the number of terminals in the \MSL\ instance by at most a constant, we obtain an FPT-algorithm for \MSL\ parameterized by the number of terminals.

\begin{lemma}\label{lem:MSLstructure}
	Let $G=(V,E)$ be a graph, $R\subseteq V$ a set of terminals, and $k$ be an integer such that $(G,R,k)$ is a \textsc{YES} instance of \MSL\ and $(G,R,k-1)$ is a \textsc{NO} instance of \MSL.
	\begin{itemize}
		\item If $k$ is odd, then there is a labeling $\lambda$ of size $k$ for $G$ such that the edges labeled by $\lambda$ form a tree, and every leaf of this tree is a vertex in $R$.
		\item If $k$ is even, then there is a labeling $\lambda$ of size $k$ for $G$ such that the edges labeled by $\lambda$ form a graph that is a tree with one additional edge that forms a $C_4$, and every leaf of the tree is a vertex in $R$.
	\end{itemize} 
\end{lemma}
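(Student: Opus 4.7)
The plan is to relate the minimum MSL cost $k$ to two Steiner-like minima in $G$: let $s$ be the smallest number of vertices of a tree $T\subseteq G$ with $R\subseteq V(T)$ and every leaf of $T$ in $R$, and let $s'$ be the analogous smallest size for subgraphs of $G$ that are a tree plus one extra edge closing a $C_4$ (set $s'=\infty$ if none exists). I would prove $k=\min(2s-3,\ 2s'-4)$; since $2s-3$ is always odd and $2s'-4$ always even, the parity of $k$ forces which option realizes the minimum, and labeling a minimum-vertex witness via \cref{thm:optLabGossipC4} then produces the claimed labeling of size $k$ with the required support.

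The inequality $k\le\min(2s-3,\ 2s'-4)$ is immediate from \cref{thm:optLabGossipC4} applied to a minimum-vertex witness of either type. For the matching lower bound I would first prove a \emph{tree-to-ML helper}: for any tree $T$ with every leaf in $R\subseteq V(T)$, every MSL labeling of $(T,R)$ makes $T$ fully temporally connected and so uses at least $2|V(T)|-3$ labels by \cref{thm:optLabGossipC4}. The argument is sub-path extraction: for any pair $u,v\in V(T)$, pick leaves $a,b\in R$ on opposite sides of the unique $u$-$v$ path in $T$; the strictly increasing $R$-temporal path from $a$ to $b$ restricts to an increasing-label $u$-to-$v$ sub-path, and symmetrically the $b$-to-$a$ path gives $v$-to-$u$.

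Now fix a minimum MSL labeling $\lambda$ with support $H$, assuming (by pruning components disjoint from $R$ and iteratively removing degree-one non-terminals, whose incident edges cannot serve any $R$-temporal path) that $H$ is connected with every leaf in $R$. If $H$ is a tree, the helper gives $k\ge 2|V(H)|-3\ge 2s-3$ directly. If $H$ contains a $C_4$, take a spanning tree of $H$ that traverses three consecutive edges of some $C_4$ of $H$, re-add the fourth $C_4$-edge, and prune non-$R$ leaves: the result is a tree-plus-$C_4$-edge witness of size at most $|V(H)|$, and an analogous helper for this structure (same sub-path extraction using the tree part, accounting for the extra $C_4$ shortcut) gives $k\ge 2s'-4$. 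If $H$ has a cycle but no $C_4$, \cref{thm:optLabGossipC4} equates the ML-cost of $H$ with that of any spanning tree of $H$, so the helper transfers to a spanning-tree support and yields $k\ge 2s-3$ as in the tree case. Combining, if $k$ is odd then $k=2s-3$ with $s=(k+3)/2$ and the ML labeling of a minimum-vertex tree witness is the required tree-supported labeling; if $k$ is even then $k=2s'-4$ with $s'=(k+4)/2$ and the ML labeling of a minimum-vertex tree-plus-$C_4$-edge witness works.

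The main obstacle I anticipate is the cycle-but-no-$C_4$ sub-case of the lower bound, where a minimum MSL labeling may exploit label-sharing around a long cycle in ways that a naive transfer to a spanning tree could break; the fix is to rebuild the labeling on the (pruned) spanning tree from scratch via \cref{thm:optLabGossipC4}, which produces at most the same $2|V|-3$ labels, restoring the lower bound. A secondary subtlety is proving the tree-plus-$C_4$-edge helper, where the non-unique paths around the $C_4$ make sub-path extraction more delicate than in the pure tree case; I expect one has to argue separately about pairs of vertices lying on the $C_4$ and about pairs straddling the tree and the cycle.
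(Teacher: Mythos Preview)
Your tree-to-ML helper is correct and elegant: in a tree $T$ with every leaf in $R$, any $R$-temporally connecting labeling is in fact fully temporally connecting, by the sub-path extraction you describe. This cleanly handles the case where the support $H$ of a minimum labeling is already a tree, and your reformulation $k=\min(2s-3,2s'-4)$ is equivalent to the lemma once the lower bound is in place.

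The genuine gap is the lower bound when $H$ has cycles. Your argument there does not go through. In the ``cycle but no $C_4$'' case, your fix---rebuild a labeling on a pruned spanning tree $T'$ via \cref{thm:optLabGossipC4}---produces a \emph{new} labeling of size $2|V(T')|-3$, which only shows $k\le 2|V(T')|-3$; it says nothing about $k\ge 2s-3$. You have confused upper and lower bounds here. In the ``$H$ contains a $C_4$'' case, the ``analogous helper'' you invoke would have to assert that the given labeling $\lambda$ makes $H$ (or some derived tree-plus-$C_4$ subgraph $H'$) fully temporally connected; but $\lambda$ lives on $H$, not on $H'$, and for $H$ with multiple cycles the sub-path extraction argument simply fails, since temporal $R$-paths need not pass through prescribed internal vertices. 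Nothing prevents a minimum labeling from using a cyclic support where some non-terminal internal vertex is not temporally reachable from some terminal, so you cannot invoke \cref{thm:optLabGossipC4} as a lower bound on $|\lambda|$.

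This is precisely the difficulty the paper's proof is built to overcome, and it does so by an entirely different route: fixing one terminal $r^*$, extracting from $\lambda$ a minimal outgoing tree $H^+_{r^*}$ and a minimal incoming tree $H^-_{r^*}$ (each with one label per edge and leaves in $R$), and then performing a long case analysis on how edges shared with the same label between the two trees can sit inside temporal paths. The analysis either finds labels of $\lambda$ outside $\lambda^+_{r^*}\cup\lambda^-_{r^*}$ to charge against such shared edges, or modifies the two trees to reduce the sharing, eventually yielding $|E^*_{r^*}|\le x+1$ (odd case) or $|E^*_{r^*}|\le x+2$ together with a $C_4$-closing edge (even case). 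Your approach would need a comparable mechanism to rule out savings from cyclic supports; the rebuild-on-a-spanning-tree idea does not supply one.
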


The main idea for the proof of \cref{lem:MSLstructure} is as follows. Given a solution labeling $\lambda$, we fix one terminal $r^*$ and then (i)~we consider the minimum subtree in which $r^*$ can reach all other terminal vertices and (ii)~we consider the minimum subtree in which all other terminal vertices can reach $r^*$. 
Intuitively speaking, we want to label the smaller one of those subtrees using \cref{thm:optLabGossipC4} and potentially adding an extra edge to form a $C_4$; we then argue that the obtained labeling does not use more labels than $\lambda$. To do that, and to detect whether it is possible to add an edge to create a $C_4$, we make a number of modifications to the trees until we reach a point where we can show that our solution is correct.

\begin{proof}
	Assume there is a labeling $\lambda$ for $G$ that labels all edges in the subgraph $H$ of $G$. We describe a procedure to transform $H$ into a tree $T$ by removing edges from $H$ such that $T$ can be labeled with $k$ labels such that all vertices in $R$ are pairwise temporally connected.
	
	Consider a terminal vertex $r^*\in R$. Let $H^+_{r^*}$ be a minimum subgraph of $H$ and $\lambda^+_{r^*}$ a minimum sublabeling of $\lambda$ for $H^+_{r^*}$ such that $r^*$ can temporally reach all vertices in $R\setminus \{r^*\}$ in $(H^+_{r^*},\lambda^+_{r^*})$. Let us first observe that $H^+_{r^*}$ is a tree where all leafs are vertices from $R$ and $\lambda^+_{r^*}$ assigns exactly one label to every edge in $H^+_{r^*}$.
	
	First note that all vertices in $(H^+_{r^*},\lambda^+_{r^*})$ are temporally reachable from $r^*$. If a vertex is not reachable, we can remove it, a contradiction to the minimality of $H^+_{r^*}$. Now assume that $H^+_{r^*}$ is not a tree. Then there is a vertex $v\in V(H^+_{r^*})$ such that $v$ is temporally reachable from $r^*$ in $(H^+_{r^*},\lambda^+_{r^*})$ via two temporal paths $P,P'$ that visit different vertex sets, i.e.\ $V(P)\neq V(P')$. Assume w.l.o.g.\ that both $P$ and $P'$ are foremost among all temporal paths that visit the vertices in $V(P)$ and $V(P')$, respectively, in the same order. Let the arrival time of $P$ be at most the arrival time of $P'$. Then we can remove the last edge traversed by $P'$ with all its labels from $(H^+_{r^*},\lambda^+_{r^*})$ such that afterwards $r^*$ can still temporally reach all vertices in $R\setminus \{r^*\}$, a contradiction to the minimality of $H^+_{r^*}$. From now on, assume that $H^+_{r^*}$ is a tree. Assume that $H^+_{r^*}$ contains a leaf vertex $v$ that is not contained in $R$. Then we can remove $v$ from $(H^+_{r^*},\lambda^+_{r^*})$ such that afterwards $r^*$ can still temporally reach all vertices in $R\setminus \{r^*\}$, a contradiction to the minimality of $H^+_{r^*}$. Lastly, assume that there is an edge $e=uv$ in $H^+_{r^*}$ such that $\lambda^+_{r^*}$ assigns more than one label to $e$. Let $v$ be further away from $r^*$ than $u$ in $H^+_{r^*}$ and let $P$ be a foremost temporal path from $r^*$ to $v$ in $(H^+_{r^*},\lambda^+_{r^*})$ with arrival time $t$. Then we can remove all labels except for $t$ from $e$ and afterwards $r^*$ can still temporally reach all vertices in $R\setminus \{r^*\}$, a contradiction to the minimality of $\lambda^+_{r^*}$.
	
	Let $H^-_{r^*}$ be a minimum subgraph of $H$ and $\lambda^-_{r^*}$ a minimum sublabelling of $\lambda$ for $H^-_{r^*}$ such that each vertex in $R\setminus \{r^*\}$ can temporally reach~$r^*$ in $(H^-_{r^*},\lambda^-_{r^*})$. We can observe by analogous arguments as above that $H^-_{r^*}$ is a tree where all leafs are vertices from $R$ and $\lambda^-_{r^*}$ assigns exactly one label to every edge in $H^-_{r^*}$.
	
	We define the following sets of edges:
	\begin{itemize}
		\item The set of edges only appearing in $H^+_{r^*}$: $E^+_{r^*}=E(H^+_{r^*})\setminus E(H^-_{r^*})$.
		\item The set of edges only appearing in $H^-_{r^*}$: $E^-_{r^*}=E(H^-_{r^*})\setminus E(H^+_{r^*})$.
		\item The set of edges appearing in both $H^+_{r^*}$ and $H^-_{r^*}$: $E^{+-}_{r^*}=E(H^+_{r^*})\cap E(H^-_{r^*})$.
		\item The set of edges appearing in both $H^+_{r^*}$ and $H^-_{r^*}$ that receive the same label from $\lambda^+_{r^*}$ and $\lambda^-_{r^*}$: $E^*_{r^*} = \{e \in E^{+-}_{r^*}\mid \lambda^+_{r^*}(e) = \lambda^-_{r^*}(e)\}$.
	\end{itemize}
	
	We claim that there exists a labelling $\lambda'$ of size $k$ for $G$ such that there are two trees $H^+_{r^*}, H^-_{r^*}$ with the above described properties and $|E(H^+_{r^*})|+|E(H^-_{r^*})|-|E^*_{r^*}|=k-x$ for some $x\ge 0$ and 
	\begin{itemize}
		\item $|E^*_{r^*}|\le x+1$ if $k$ is odd, and
		\item if $k$ is even, then $|E^*_{r^*}|\le x+2$ and there exist two edges $e^+, e^-$ in $H$ that each of them, when added to $H^+_{r^*}, H^-_{r^*}$, respectively, creates a $C_4$ in $H^+_{r^*}, H^-_{r^*}$, respectively.
	\end{itemize}
	We first argue that the statement of the lemma follows from this claim. Afterwards we prove the claim.
	Assume that $|E^+_{r^*}|\le |E^-_{r^*}|$ (the case where $|E^+_{r^*}|> |E^-_{r^*}|$ is analogous). 
	
	Assume that $|E^*_{r^*}|\le x+1$.
	Then we clearly have 
	\[
	2|E(H^+_{r^*})|-1= 2|E^+_{r^*}| + 2|E^{+-}_{r^*}| -1\le |E(H^+_{r^*})|+ |E(H^-_{r^*})| -1= k-x+|E^*_{r^*}|-1\le k.
	\]
	It follows that we can temporally label $H^+_{r^*}$ with at most $k$ labels such that all vertices in $H^+_{r^*}$ can pairwise temporally reach each other, using the result that trees with $m$ edges can be temporally labeled with $2m-1$ labels (see \cref{thm:optLabGossipC4}).
	Since we assume $(G,R,k-1)$ is a \textsc{NO} instance of \MSL\ it follows that $k=2m-1$ and hence this can only happen if $k$ is odd.  
	
	Assume that $|E^*_{r^*}|\le x+2$ and there exist two edges $e^+, e^-$ in $H$ that each of them, when added to $H^+_{r^*}, H^-_{r^*}$, respectively, creates a $C_4$ in $H^+_{r^*}, H^-_{r^*}$, respectively..
	Then we clearly have 
	\[
	2|E(H^+_{r^*})\cup\{e^+\}|-4= 2|E^+_{r^*}| + 2|E^{+-}_{r^*}| -2\le |E(H^+_{r^*})|+ |E(H^-_{r^*})| -2= k-x+|E^*_{r^*}|-2\le k.
	\]
	It follows that we can temporally label $H^+_{r^*}$ together with edge $e^+$ with at most $k$ labels such that all vertices in $H^+_{r^*}$ with edge $e^+$ can pairwise temporally reach each other, using the result that graphs containing a $C_4$ with $n$ vertices can be temporally labeled with $2n-4$ labels (see \cref{thm:optLabGossipC4}). Since we assume $(G,R,k-1)$ is a \textsc{NO} instance of \MSL\ it follows that $k=2n-4$ and hence this can only happen if $k$ is even.

	Now we prove that there exists a labeling $\lambda'$ of size $k$ for $G$ such that there are two trees $H^+_{r^*}, H^-_{r^*}$ with the above described properties and $|E(H^+_{r^*})|+|E(H^-_{r^*})|-|E^*_{r^*}|=k-x$ for some $x\ge 0$ and $|E^*_{r^*}|\le x+1$.
	
	Let $H^+_{r^*}, H^-_{r^*}$ be two trees with the above described properties and $|E(H^+_{r^*})|+|E(H^-_{r^*})|-|E^*_{r^*}|=k-x$ for some $x\ge 0$. We will argue that by slightly modifying the labeling $\lambda$ (and with that $\lambda^+_{r^*}$ and $\lambda^-_{r^*}$, that way ultimately obtaining $\lambda'$) and $H^+_{r^*}, H^-_{r^*}$, we achieve that $|E(H^+_{r^*})|+|E(H^-_{r^*})|-|E^*_{r^*}|=k-x'$ for some $x'\ge 0$ and either $|E^*_{r^*}|\le x'+1$ or $|E^*_{r^*}|\le x'+2$.
	We will argue that in the former case we must have that $k$ is odd, and in the latter case we must have that $k$ is even.
	Note that if $|E^*_{r^*}| = 1$ we are done, hence assume from now on that $|E^*_{r^*}| \ge 2$.
	
	We consider several cases.
	For the sake of presentation of the next cases, define the \emph{head} of a temporal path as the last vertex visited by the path and the \emph{extended head} of a temporal path as the last two vertices visited by the path. Furthermore, define the \emph{tail} of a temporal path as the first vertex visited by the path and the \emph{extended tail} of a temporal path as the first two vertices visited by the path. 
	\medskip
	
	\noindent\textbf{Case A.} Assume there is a temporal path $P$ from $r^*$ to some $r\in R\setminus\{r^*\}$ in $H^+_{r^*}$ that traverses two edges in $E^*_{r^*}$.
	Let $e,e'\in E^*_{r^*}$ with $e\neq e'$ such that there is a temporal path $P$ from $r^*$ to some $r\in R\setminus\{r^*\}$ in $H^+_{r^*}$ that traverses w.l.o.g.\ first $e$ and then $e'$ and a \emph{maximum number} $\alpha$ of edges lies between them in $P$ and the distance $\beta$ between $r^*$ and $e$ is minimum. Note that this implies that $\lambda^+_{r^*}(e)<\lambda^+_{r^*}(e')$. 
	
	In the following we analyse several cases. In some of them we can deduce that the labeling $\lambda$ must use labels that are not present in $\lambda^+_{r^*}$ or $\lambda^-_{r^*}$ that are unique to that case. This implies that for each of these cases we can attribute one label outside of $\lambda^+_{r^*}$ and $\lambda^-_{r^*}$ to edge $e$ or $e'$.
	
	In some other cases we describe modifications that do not increase $|E(H^+_{r^*})\cup E(H^-_{r^*})|$ and either 
	\begin{itemize}
		\item strictly decrease $\beta$, or
		\item strictly decrease $\alpha$ and not increase $\beta$, or
		\item strictly decrease $|E^*_{r^*}|$ and not increase $\alpha$ and $\beta$,
	\end{itemize}
	while preserving that
	\begin{itemize}
		\item $H^+_{r^*}$ and $H^+_{r^*}$ are trees with leafs in $R$, and
		\item $\lambda^+_{r^*}$ and $\lambda^-_{r^*}$ assign at most one label per edge.
	\end{itemize} 
	Whenever a modification satisfies the above requirements it is clear that it can only be applied a finite number of times.
	Whenever we describe a case that requires modifications that do not satisfy the above requirements, we explicitly show that these modifications can only be applied a finite number of times as well.
	Overall this then shows that after a finite number of modifications, none of the described cases will apply.
	
	We partition the temporal path $P$ into the part $P_1$ from $r^*$ to $e$, the part consisting of $e$ itself, the part $P_2$ between $e$ and $e'$, the part consisting of $e'$ itself, and the part $P_3$ from $e'$ to $r$.
	Now in $H^-_{r^*}$ we can have two different scenarios.
	For illustrations of all variations of Case A see~\cref{fig:MSL-caseA--A-1-ii,fig:MSL-caseA-iii-vii,fig:MSL-caseA2}.
	
	\begin{figure}[!htb]
		\begin{subfigure}{0.48\textwidth}
			\centering
			\includegraphics[width=0.88\linewidth]{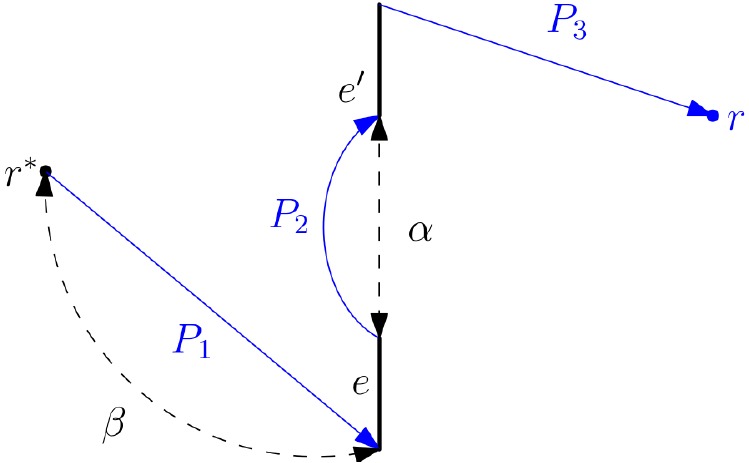}
			\caption{Case A: an example of a path $P$ from $r^*$ in $H^+_{r^*}$, that  traverses $e,e'\in E^*_{r^*}$.}
			\label{fig:MSL-CaseA}
		\end{subfigure}%
		\quad
		\begin{subfigure}{0.48\textwidth}
			\centering
			\includegraphics[width=0.88\linewidth]{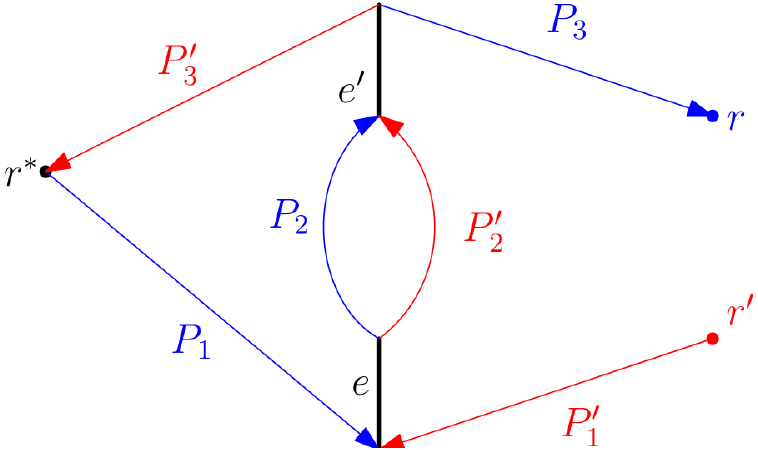}
			\caption{Case A-1: an example of $P$ in $H^+_{r^*}$ and $P'$ in $H^-_{r^*}$, that  share $e,e'\in E^*_{r^*}$.}
			\label{fig:MSL-CaseA-1}
		\end{subfigure}
		
		\begin{subfigure}{0.48\textwidth}
			\centering
			\includegraphics[width=0.88\linewidth]{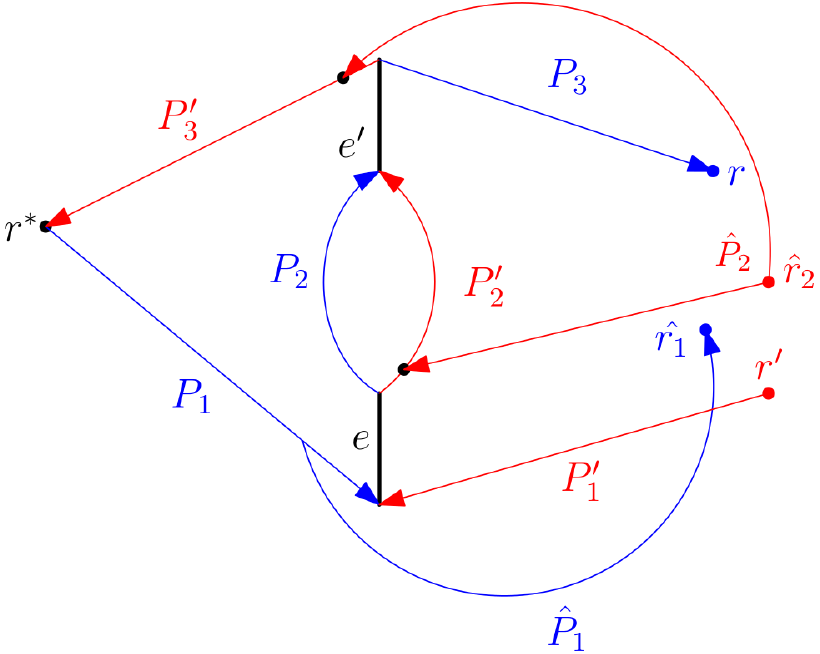}
			\caption{Case A-1-i: $P^*$ from $\hat{r}_2$ to $\hat{r}_1$ either uses no labels from $\lambda^+_{r^*}$ or no from $\lambda^-_{r^*}$.}
			\label{fig:MSL-CaseA-1-i}
		\end{subfigure}%
		\quad
		\begin{subfigure}{0.48\textwidth}
			\centering
			\includegraphics[width=0.88\linewidth]{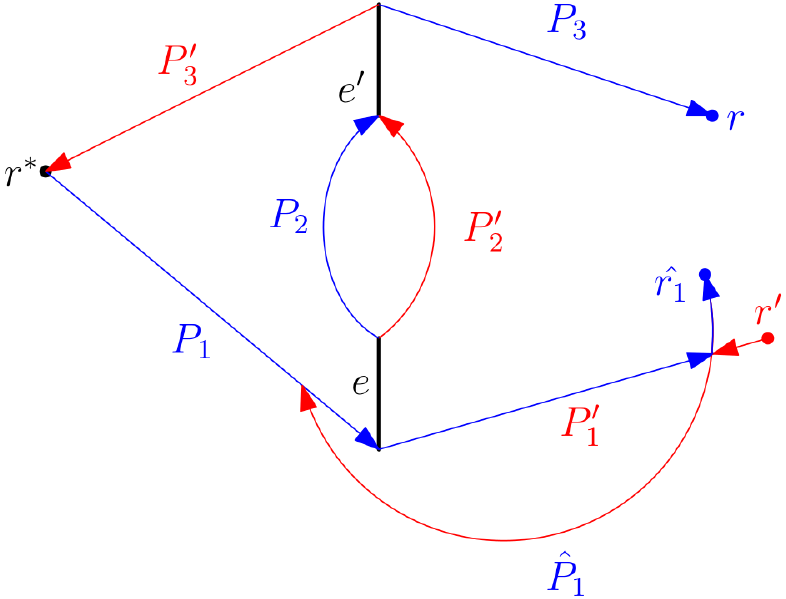}
			\caption{Modification of Case A-1-i.}
			\label{fig:MSL-CaseA-1-i_modification}
		\end{subfigure}%
		
		\begin{subfigure}{0.48\textwidth}
			\centering
			\includegraphics[width=0.88\linewidth]{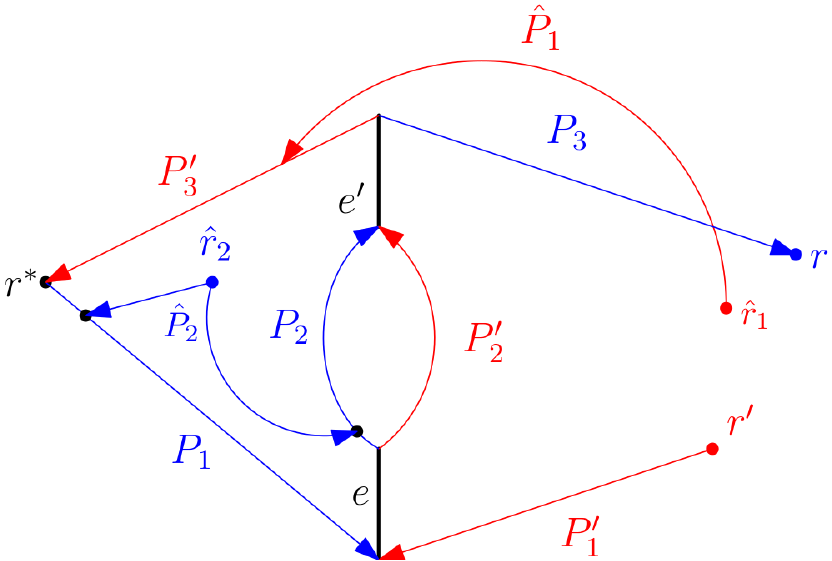}
			\caption{Case A-1-ii: $P^*$ from $\hat{r}_1$ to $\hat{r}_2$ either uses no labels from $\lambda^+_{r^*}$ or no from $\lambda^-_{r^*}$.}
			\label{fig:MSL-CaseA-1-ii}
		\end{subfigure}
		\quad
		\begin{subfigure}{0.48\textwidth}
			\centering
			\includegraphics[width=0.88\linewidth]{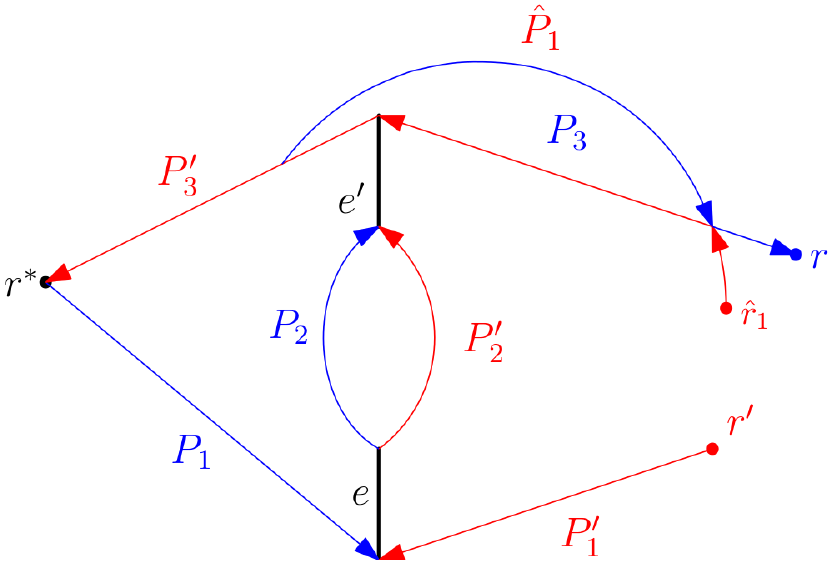}
			\caption{Modification of Case A-1-ii.}
			\label{fig:MSL-CaseA-1-ii_modification}
		\end{subfigure}
		\caption{Cases A-1 -- A-1-ii, where \emph{blue} color corresponds to the labeling $\lambda^+_{r^*}$ and \emph{red} to $\lambda^-_{r^*}$. \label{fig:MSL-caseA--A-1-ii}}
	\end{figure}
	
	\medskip
	
	\noindent\textbf{Case A-1.} There is a temporal path $P'$ from some $r'\in R\setminus\{r^*\}$ to $r^*$ in $H^-_{r^*}$ that traverses both $e$ and $e'$. Note that this implies that $e$ is traversed before $e'$.
	
	We partition the temporal path $P'$ into the part $P'_1$ from $r'$ to $e$, the part consisting of $e$ itself, the part $P'_2$ between $e$ and $e'$, the part consisting of $e'$ itself, and the part $P'_3$ from $e'$ to~$r^*$.

	The analysis of each one follows from the observation that the labels in $P_3'$ are larger than the ones in $P_1$.
	\medskip

	\noindent\textbf{Case A-1-i.}  Assume there is a path $\hat{P}_1$ in $H^+_{r^*}$ starting at a vertex that is visited by $P_1$ and ending at $\hat{r}_1\in R\setminus\{r^*\}$ such that $\hat{r}_1=r'$ or $\hat{P}_1$ and $P_1'$ intersect in a vertex.
	For our analysis, we treat these two cases the same since in both cases we can assume that $r'$ can reach $\hat{r}_1$, in the latter through the intersection point.
	If there is a path $\hat{P}_2$ in $H^-_{r^*}$ starting at some $\hat{r}_2\in R\setminus\{r^*,r'\}$ and ending at the extended tail of $P_2'$ or $P_3'$, then the temporal path $P^*$ in $(G,\lambda)$ from $\hat{r}_2$ to $\hat{r}_1$ either uses no labels from $\lambda^+_{r^*}$ or no from $\lambda^-_{r^*}$.
	
	\noindent\textbf{Case A-1-ii.}
	Assume there is a path $\hat{P}_1$ in $H^-_{r^*}$ starting at $\hat{r}_1\in R\setminus\{r^*\}$ and ending at a vertex that is visited by $P_3'$, such that  $\hat{r}_1=r$ or $\hat{P}_1$ and $P_3$ intersect in a vertex. 
	Again for our analysis, we treat these two cases the same since in both cases we can assume that $\hat{r}_1$ can reach $r$, in the latter through the intersection point.
	If there is a path $\hat{P}_2$ in $H^+_{r^*}$ starting at the extended tail of $P_1$ or $P_2$ and ending at some $\hat{r}_2\in R\setminus\{r^*,r\}$, then the temporal path $P^*$ in $(G,\lambda)$ from $\hat{r}_1$ to $\hat{r}_2$ either uses no labels from $\lambda^+_{r^*}$ or no from $\lambda^-_{r^*}$.
	
	Assume that one of the above two applies. 
	We assume that there is no path $\hat{P}_2$ in $H^-_{r^*}$ starting at some $\hat{r}_2\in R\setminus\{r^*,r'\}$ and ending at the extended tail of $P_2'$ or $P_3'$ in Case A-1-i and that there is no  path $\hat{P}_2$ in $H^+_{r^*}$ starting at the extended tail of $P_1$ or $P_2$ and ending at some $\hat{r}_2\in R\setminus\{r^*,r\}$, since in both cases we can directly deduce that we need labels outside of $\lambda^+_{r^*}$ and $\lambda^-_{r^*}$.
	Then we modify $\lambda$ in the following way without changing its connectivity properties. 
	First, we scale all labels in $\lambda$ by a factor of~$|V|$.
	
	The idea is first to essentially switch the roles of $P'_1$ and $\hat{P}_1$ in Case A-1-i and switch the roles of $P_3$ and $\hat{P}_1$ in Case A-1-ii. Assume Case A-1-i applies.
	\begin{itemize}
		\item We remove $\hat{P}_1$'s edges and labels from $H^+_{r^*}$ and  $\lambda^+_{r^*}$, respectively, add $\hat{P}_1$'s edges to $H^-_{r^*}$. Add the edges between the (original) tail of $\hat{P}_1$ to $e$ to $H^-_{r^*}$ and add the respective labels for those edges from $\lambda^+_{r^*}$ also to $\lambda^-_{r^*}$. Add new labels for the edges of $\hat{P}_1$ to $\lambda^-_{r^*}$ such that there is temporal paths from $r'$ to $r^*$ that does use edges from $P_1'$.
		\item We remove $P_1'$'s edges and labels from $H^-_{r^*}$ and  $\lambda^-_{r^*}$, respectively, add $P_1'$'s edges to $H^+_{r^*}$, and add new labels for the edges of $P_1'$ to $\lambda^+_{r^*}$ such that there is a temporal path from $r^*$ to $r'$.
	\end{itemize}
	Now assume Case A-1-ii applies. We make analogous modifications.
	\begin{itemize}
		\item We remove $\hat{P}_1$'s edges and labels from $H^-_{r^*}$ and  $\lambda^-_{r^*}$, respectively, add $\hat{P}_1$'s edges to $H^+_{r^*}$. Add the edges from the head of $\hat{P}_1$ to $e'$ to $H^+_{r^*}$ and add the respective labels for those edges from $\lambda^-_{r^*}$ also to $\lambda^+_{r^*}$. Add new labels for the edges of $\hat{P}_1$ to $\lambda^+_{r^*}$ such that there is temporal paths from $r^*$ to $r$ that does use edges from $P_3$.
		\item We remove $P_3$'s edges and labels from $H^+_{r^*}$ and  $\lambda^+_{r^*}$, respectively, add $P_3$'s edges to $H^-_{r^*}$, and add new labels for the edges of $P_3$ to $\lambda^-_{r^*}$ such that there are temporal paths from $r$ to $r^*$.
	\end{itemize}
	Note that after the modifications $H^+_{r^*}$ and $H^-_{r^*}$ are still trees, and $\lambda^+_{r^*}$ and $\lambda^-_{r^*}$ still assign at most one label per edge. Furthermore, we have that the modification do not increase the sum of edges in both trees $|E(H^+_{r^*})\cup E(H^-_{r^*})|$. Note that these modifications potentially increase $|E^*_{r^*}|$ and $\alpha$. However, note that in both cases we strictly decrease $\beta$. From now on assume that Cases A-1-i and A-1-ii do not apply.
	
	\medskip
	
	We start with three further subcases. The analysis of each one follows from the observation that the labels in $P_3'$ are larger than the ones in $P_1$.
	
	\noindent\textbf{Case A-1-iii.} Assume there is a path $\hat{P}$ in $H^+_{r^*}$ starting at a vertex that is visited by $P_1$ but is different from its tail and extended head and ending at some $\hat{r}\in R\setminus\{r^*,r\}$. Then
	the temporal path $P^*$ in $(G,\lambda)$ from $r'$ to $\hat{r}$ needs at least one label that is not contained in $\lambda^+_{r^*}$ or $\lambda^-_{r^*}$. 
	More specifically, $P^*$ either uses no labels from $\lambda^+_{r^*}$ or no from $\lambda^-_{r^*}$.
	
	\noindent\textbf{Case A-1-iv.} Assume there is a path $\hat{P}$ in $H^-_{r^*}$ starting at some $\hat{r}\in R\setminus\{r^*,r'\}$ and ending at a vertex that is visited by $P'_3$ but is different from its extended tail and head. Then
	the temporal path $P^*$ in $(G,\lambda)$ from $\hat{r}$ to $r$ needs at least one label that is not contained in $\lambda^+_{r^*}$ or $\lambda^-_{r^*}$. 
	More specifically, $P^*$ either uses no labels from $\lambda^+_{r^*}$ or no from $\lambda^-_{r^*}$.
	
	\noindent\textbf{Case A-1-v.} Assume there is a path $\hat{P}_1$ in $H^+_{r^*}$ starting at a vertex that is visited by $P_2$ but is different from its tail and extended head and ending at some $\hat{r}_1\in R\setminus\{r^*,r\}$. Furthermore, assume there is a path $\hat{P}_2$ in $H^-_{r^*}$ starting at some $\hat{r}_2\in R\setminus\{r^*,r'\}$ and ending at a vertex that is visited by $P'_2$ but is different from its extended tail and head. Then, if $\hat{r}_2\neq \hat{r}_1$ and $P_2\neq P_2'$, or the starting vertex of $\hat{P}_1$ is by at least two edges closer to $e$ than the starting vertex of $\hat{P}_2$, the temporal path $P^*$ in $(G,\lambda)$ from $\hat{r}_2$ to $\hat{r}_1$ needs at least one label that is not contained in $\lambda^+_{r^*}$ or $\lambda^-_{r^*}$. 
	More specifically, $P^*$ either uses no labels from $\lambda^+_{r^*}$ or no from $\lambda^-_{r^*}$.
	
	\begin{figure}[!htbp]
		\centering
		\begin{subfigure}{0.48\textwidth}
			\centering
			\includegraphics[width=0.88\linewidth]{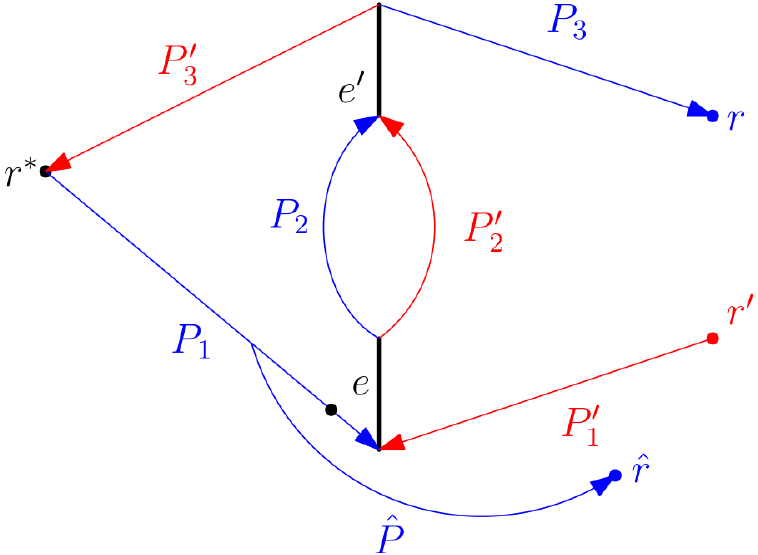}
			\caption{Case A-1-iii: $P^*$ from $r'$ to $\hat{r}$ either uses no labels from $\lambda^+_{r^*}$ or no from $\lambda^-_{r^*}$.}
			\label{fig:MSL-CaseA-1-iii}
		\end{subfigure}%
		\quad
		\begin{subfigure}{0.48\textwidth}
			\centering
			\includegraphics[width=0.88\linewidth]{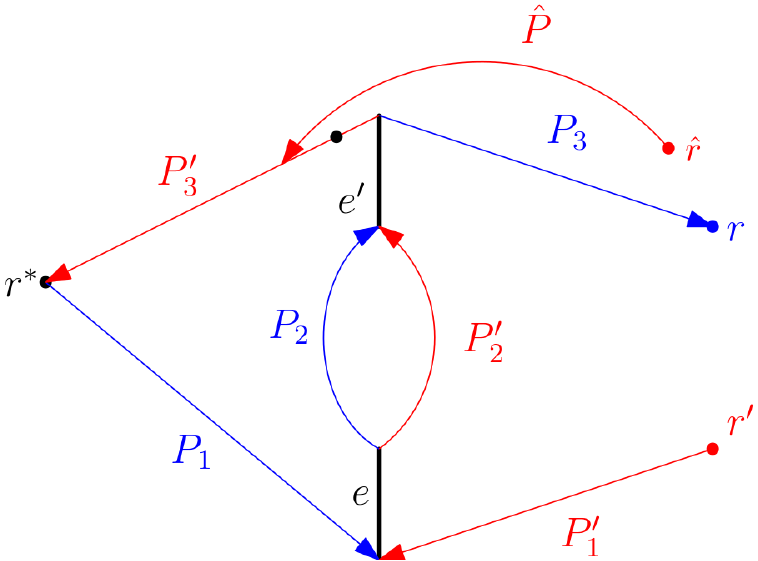}
			\caption{Case A-1-iv: $P^*$ from $\hat{r}$ to $r$ either uses no labels from $\lambda^+_{r^*}$ or no from $\lambda^-_{r^*}$.}
			\label{fig:MSL-CaseA-1-iv}
		\end{subfigure}
		
		\begin{subfigure}{0.48\textwidth}
			\centering
			\includegraphics[width=0.88\linewidth]{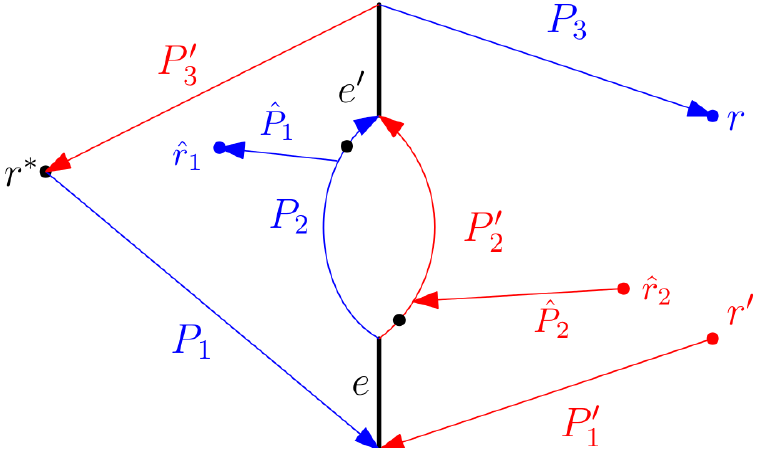}
			\caption{Case A-1-v: $P^*$ from $\hat{r}_2$ to $\hat{r}_1$ either uses no labels from $\lambda^+_{r^*}$ or no from $\lambda^-_{r^*}$.}
			\label{fig:MSL-CaseA-1-v}
		\end{subfigure}%
		\quad
		\begin{subfigure}{0.48\textwidth}
			\centering
			\includegraphics[width=0.88\linewidth]{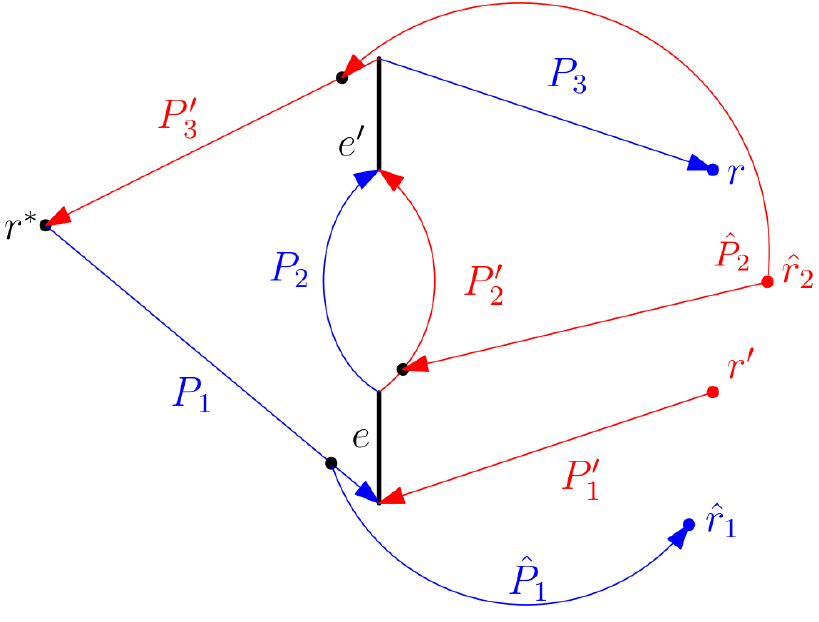}
			\caption{Case A-1-vi: $P^*$ from $\hat{r}_2$ to $\hat{r}_1$ either uses no labels from $\lambda^+_{r^*}$ or no from $\lambda^-_{r^*}$.}
			\label{fig:MSL-CaseA-1-vi}
		\end{subfigure}%
		
		\begin{subfigure}{0.48\textwidth}
			\centering
			\includegraphics[width=0.88\linewidth]{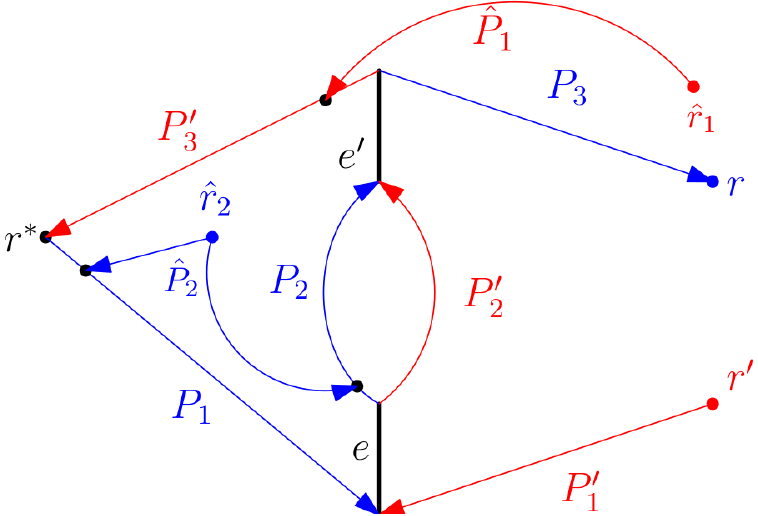}
			\caption{Case A-1-vii: $P^*$ from $\hat{r}_1$ to $\hat{r}_2$ either uses no labels from $\lambda^+_{r^*}$ or no from $\lambda^-_{r^*}$.}
			\label{fig:MSL-CaseA-1-vii}
		\end{subfigure}
		\quad
		\begin{subfigure}{0.48\textwidth}
			\centering
			\includegraphics[width=0.88\linewidth]{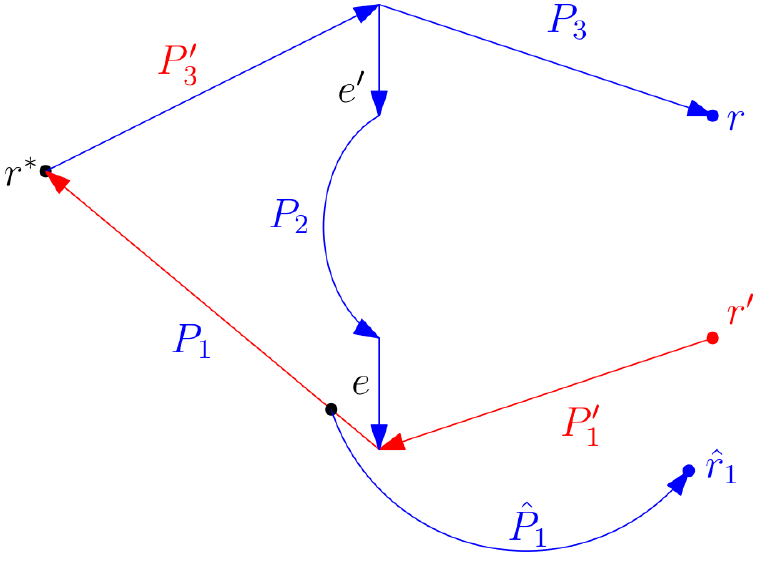}
			\caption{Modification of Case A-1-vi.}
			\label{fig:MSL-CaseA-1-vi_modification}
		\end{subfigure}%
		
		\begin{subfigure}{0.48\textwidth}
			\centering
			\includegraphics[width=0.88\linewidth]{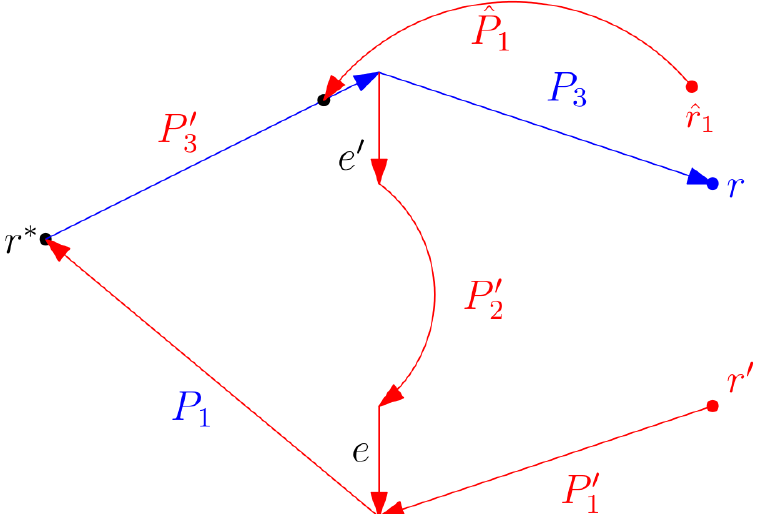}
			\caption{Modification of Case A-1-vii.}
			\label{fig:MSL-CaseA-1-vii_modification}
		\end{subfigure}
		\quad
		\begin{subfigure}{0.48\textwidth}
			\centering
			\includegraphics[width=0.88\linewidth]{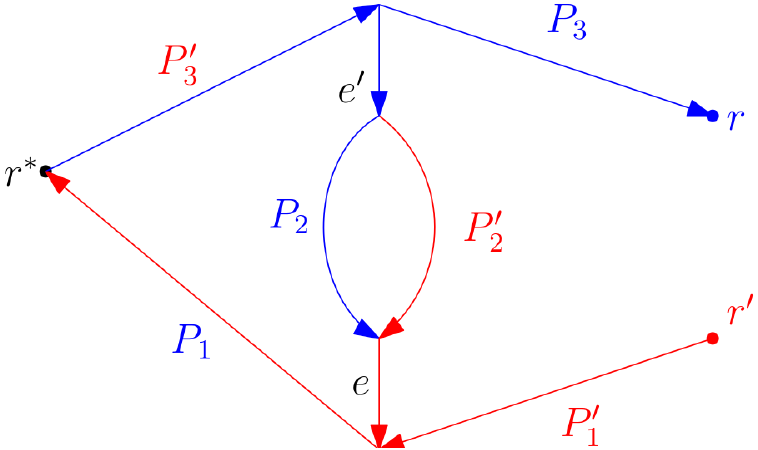}
			\caption{Modification when none of the cases A-1-vi nor A-1-vii apply.}
			\label{fig:MSL-caseA-1_modification}
		\end{subfigure}
		
		\caption{Cases A-1-iii -- A-1-vii, where \emph{blue} color corresponds to the labeling $\lambda^+_{r^*}$ and \emph{red} to $\lambda^-_{r^*}$. \label{fig:MSL-caseA-iii-vii}}
	\end{figure}
	
	\medskip
	
	In the above three Cases A-1-iii to A-1-v we do not make any modifications, since we can directly deduce that we need labels outside of $\lambda^+_{r^*}$ and $\lambda^-_{r^*}$.
	For the remainder of this case distinction, we assume that Cases A-1-iii to A-1-v do not apply.
	
	We can further observe the following using analogous arguments as above.
	
	\noindent\textbf{Case A-1-vi.}  Assume there is a path $\hat{P}_1$ in $H^+_{r^*}$ starting at the extended head of $P_1$ and ending at some $\hat{r}_1\in R\setminus\{r^*,r,r'\}$. If there is a path $\hat{P}_2$ in $H^-_{r^*}$ starting at some $\hat{r}_2\in R\setminus\{r^*,r'\}$ and ending at a vertex from $P_2'$ that is not its tail or a vertex from $P_3'$, then, if $\hat{r}_2\neq \hat{r}_1$, the temporal path $P^*$ in $(G,\lambda)$ from $\hat{r}_2$ to $\hat{r}_1$ either uses no labels from $\lambda^+_{r^*}$ or no from $\lambda^-_{r^*}$.
	
	\noindent\textbf{Case A-1-vii.}
	Assume there is a path $\hat{P}_1$ in $H^-_{r^*}$ starting at some $\hat{r}_1\in R\setminus\{r^*,r,r'\}$ and ending at the extended tail of $P_3'$. If there is a $\hat{P}_2$ in $H^+_{r^*}$ starting at a vertex from $P_1$ or a vertex from $P_2$ that is not its head and ending at some $\hat{r}_2\in R\setminus\{r^*,r\}$, then, if $\hat{r}_1\neq\hat{r}_2$, the temporal path $P^*$ in $(G,\lambda)$ from $\hat{r}_1$ to $\hat{r}_2$ either uses no labels from $\lambda^+_{r^*}$ or no from $\lambda^-_{r^*}$.
		
	\medskip
	
	First, assume that Case A-1-vi or Case A-1-vii or none of them apply. 
	Then we modify $\lambda$ in the following way without changing its connectivity properties. 
	First, we scale all labels in $\lambda$ by a factor of~$|V|$.
		
	The idea is first to essentially switch the roles of $P_1$ and $P_3'$.
	\begin{itemize}
		\item We remove $P_1$'s edges and labels from $H^+_{r^*}$ and  $\lambda^+_{r^*}$, respectively, add $P_1$'s edges to $H^-_{r^*}$, and add new labels for the edges of $P_1$ to $\lambda^-_{r^*}$ such that there are temporal paths from both endpoints of $e$ to $r^*$ that only use the new labels.
		\item We remove $P_3'$'s edges and labels from $H^-_{r^*}$ and  $\lambda^-_{r^*}$, respectively, add $P_3'$'s edges to $H^+_{r^*}$, and add new labels for the edges of $P_3'$ to $\lambda^+_{r^*}$ such that there are temporal paths from $r^*$ to both endpoints of $e$ that only use the new labels.
	\end{itemize}
	In both modification above, we assume w.l.o.g.\ that the smallest and the largest label assigned to an edge of $P_1$ by $\lambda^+_{r^*}$ before the modification are equal the smallest and the largest label, respectively, assigned to an edge of $P_3'$ by $\lambda^+_{r^*}$ after the modification. Symmetrically, we assume w.l.o.g.\ that the smallest and the largest label assigned to an edge of $P_3'$ by $\lambda^-_{r^*}$ before the modification are equal the smallest and the largest label, respectively, assigned to an edge of $P_1$ by $\lambda^-_{r^*}$ after the modification. 
	Note that now there is a path from $r^*$ to $r$ in $(H^+_{r^*},\lambda^+_{r^*})$ that does not use edges $e$ and $e'$. Furthermore, there is a path from $r'$ to $r^*$ in $(H^-_{r^*},\lambda^-_{r^*})$ that does not use edges $e$ and $e'$.
	
	Now we have to adjust labels on $e$, $e'$, $P_2$, and $P_2'$, depending on whether Case A-1-vi, Case A-1-vii or none of them apply.
	\begin{itemize}
		\item If Case A-1-vi applies, then we remove $e$, $e'$, and the edges of $P_2'$ and their labels from $H^-_{r^*}$ and  $\lambda^-_{r^*}$, respectively. Furthermore, we exchange the labels of $e$ and $e'$ and the edges of $P_2$ assigned by $\lambda^+_{r^*}$ in a way that there is a temporal path from $r^*$ to $\hat{r}_1$ (see Case A-1-vi) in $(H^+_{r^*},\lambda^+_{r^*})$.
		\item If Case A-1-vii applies, then we remove $e$, $e'$, and the edges of $P_2$ and their labels from $H^+_{r^*}$ and  $\lambda^+_{r^*}$, respectively. Furthermore, we exchange the labels of $e$ and $e'$ and the edges of $P_2'$ assigned by $\lambda^-_{r^*}$ in a way that there is a temporal path from $\hat{r}_1$ (see Case A-1-vii) to $r^*$ in $(H^-_{r^*},\lambda^-_{r^*})$.
		\item If none of the Cases A-1-vi and A-1-vii apply, then we remove $e$ its labels from $H^+_{r^*}$ and  $\lambda^+_{r^*}$, respectively, and we remove $e'$ its labels from $H^-_{r^*}$ and  $\lambda^-_{r^*}$, respectively.
		We modify the labels of $P_2$ assigned by $\lambda^+_{r^*}$ is a way that all terminals that were reachable from $r^*$ before the modifications can now be reached via $e'$.
		We modify the labels of $P_2'$ assigned by $\lambda^-_{r^*}$ is a way that all terminals that could reach $r^*$ before the modifications can now reach $r^*$ via $e$.
	\end{itemize}
	Note that after the modifications $H^+_{r^*}$ and $H^-_{r^*}$ are still trees, and $\lambda^+_{r^*}$ and $\lambda^-_{r^*}$ still assign at most one label per edge. Furthermore, we have that the modification do not increase the sum of edges in both trees $|E(H^+_{r^*})\cup E(H^-_{r^*})|$. Lastly, and most importantly, we have that at least one of $H^+_{r^*}$ and $H^-_{r^*}$ does contain both edges $e$ and $e'$. It follows that we strictly decrease $|E^*_{r^*}|$ without increasing $\alpha$.
	
	It follows that after exhaustively performing the above modifications we have that if Case A-1 applies, then one of the Cases A-1-iii to A-1-v has to apply.
	
	\medskip
	
	\noindent\textbf{Case A-2.} There are two temporal paths $P',P''$ from some $r',r''\in R\setminus\{r^*\}$, respectively, to $r^*$ in $H^-_{r^*}$ such that $P'$ traverses $e$ and $P''$ traverses $e'$. We consider several different subcases. Let $e=uv$ and let $u$ be the vertex that is closer to $r^*$ in $H^+_{r^*}$. Partition $P'$ into $P_1'$ from $r'$ to $e$, then $e$, and then $P_2'$ from $e$ to $r^*$.
	
	\begin{figure}[!htbp]
		\centering
		\begin{subfigure}{0.48\textwidth}
			\centering
			\includegraphics[width=0.88\linewidth]{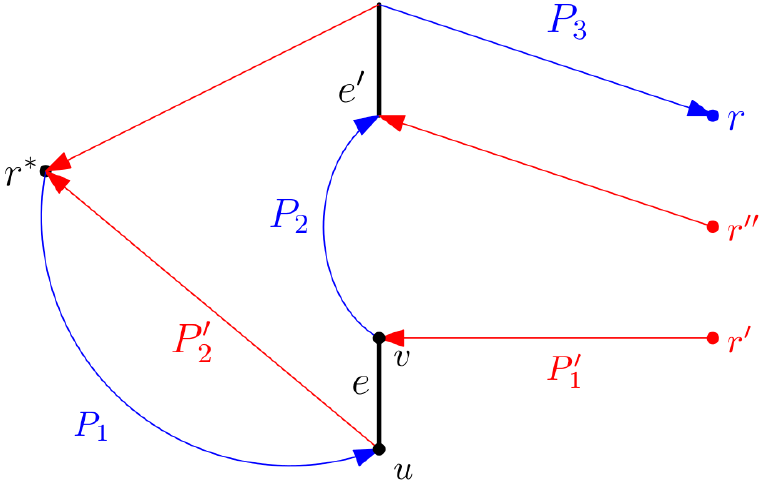}
			\caption{Example of Case A-2-i.}
			\label{fig:MSL-CaseA-2-i}
		\end{subfigure}%
		\quad
		\begin{subfigure}{0.48\textwidth}
			\centering
			\includegraphics[width=0.88\linewidth]{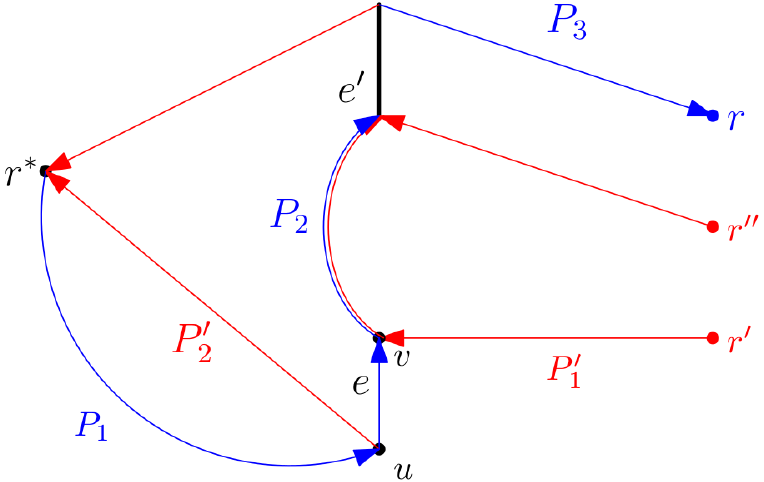}
			\caption{Modification of Case A-2-i.}
			\label{fig:MSL-CaseA-2-i_modification}
		\end{subfigure}
		
		\begin{subfigure}{0.48\textwidth}
			\centering
			\includegraphics[width=0.88\linewidth]{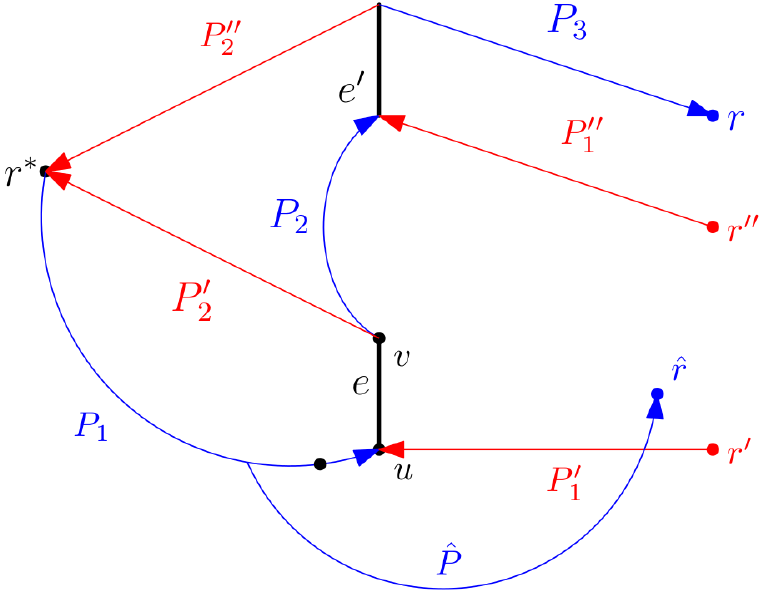}
			\caption{Example of Case A-2-ii.}
			\label{fig:MSL-CaseA-2-ii}
		\end{subfigure}%
		\quad
		\begin{subfigure}{0.48\textwidth}
			\centering
			\includegraphics[width=0.88\linewidth]{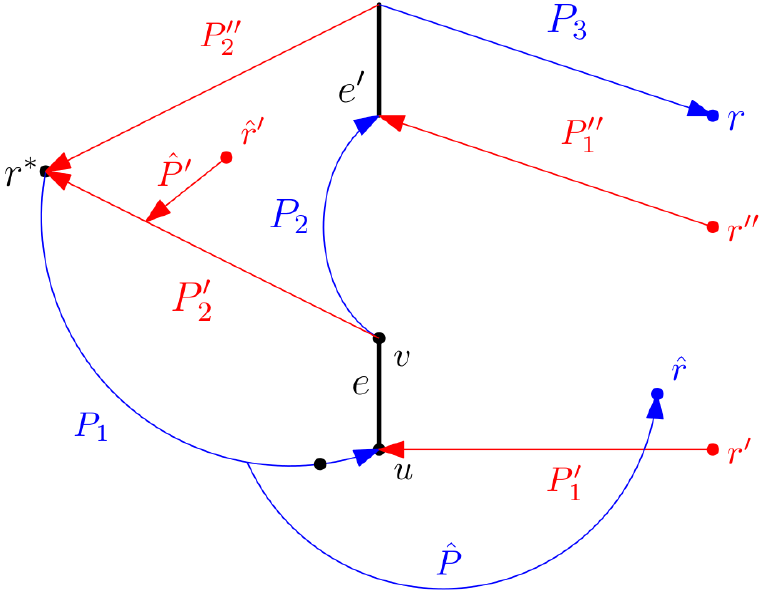}
			\caption{Case A-2-ii(a): $P^*$ from $\hat{r}'$ to $r'$ either uses no labels from $\lambda^+_{r^*}$ or no from $\lambda^-_{r^*}$.
				\label{fig:MSL-CaseA-2-ii-a}}
		\end{subfigure}%
		
		\begin{subfigure}{0.48\textwidth}
			\centering
			\includegraphics[width=0.88\linewidth]{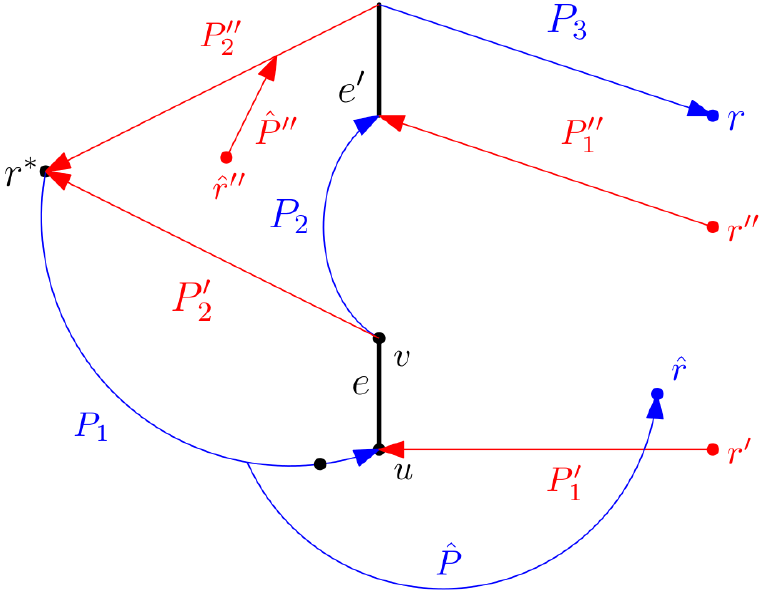}
			\caption{Case A-2-ii(b): $P^*$ from $\hat{r}''$ to $r'$ either uses no labels from $\lambda^+_{r^*}$ or no from $\lambda^-_{r^*}$.}
			\label{fig:MSL-CaseA-2-ii-b}
		\end{subfigure}%
		\quad
		\begin{subfigure}{0.48\textwidth}
			\centering
			\includegraphics[width=0.88\linewidth]{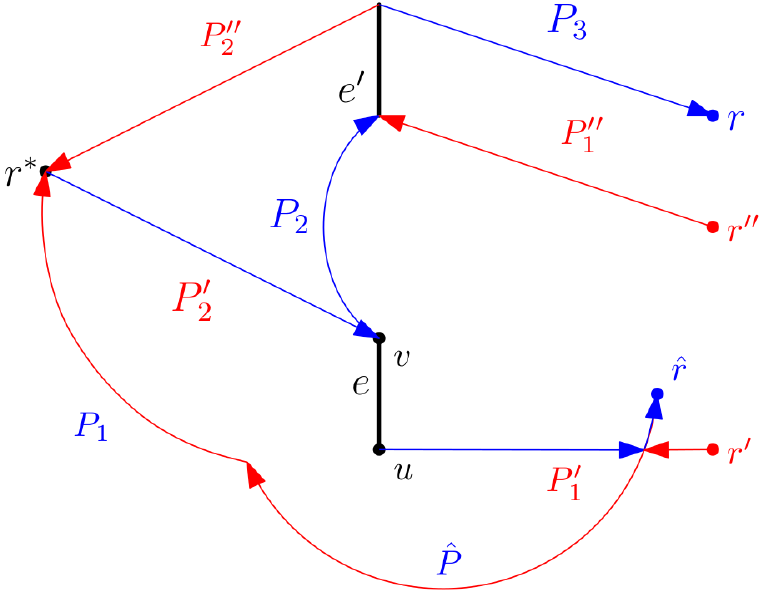}
			\caption{Modification of Case A-2-ii when A-2-ii(a) and A-2-ii(b) do not apply.}
			\label{fig:MSL-CaseA-2-ii_modification}
		\end{subfigure}%
		
		\begin{subfigure}{0.48\textwidth}
			\centering
			\includegraphics[width=0.88\linewidth]{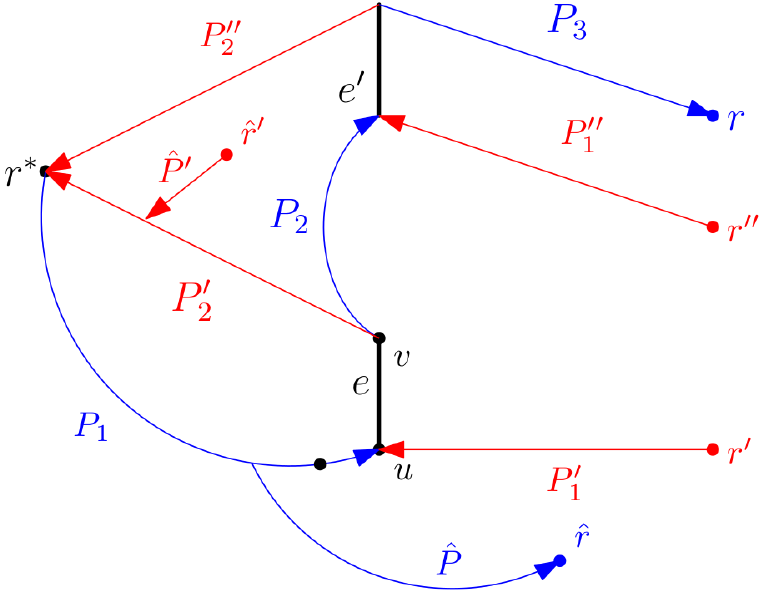}
			\caption{Case A-2-iii: $P^*$ from $\hat{r}'$ to $r$ either uses no labels from $\lambda^+_{r^*}$ or no from $\lambda^-_{r^*}$.}
			\label{fig:MSL-CaseA-2-iii}
		\end{subfigure}%
		\quad
		\begin{subfigure}{0.48\textwidth}
			\centering
			\includegraphics[width=0.88\linewidth]{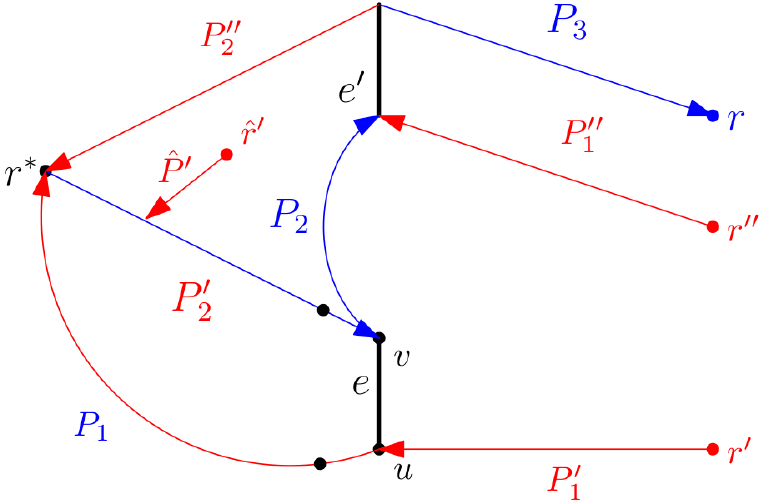}
			\caption{Modification of Case A-2-iii.}
			\label{fig:MSL-CaseA-2-iii_modification}
		\end{subfigure}%
		
		\caption{Cases A-2-i -- A-2-iii, where \emph{blue} color corresponds to the labeling $\lambda^+_{r^*}$ and \emph{red} to $\lambda^-_{r^*}$. \label{fig:MSL-caseA2}
		}
	\end{figure}
	
	\noindent\textbf{Case A-2-i.}  Assume the head of $P'_1$ is $v$. 
	
	We remove $e$ and its labels from $H^-_{r^*}$ and  $\lambda^-_{r^*}$, respectively. To obtain a new path in $(H^-_{r^*},\lambda^-_{r^*})$, we traverse $P'_1$, then traverse $P_2$ (by modifying $\lambda^-_{r^*}$ on $P'_1$ accordingly) which lets us reach $P''$ and then we traverse $P''$ to reach~$r^*$.
	
	Note that after the modifications $H^-_{r^*}$ is still a tree and $\lambda^-_{r^*}$ still assign at most one label per edge. However, the size of $E^*_{r^*}$ changes, in particular it can increase, but the maximal number $\alpha$ of edges between two edges from $E^*_{r^*}$ in $P$ decreases by one.
	
	\noindent\textbf{Case A-2-ii.} Assume the head of $P'_1$ is $u$. Assume there is a path $\hat{P}$ in $H^+_{r^*}$ starting at a vertex that is visited by $P_1$ but is different from its tail and extended head and ending at some $\hat{r}\in R\setminus\{r^*,r\}$, such that $\hat{r}=r'$ or $\hat{P}$ and $P'_1$ intersect in a vertex.
	For our analysis, we treat these two cases the same since in both cases we can assume that $r'$ can reach $\hat{r}$, in the latter through the intersection point. 
	
	\noindent\textbf{Case A-2-ii(a).}
	Furthermore, assume there is a path $\hat{P}'$ in $H^-_{r^*}$ starting at some $\hat{r}'\in R\setminus\{r^*,r'\}$ and ending at a vertex that is visited by $P'_2$. 
	Then the temporal path $P^*$ in $(G,\lambda)$ from $\hat{r}'$ to $r'$ either uses no labels from $\lambda^+_{r^*}$ or no from $\lambda^-_{r^*}$.
	
	\noindent\textbf{Case A-2-ii(b).}
	Furthermore, assume there is a path $\hat{P}''$ in $H^-_{r^*}$ starting at some $\hat{r}''\in R\setminus\{r^*,r'\}$ and ending at a vertex that is visited by $P''_2$. 
	Then the temporal path $P^*$ in $(G,\lambda)$ from $\hat{r}''$ to $r'$ either uses no labels from $\lambda^+_{r^*}$ or no from $\lambda^-_{r^*}$.
	
	Assume that Cases A-2-ii(a) and~(b) do not apply. Then we modify $\lambda$ in the following way without changing its connectivity properties.
	First, we scale all labels in $\lambda$ by a factor of~$|V|$.
	
	The idea is to essentially switch the roles of $\hat{P}$ and $P_2'$.
	\begin{itemize}
		\item We remove $P_1$'s and $\hat{P}$'s edges and labels from $H^+_{r^*}$ and  $\lambda^+_{r^*}$, respectively, add $\hat{P}$'s edges to $H^-_{r^*}$. Add the edges from the tail of $\hat{P}$ to $r^*$ to $H^-_{r^*}$ and add labels for those edges to $\lambda^-_{r^*}$ such that there is a path from $r'$ to $r^*$ in $(H^-_{r^*},\lambda^-_{r^*})$ that uses the newly added labels. 
		\item We remove $P_1'$'s and $P_2'$'s edges and labels from $H^-_{r^*}$ and  $\lambda^-_{r^*}$, respectively, add $P_1'$'s and $P_2'$'s edges to $H^+_{r^*}$, and add new labels for the edges of $P_1'$ and $P_2'$ to $\lambda^+_{r^*}$ such that there is temporal path from $r^*$ to $\hat{r}$ in $(H^+_{r^*},\lambda^+_{r^*})$.
	\end{itemize}
	Note that after the modifications $H^+_{r^*}$ and $H^-_{r^*}$ are still trees, and $\lambda^+_{r^*}$ and $\lambda^-_{r^*}$ still assign at most one label per edge. Furthermore, we have that the modification do not increase the sum of edges in both trees $|E(H^+_{r^*})\cup E(H^-_{r^*})|$. Lastly, and most importantly, we have that the path from $r^*$ to $r$ in $H^+_{r^*}$ does not contain both edges $e$ and $e'$. It follows that we decreased $\alpha$.
	
	\noindent\textbf{Case A-2-iii.}  Assume the head of $P'_1$ is $u$. Assume there is a path $\hat{P}$ in $H^+_{r^*}$ starting at a vertex that is visited by $P_1$ but is different from its tail and extended head and ending at some $\hat{r}\in R\setminus\{r^*,r,r'\}$. Then the temporal path $P^*$ in $(G,\lambda)$ from $r'$ to $\hat{r}$ 
	either uses no labels from $\lambda^+_{r^*}$ or no from $\lambda^-_{r^*}$.
	Furthermore, assume there is a path $\hat{P}'$ in $H^-_{r^*}$ starting at some $\hat{r}'\in R\setminus\{r^*,r'\}$ and ending at a vertex that is visited by $P'_2$ but is different from its extended tail and head. Then the temporal path $P^*$ in $(G,\lambda)$ from $\hat{r}'$ to $r$ 
	either uses no labels from $\lambda^+_{r^*}$ or no from $\lambda^-_{r^*}$.
	
	We again modify $\lambda$ in a way that does not change its connectivity properties. 
	First, we scale all labels in $\lambda$ by a factor of $|V|$.
	We essentially switch the roles of $P_1$ and~$P_2'$.
	
	We remove $P_1$'s edges and labels from $H^+_{r^*}$ and  $\lambda^+_{r^*}$, respectively, add $P_1$'s edges to $H^-_{r^*}$, and add new labels for the edges of $P_1$ to $\lambda^-_{r^*}$ such that there are temporal paths from both endpoints of $e$ to $r^*$ that only use the new labels.
	We remove $P_2'$'s edges and labels from $H^-_{r^*}$ and  $\lambda^-_{r^*}$, respectively, add $P_2'$'s edges to $H^+_{r^*}$, and add new labels for the edges of $P_2'$ to $\lambda^+_{r^*}$ such that there are temporal paths from $r^*$ to both endpoints of $e$ that only use the new labels.
	
	Note that now there is a path from $\hat{r}'$ to $r^*$ in $(H^-_{r^*},\lambda^-_{r^*})$ that does not use edge $e$.
	Further note that after the modifications $H^+_{r^*}$ and $H^-_{r^*}$ are still trees, and $\lambda^+_{r^*}$ and $\lambda^-_{r^*}$ still assign at most one label per edge. Furthermore, we have that the modification do not increase the sum of edges in both trees $|E(H^+_{r^*})\cup E(H^-_{r^*})|$.
	It follows that we strictly decrease $|E^*_{r^*}|$ without increasing $\alpha$.
	
	\medskip
	
	Now consider the case where we have a temporal path $P$ from some $r\in R\setminus\{r^*\}$ to $r^*$ in $H^-_{r^*}$ that traverses both $e$ and $e'$ and two temporal paths $P_1,P_2$ from $r^*$ to some $r_1,r_2\in R\setminus\{r^*\}$, respectively, in $H^+_{r^*}$ such that $P_1$ traverses $e$ and $P_2$ traverses $e'$. This case is analogous to the previously discussed case.
	
	From now on we assume that Case A-2 does not apply.
	
	\medskip
	
	\noindent\textbf{Case B.} From now on we assume that none of the above described cases apply. This means that there is no path from $r^*$ to some $r\in R\setminus\{r^*\}$ in $H^+_{r^*}$ that traverses both $e$ and $e'$ and there is no path from some $r'\in R\setminus\{r^*\}$ to $r^*$ in $H^-_{r^*}$ that traverses both $e$ and $e'$. It follows that for every $e\in E^*_{r^*}$ we have a path in $H^+_{r^*}$ from $r^*$ to some $r\in R\setminus\{r^*\}$ that only traverses $e$ from the edges in $E^*_{r^*}$ and we have a path in $H^-_{r^*}$ from some $r'\in R\setminus\{r^*\}$ to $r^*$ that only traverses $e$ from the edges in $E^*_{r^*}$.
	All the following cases are illustrated in~\cref{fig:MSL-CaseB-all}.
	
	\begin{figure}[!htb]
		\centering
		\begin{subfigure}{0.49\textwidth}
			\centering
			\includegraphics[width=0.88\linewidth]{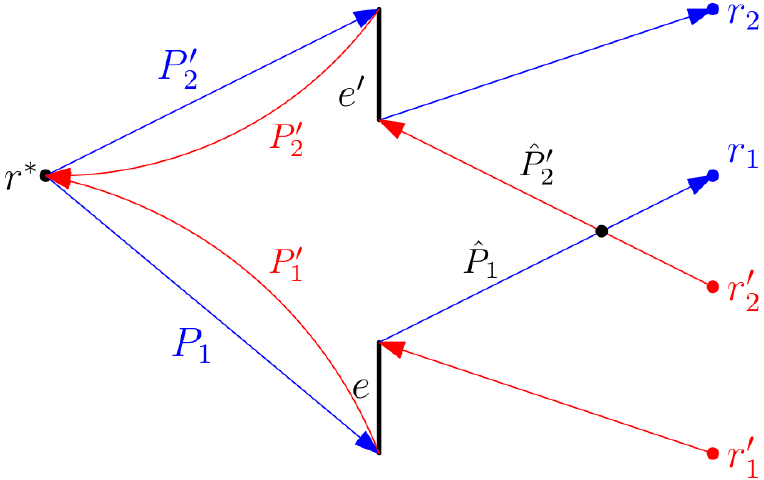}
			\caption{Case B-1-i: $|\hat{P}_1|\le |\hat{P}_2'|$ and $|\hat{P}_1|+ |\hat{P}_2'|\ge 3$ .}
			\label{fig:MSL-CaseB-1-i}
		\end{subfigure}%
		\quad
		\begin{subfigure}{0.48\textwidth}
			\centering
			\includegraphics[width=0.88\linewidth]{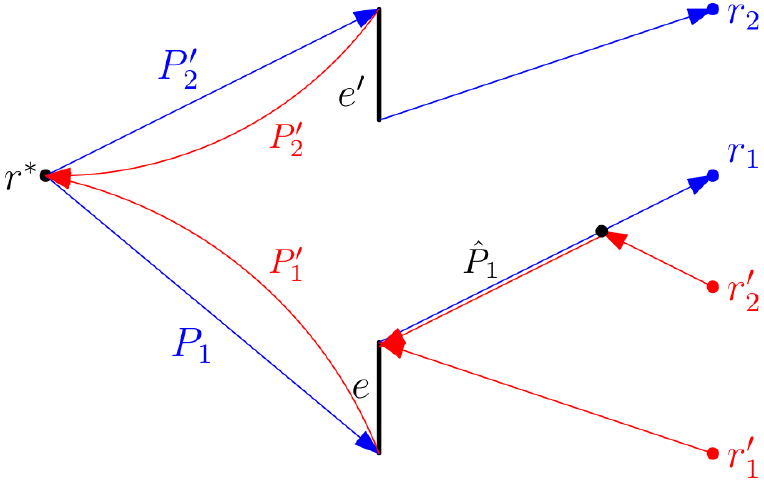}
			\caption{Modification of Case B-1-i.}
			\label{fig:CaseB-1-i_modification}
		\end{subfigure}
		
		\begin{subfigure}{0.48\textwidth}
			\centering
			\includegraphics[width=0.75\linewidth]{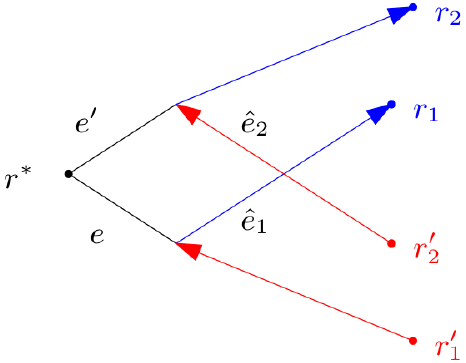}
			\caption{Case B-1-ii(a): The edges $e$, $e'$, $\hat{e}_1, \hat{e}_2$ form a~$C_4$.}
			\label{fig:MSL-CaseB-1-ii-a}
		\end{subfigure}
		\quad
		\begin{subfigure}{0.48\textwidth}
			\centering
			\includegraphics[width=0.88\linewidth]{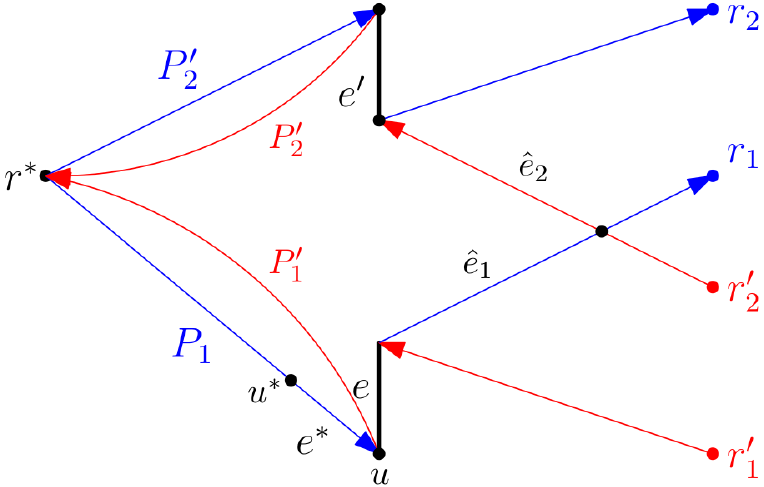}
			\caption{Case B-1-ii(b): $e$, $e'$, $\hat{e}_1, \hat{e}_2$ do not form a~$C_4$.}
			\label{fig:CaseB-1-ii-b}
		\end{subfigure}
		
		\begin{subfigure}{0.48\textwidth}
			\centering
			\includegraphics[width=0.88\linewidth]{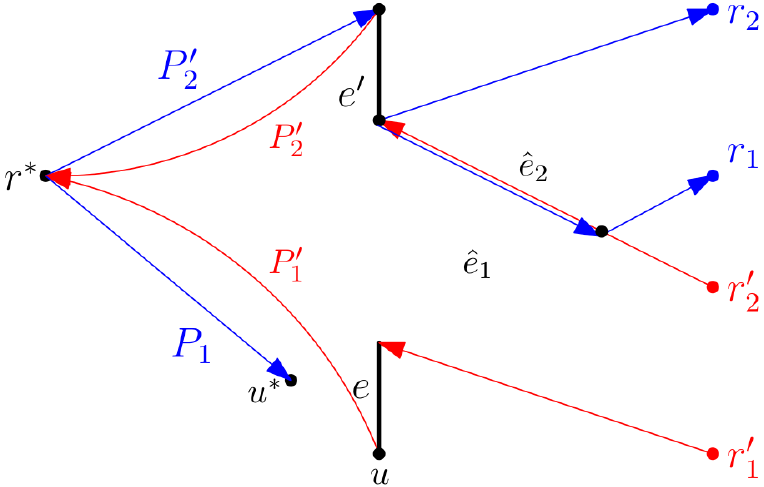}
			\caption{Modification of Case B-1-ii-b.}
			\label{fig:MSL-CaseB-1-ii-b_modification}
		\end{subfigure}
		\begin{subfigure}{0.48\textwidth}
			\centering
			\includegraphics[width=0.88\linewidth]{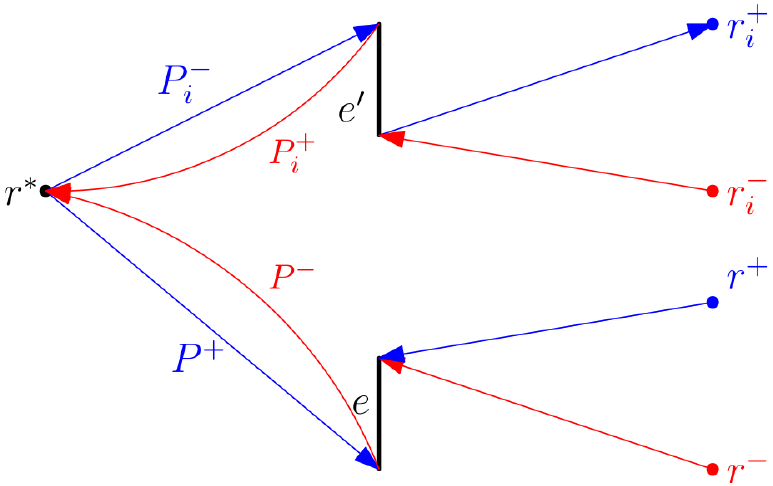}
			\caption{Case B-2: $r^+ \neq r_i^+$ and $r^- \neq r_i^-$.}
			\label{fig:MSL-CaseB-2}
		\end{subfigure}
		
		\caption{Cases B-1 -- B-2, where \emph{blue} color corresponds to the labeling $\lambda^+_{r^*}$ and \emph{red} to $\lambda^-_{r^*}$. \label{fig:MSL-CaseB-all}
		}
	\end{figure}

	\medskip
	
	\noindent\textbf{Case B-1.}
	Let $e,e'\in E^*_{r^*}$ and let $P_1$ be a path in $H^+_{r^*}$ from $r^*$ to some $r_1\in R\setminus\{r^*\}$ that only traverses $e$ from the edges in $E^*_{r^*}$ and let $P_2$ be a path in $H^-_{r^*}$ from some $r_2\in R\setminus\{r^*\}$ to $r^*$ that only traverses $e$ from the edges in $E^*_{r^*}$.
	Let $P_1'$ be a path in $H^+_{r^*}$ from $r^*$ to some $r_1'\in R\setminus\{r^*\}$ that only traverses $e'$ from the edges in $E^*_{r^*}$ and let $P_2'$ be a path in $H^-_{r^*}$ from some $r_2'\in R\setminus\{r^*\}$ to $r^*$ that only traverses $e'$ from the edges in $E^*_{r^*}$.

	Consider the case where all choices of $P_1,P_2,P_1',P_2'$ with the above properties we have $r_1=r_2'$ or $P_1$ and $P_2'$ intersect in a vertex after they traversed $e$ and $e'$, respectively. 
	Again for our analysis, we treat these two cases the same since in both cases we can assume that $r_2'$ can reach $r_1$, in the latter through the intersection point.
	The case where all choices of $P_1,P_2,P_1',P_2'$ with the above properties we have $r_1'=r_2$ or $P_1'$ and $P_2$ intersect in a vertex after they traversed $e'$ and $e$, respectively, is symmetric.
	
	Fix temporal paths $P_1,P_2,P_1',P_2'$ with the above properties and $r_1=r_2'$ or $P_1$ and $P_2'$ intersect in a vertex after they traversed $e$ and $e'$, respectively. Let $\hat{P}_1$ be the path segment from $e$ to the first vertex included in $P_2'$ (excluding $e$) and let $\hat{P}_2'$ be the path segment from the last vertex included in $P_1$ to $e'$ (excluding $e'$).
	
	\noindent\textbf{Case B-1-i.}
	Assume $|\hat{P}_1|\le |\hat{P}_2'|$ (the opposite case is symmetric) and $|\hat{P}_1|+ |\hat{P}_2'|\ge 3$ (not both paths are only a single edge). We remove $\hat{P}_2'$'s edges and $e$ and the corresponding labels from $H^-_{r^*}$ and  $\lambda^-_{r^*}$, respectively,
	such that there is a temporal path from $r_2'$ to $e$ that uses the new labels.
	
	Note that after the modifications $H^+_{r^*}$ and $H^-_{r^*}$ are still trees, and $\lambda^+_{r^*}$ and $\lambda^-_{r^*}$ still assign at most one label per edge. Furthermore, we have that the modification do not increase the sum of edges in both trees $|E(H^+_{r^*})\cup E(H^-_{r^*})|$. Lastly, and most importantly, we have that at least one of $H^+_{r^*}$ and $H^-_{r^*}$ does contain both edges $e$ and $e'$.
	
	\noindent\textbf{Case B-1-ii.}
	Assume $|\hat{P}_1|= |\hat{P}_2'|=1$, that is, both paths are only a single edge $\hat{e}_1$ and $\hat{e}_2$, respectively.
	
	\noindent\textbf{Case B-1-ii(a).}
	The edges $e$, $e'$, $\hat{e}_1$, and $\hat{e}_2$ form a $C_4$. Then we are in the case that $k$ is even.
	In this case we set $\hat{e}_1$ to be $e^+$ and we set $\hat{e}_2$ to be $e^-$. One of these two edges will be used to close the $C_4$, depending on whether which of $H^+_{r^*}$ and $H^-_{r^*}$ has fewer edges. The edges $e$ and $e'$ stay in $E_{r^*}^*$ and will be the only two edges for which we cannot account a label in $\lambda$ that is not present in $\lambda^+_{r^*}$ or $\lambda^-_{r^*}$. In this case we have that $|E_{r^*}^*|\le x'+2$ is fulfilled.
	
	\medskip
	
	If Case B-1-ii(a) never applies, then we are in the case that $k$ is odd and we have to be able to account a label in $\lambda$ that is not present in $\lambda^+_{r^*}$ or $\lambda^-_{r^*}$ for all but one edge in $E_{r^*}^*$.
	
	\medskip
	
	\noindent\textbf{Case B-1-ii(b).}
	The edges $e$, $e'$, $\hat{e}_1$, and $\hat{e}_2$ do not form a $C_4$.
	Let $e=uv$ and $e'=u'v'$ and let $u$ and $u'$ be the vertices closer to $r^*$ in $P_1$ and $P_2'$, respectively.
	Then this means there is either at least one edge $e^*$ between $r^*$ and $u$ or between $r^*$ and $u'$. Consider the case where $e^*$ is between $r^*$ and $u$ and let $e^*=uu^*$ for some vertex $u^*$. In this case $e^*$ is contained in $H^+_{r^*}$. The other case is symmetric. Let $e^{**}$ be the edge between $r^*$ and $u$ in $H^-_{r^*}$, that is incident with $u$.
	
	Note that $\lambda^+_{r^*}(e^*)<\lambda^+_{r^*}(e)<\lambda^-_{r^*}(e^{**})$.
	We now make the following modification. We remove label $\lambda^+_{r^*}(e^*)$ and add a new label to $\hat{e}_2$ in $\lambda^+_{r^*}$ that is chosen in a way that allows for a temporal path from $r^*$ to $r_1$ via $e'$ and then $\hat{e}_2$.
	
	\medskip
	
	\noindent\textbf{Case B-2.} 
	Fix some $e\in E^*_{r^*}$ and let $P^+$ be a path in $H^+_{r^*}$ from $r^*$ to some $r^+\in R\setminus\{r^*\}$ that only traverses $e$ from the edges in $E^*_{r^*}$ and let $P^-$ be a path in $H^-_{r^*}$ from some $r^-\in R\setminus\{r^*\}$ to $r^*$ that only traverses $e$ from the edges in $E^*_{r^*}$.
	For all $e_i\in E^*_{r^*}\setminus \{e\}$ let $P^+_i$ be a path in $H^+_{r^*}$ from $r^*$ to some $r^+_i\in R\setminus\{r^*\}$ that only traverses $e_i$ from the edges in $E^*_{r^*}$ and let $P_i^-$ be a path in $H^-_{r^*}$ from some $r^-_i\in R\setminus\{r^*\}$ to $r^*$ that only traverses $e_i$ from the edges in $E^*_{r^*}$. Note that for all $i\neq i'$ we have that $r^+_i\neq r^+_{i'}$ and $r^-_i\neq r^-_{i'}$. Now consider edge $e_i$. If $\lambda^+_{r^*}(e)\le \lambda^+_{r^*}(e_i)$, then the temporal path in $(G,\lambda)$ from $r^-_i$ to $r^+$ needs at least one label that is not contained in $\lambda^+_{r^*}$ or $\lambda^-_{r^*}$. 
	If $\lambda^+_{r^*}(e)> \lambda^+_{r^*}(e_i)$, then the temporal path in $(G,\lambda)$ from $r^-$ to $r^+_i$ needs at least one label that is not contained in $\lambda^+_{r^*}$ or $\lambda^-_{r^*}$.
	This implies, if Case B-1-ii(a) does not apply, that $\lambda$ contains at least $|E^*_{r^*}|-1$ labels that are not contained in $\lambda^+_{r^*}$ or $\lambda^-_{r^*}$ and hence $|E^*_{r^*}|\le x'+1$.
	If Case B-1-ii(a) applies, then $\lambda$ contains at least $|E^*_{r^*}|-2$ labels that are not contained in $\lambda^+_{r^*}$ or $\lambda^-_{r^*}$ and hence $|E^*_{r^*}|\le x'+2$.
	
	This finishes the proof.
\end{proof}

\medskip

Having \cref{lem:MSLstructure}, we can now give our algorithm for \MSL. As mentioned before, it uses an FPT-algorithm for \ST\ parameterized by the number of terminals~\cite{Dreyfus1971Steiner} as a subroutine. Recall the definition of \ST.

\problemdef{\ST}
{A static graph $G = (V,E)$, a subset of vertices $R \subseteq V$ and a positive integer $k$.}
{Is there a subtree of $G$ that includes all the vertices of $R$ and that contains at most $k$ edges.}

Let $(G,R,k)$ be an instance of \MSL. Note that if $G$ is $C_4$-free, then \cref{lem:MSLstructure} immediately implies that we can use an algorithm for \ST\ on the same input graph $G$ with the same terminal vertices $R$ and check whether the resulting solution subtree has at most $k^* = \lceil (k+1)/2 \rceil$ edges.
In the case where $G$ contains $C_4$s, we have to determine first whether there is a $C_4$ in $G$ that can be labeled in an optimal labeling. Formally, we show the following.

\begin{theorem}\label{thm:MinCoreFPT}
	\MSL\ is in FPT when parameterized by the number of terminals.
\end{theorem}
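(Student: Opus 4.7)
The plan is to leverage \cref{lem:MSLstructure} to reduce \MSL\ to a bounded number of calls to an FPT algorithm for \ST. By the lemma, there is a minimum labeling $\lambda$ that labels exactly the edges of a subgraph $H$ of $G$ which is either (i) a tree whose leaves lie in $R$, or (ii) such a tree together with one additional edge that creates a $C_4$ in $H$. In both cases $R\subseteq V(H)$, and applying \cref{thm:optLabGossipC4} to the connected subgraph $H$ shows that $H$ itself can be made temporally connected using exactly $2|V(H)|-3$ labels in case~(i) and $2|V(H)|-4$ labels in case~(ii). Hence, computing $\mathrm{OPT}_{\MSL}(G,R)$ reduces to minimizing $|V(H)|$ over each of these two families of subgraphs, which in turn can be formulated as \ST-type problems with only a bounded number of additional terminals.

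For case~(i) I would invoke the classical Dreyfus--Wagner algorithm~\cite{Dreyfus1971Steiner} on $(G,R)$, obtaining in time $3^{|R|}\cdot\mathrm{poly}(n)$ a minimum Steiner tree $T^\star$ for $R$ in $G$. Since every leaf of a minimum Steiner tree must lie in $R$, this directly delivers the optimum structure for case~(i), contributing a labeling of size $2|V(T^\star)|-3$.

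For case~(ii) I would enumerate all $C_4$s of~$G$; there are at most $\binom{n}{4}$ of them and they can be listed in polynomial time. For each four-cycle $C$ on vertex set $\{a,b,c,d\}$, and for each of the four possible choices of the ``extra'' edge (say $da$), the task becomes to find a cheapest tree in $G$ that contains $R\cup V(C)$ and includes the three remaining edges $ab,bc,cd$. I would enforce the three-edge requirement by contracting the path $a{-}b{-}c{-}d$ to a single vertex $v_C$ and running Dreyfus--Wagner on the contracted graph with terminal set $R\cup\{v_C\}$, which is still FPT since the number of terminals grew by only~$1$. Un-contracting and re-adding the edge $da$ produces a subgraph $H$ with $|V(H)|$ equal to the number of vertices of the contracted Steiner tree plus three, whose optimal labeling cost is $2|V(H)|-4$ by \cref{thm:optLabGossipC4}.

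The algorithm returns the minimum cost over all candidates produced in cases (i) and (ii), which by \cref{lem:MSLstructure} equals $\mathrm{OPT}_{\MSL}(G,R)$. The total running time is $O(n^4)\cdot 3^{|R|+1}\cdot\mathrm{poly}(n)$, which is FPT in~$|R|$. The main conceptual difficulty has already been absorbed by \cref{lem:MSLstructure}: without the structural guarantee that some optimum labeling labels a tree plus at most one $C_4$-completing edge, one would have to search over an exponentially large family of candidate subgraphs, instead of merely reducing to a polynomial number of \ST\ instances whose terminal sets exceed $R$ by at most a constant.
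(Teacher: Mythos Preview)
Your proposal is correct and follows essentially the same approach as the paper: both invoke \cref{lem:MSLstructure}, then reduce to polynomially many \ST\ instances whose terminal sets exceed $R$ by a constant, and finally use \cref{thm:optLabGossipC4} to compute the labeling cost. The only difference is in the handling of case~(ii): the paper simply adds the four vertices of the candidate $C_4$ as extra terminals (so $|R|+4$ terminals) and checks whether the resulting Steiner value equals $k^*$, whereas you contract the three path-edges of the $C_4$ and add a single new terminal. Both variants are valid; yours enforces the presence of the three $C_4$-edges explicitly, while the paper relies on the fact that once all four $C_4$-vertices are spanned one can add the cycle edges without increasing the vertex count. As a minor remark, your loop over the four choices of ``extra'' edge is redundant, since contracting any three of the four $C_4$-edges identifies the same four vertices and yields the same contracted instance.
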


\begin{proof}
	Assume we have access to an algorithm $\mathcal{A}$ for \ST\ that on input $(G,R)$ outputs the size of a minimum solution, that is, an integer $k$ such that $(G,R,k)$ is a \textsc{YES} instance of \ST\ and $(G,R,k-1)$ is a \textsc{NO} instance of \ST.
	
	Let $(G,R,k)$ be an instance of \MSL\ and let $k^*=\mathcal{A}(G,R)$. For all $C_4$'s in $G$ let $k_{C_4}=\mathcal{A}(G,R\cup V(C_4))$. If there exist an $C_4$ in $G$ such that $k_{C_4}=k^*$, then $(G,R,k)$ is a \textsc{YES} instance of \MSL\ if and only if $k\ge 2k^*-2$. Otherwise $(G,R,k)$ is a \textsc{YES} instance of \MSL\ if and only if $k\ge 2k^*-1$.
	
	We first show correctness, then we analyse the running time. 
	
	($\Leftarrow$): Assume there exist a $C_4$ in $G$ such that $k_{C_4}=k^*$. Then there exist a subtree of $G$ connecting all terminal vertices and containing three edges of the $C_4$. We add the missing edge of the $C_4$ and label the subgraph using \cref{thm:optLabGossipC4}. This requires $2k^*-2$ labels and clearly afterwards all terminals can pairwise reach each other. Hence, we have that if $k\ge 2k^*-2$, then $(G,R,k)$ is a \textsc{YES} instance of \MSL. Assume there is no $C_4$ in $G$ such that $k_{C_4}=k^*$. Then there exist a subtree of $G$ connecting all terminal vertices and containing $k^*$ edges. We label this tree using \cref{thm:optLabGossipC4}. This requires $2k^*-1$ labels and clearly afterwards all terminals can pairwise reach each other. Hence, we have that if $k\ge 2k^*-1$, then $(G,R,k)$ is a \textsc{YES} instance of \MSL.
	
	($\Rightarrow$): Assume that $(G,R,k)$ is a \textsc{YES} instance of \MSL\ and let $k_\text{opt}\le k$ such that $(G,R,k_\text{opt})$ is a \textsc{YES} instance of \MSL\ and $(G,R,k_\text{opt}-1)$ is a \textsc{NO} instance of \MSL. By \cref{lem:MSLstructure}, we have that if $k_\text{opt}$ is odd, then there is a labeling $\lambda$ of size $k_\text{opt}$ for $G$ such that the edges labeled by $\lambda$ form a tree $H$, and every leaf of $H$ is a vertex in $R$. It is easy to see that $H$ is a solution for the \ST\ instance $(G,R)$. Hence, $\mathcal{A}(G,R)$ outputs a lower bound $k^*$ for the number of edges in $H$. Furthermore, since all leafs of $H$ are terminals, we have that every vertex in $(H,\lambda)$ can temporally reach every other vertex. By \cref{thm:optLabGossipC4} we know that then $\lambda$ needs $2k^*-1$ labels. This implies that $k\ge k_\text{opt}\ge 2k^*-1$.
	
	Now assume that $k_\text{opt}$ is even. Then by \cref{lem:MSLstructure} we have that there is a labeling $\lambda$ of size $k^*$ for $G$ such that the edges labeled by $\lambda$ form a graph $H$ that is a tree $H'$ with one additional edge that forms a $C_4$, and every leaf of $H'$ is a vertex in $R$.
	For the $C_4$ that is formed we have that $\mathcal{A}(G,R\cup V(C_4))$ outputs a lower bound $k^*$ for the number of edges in $H'$.
	Note that we have $k^*\le \mathcal{A}(G,R)$, since otherwise $2k^*-2>2\mathcal{A}(G,R)-1$, which means by \cref{thm:optLabGossipC4} that $k_\text{opt}< 2k^*-2$. However, since all leafs of $H'$ are terminals, we have that every vertex in $(H,\lambda)$ can temporally reach every other vertex. Hence, \cref{thm:optLabGossipC4} implies that $k_\text{opt}\ge 2k^*-2$. It follows that $k_\text{opt}< 2k^*-2$ leads to a contradiction and we have $k\ge k_\text{opt}\ge 2k^*-2$.
	
	\emph{Running time:} We can use the FPT-algorithm for \ST\ parameterized by the number of terminals by Dreyfus and Wagner~\cite{Dreyfus1971Steiner} for algorithm $\mathcal{A}$. Note that we need to iterate over all $C_4$s in $G$ (there are at most $n^4$ of them). Each time we invoke $\mathcal{A}(G,R\cup V(C_4))$, we increase the number of terminals by at most four. It follows that overall we obtain an FPT running time for the number of terminals as a parameter.
\end{proof}

\subsection{Parameterized Hardness of \MASL}\label{MASL-W1-hard-subsec}

Note that, since \MASL\ generalizes both \MSL\ and \MAL, NP-hardness of \MASL\ is already implied by both \cref{thm:MinCoreNP,thm:NPDiameterStatic}.
In this section, we prove that \MASL\ is W[1]-hard when parameterized by the number $|R|$ of the terminals, even if the restriction $a$ on the age is a constant. To this end, we provide a parameterized reduction from \MCC. This, together with \cref{thm:MinCoreFPT}, implies that \MASL\ is strictly harder than \MSL\ (parameterized by the number $|R|$ of terminals), unless FPT$=$W[1]. 

\begin{theorem}\label{thm:MACLW1hard}
	\MASL\ is W[1]-hard when parameterized by the number $|R|$ of the terminals, even if the restriction $a$ on the age is a constant.
\end{theorem}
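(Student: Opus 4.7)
The plan is to give a parameterized reduction from \MCC, which is W[1]-hard parameterized by the number $k$ of colors. Given an instance $(G, V_1, \ldots, V_k)$ of \MCC, I construct an \MASL\ instance $(G^*, R, a, k^*)$ as follows: introduce a \emph{terminal} $t_i$ for each color class $V_i$ and a \emph{copy} $\hat v$ for each $v \in V(G)$; add an edge $t_i \hat v$ whenever $v \in V_i$, and an edge $\hat u \hat v$ whenever $uv \in E(G)$ has its endpoints in different color classes. Set $R = \{t_1, \ldots, t_k\}$, $a = 3$, and $k^* = 2k + \binom{k}{2}$. Since $|R|$, $a$, and $k^*$ all depend only on $k$, this is a valid parameterized reduction both for the parameter $|R|$ and (simultaneously) for the parameter ``number of labels of the solution''.

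The forward direction is routine: given a multicolored clique $\{v_1, \ldots, v_k\}$ with $v_i \in V_i$, assign labels $\{1,3\}$ to each edge $t_i \hat{v}_i$ and label $\{2\}$ to each edge $\hat{v}_i \hat{v}_j$ for $i < j$. For every ordered pair $(i,j)$, the sequence $t_i, \hat{v}_i, \hat{v}_j, t_j$ is a temporal path with strictly increasing labels $1,2,3$, so $(G^*, \lambda)$ is $R$-temporally connected, uses exactly $k^* = 2k + \binom{k}{2}$ labels, and has age $3$.

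The main obstacle is the converse direction. Given a labeling $\lambda$ with $|\lambda| \leq k^*$ and age at most $3$ that $R$-temporally connects $G^*$, I would first observe that every temporal path between two terminals in $(G^*,\lambda)$ has length exactly $3$ and uses labels $1,2,3$ in this order: the only short paths in $G^*$ between distinct terminals have the form $t_i - \hat u - \hat v - t_j$, and the age bound $a=3$ rules out longer ones. Consequently, labels other than $1$ or $3$ on $t$-incident edges and labels other than $2$ on copy--copy edges can be discarded without losing $R$-connectivity. Define $S_i^{\mathrm{out}} = \{u \in V_i : 1 \in \lambda(t_i \hat u)\}$ and $S_i^{\mathrm{in}} = \{u \in V_i : 3 \in \lambda(t_i \hat u)\}$; both sets are nonempty because $t_i$ must reach and be reached by some other terminal. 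Hence the number of labels used on $t$-incident edges equals $\sum_{i=1}^k (|S_i^{\mathrm{out}}| + |S_i^{\mathrm{in}}|) \geq 2k$.

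Separately, a label-$2$ edge $\hat u \hat v$ with $u\in V_i, v\in V_j$ can serve the ordered pair $(i,j)$ only if $u \in S_i^{\mathrm{out}}$ and $v \in S_j^{\mathrm{in}}$, and the pair $(j,i)$ only if $v \in S_j^{\mathrm{out}}$ and $u \in S_i^{\mathrm{in}}$; it therefore serves at most $2$ of the $k(k-1)$ ordered pairs that must all be realized, forcing at least $\binom{k}{2}$ label-$2$ labels. Combining, $|\lambda| \geq 2k + \binom{k}{2} = k^*$, so equality must hold throughout. Equality in the first bound yields $|S_i^{\mathrm{out}}| = |S_i^{\mathrm{in}}| = 1$ for every $i$; equality in the second forces every label-$2$ edge to serve both its ordered pairs, which requires its endpoints to lie in $S_i^{\mathrm{out}} \cap S_i^{\mathrm{in}}$ and $S_j^{\mathrm{out}} \cap S_j^{\mathrm{in}}$ respectively, hence $S_i^{\mathrm{out}} = S_i^{\mathrm{in}} = \{v_i\}$ for a unique $v_i \in V_i$ (otherwise the two singletons would be disjoint and no such edge could exist, contradicting that pairs $(i,j)$ and $(j,i)$ are served). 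The surviving label-$2$ edges then witness $v_i v_j \in E(G)$ for all $i \neq j$, so $\{v_1,\ldots,v_k\}$ is a multicolored clique. The delicate step is precisely this tightness analysis that yields $S_i^{\mathrm{out}} = S_i^{\mathrm{in}}$; the remaining arguments are bookkeeping.
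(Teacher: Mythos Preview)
Your reduction is correct and considerably simpler than the paper's. Both reduce from \MCC, but the paper builds a much heavier gadget: it introduces color-vertices $c_i$, vertex-vertices $u_v$, edge-vertices $u_e$, and color-combination-vertices $c_{ij}$, connected by paths of lengths $3$, $6$, and $12$ through many dummy vertices; the terminal set is $C\cup W$ of size $k+\binom{k}{2}$, the age bound is $a^*=12$, and the label budget $k^*$ is a quartic polynomial in $k$. The correctness argument then involves a lengthy case analysis showing that all terminal--terminal distances are exactly $12$ and that the tight budget forces a unique vertex-vertex per color.

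Your construction collapses all of this: with age $a=3$, every terminal--terminal temporal path is automatically forced to have the shape $t_i-\hat u-\hat v-t_j$ with labels $1,2,3$, so the ``useless label'' pruning is immediate and the tightness analysis reduces to the clean double-counting you give (labels $1,3$ on terminal-incident edges yielding $\sum_i(|S_i^{\text{out}}|+|S_i^{\text{in}}|)\ge 2k$, and each label-$2$ edge covering at most two of the $k(k-1)$ ordered pairs). The step you flag as delicate---deducing $S_i^{\text{out}}=S_i^{\text{in}}$ from equality---is indeed the crux, and your argument for it is sound: equality forces every label-$2$ edge to serve both its ordered pairs, which pins its $V_i$-endpoint into $S_i^{\text{out}}\cap S_i^{\text{in}}$, and since both are singletons they coincide. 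What your approach buys is a dramatically shorter proof, a smaller age constant ($3$ versus $12$), and $|R|=k$ rather than $\Theta(k^2)$; the paper's extra machinery (edge-vertices and color-combination-vertices) is not needed once one realizes that age $3$ already rigidifies the path structure completely.
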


\begin{proof}
	To prove that the \MASL\ is W[1]-hard when parameterized by the combination of the number $|R|$ of the terminals and the number $k$ of labels, even if the restriction $a$ on the age is a constant, we provide a parameterized polynomial-time reduction from \MCC\ parameterized by the number of colors, which is W[1]-hard~\cite{Fellows2009Parameterized}.
	
	\problemdef{\MCC}
	{A static graph $G = (V,E)$, a positive integer $k$, a vertex-coloring $c : V (G) \rightarrow \{1, 2, \dots , k\}$.}
	{Does $G$ have a clique of size $k$ including vertices of all $k$ colors?}
	
	Let $(G,k,c)$ be an input of the \MCC\ problem and denote $|V(G)| = n, |E(G)|=m$. 
	We construct $(G^*,R^*,a^*,k^*)$, the input of \MASL\ using the following procedure.
	The vertex set $V(G^*)$ consists of the following vertices:
	\begin{itemize}
		\item a ``color-vertex'' corresponding to every color of $V(G)$: $C = \{c_i | i \in \{1, 2, \dots , k\} \text{ a color}$ $\text{of } V(G)  \}$, 
		\item a ``vertex-vertex'' corresponding to every vertex of $G$: $U_V = \{u_v | v \in V(G)\}$,
		\item an ``edge-vertex'' corresponding to every edge of $G$: $U_E = \{u_e | e \in E(G)\}$,
		\item a ``color-combination-vertex'' corresponding to a pair of two colors of $V(G)$: $W = \{c_{i,j} | i,j \in \{1, 2, \dots , k\}, i < j, \text{ colors of } V(G) \}$, and
		\item $2n + 4m + 5m + \frac{11}{8}(k^4 - 2k^3 - k^2 + 2k) + \frac{11}{2}(k^3 - 3k^2 + 2k)$ ``dummy'' vertices.
	\end{itemize}
	The edge set $E(G^*)$ consists of the following edges:
	\begin{itemize}
		\item a path of length $3$ (using $2$ dummy vertices) between a color-vertex $c_i$, corresponding to the color $i$, and every vertex-vertex $u_v \in U_V$, where $v$ is of color $i$ in $V(G)$, \ie $c(v) = i$,
		\item for every edge $e=vw \in E(G)$, where $c(v) = i$ and $c(w)=j$, we connect the corresponding edge-vertex $u_e$ with
		\begin{itemize}
			\item[-] the vertex-vertices $u_v$ and $u_w$, each with a path of length $3$ (using $2$ dummy vertices),
			\item[-] the color-combination-vertex $c_{i,j}$, with a path of length $6$ (using $5$ dummy vertices),
		\end{itemize}
		\item a path of length $12$ (using $11$ dummy vertices), between each pair of color-combination-vertices, and
		\item a path of length $12$ (using $11$ dummy vertices), between all pairs of color-vertices $c_i$ and color-combination-vertices $c_{jk}$, where $i \notin \{j , k\}$, \ie we connect the color-vertex of color $i$ with all color-combination vertices of pairs of color that do not include $i$.
	\end{itemize}
	We set $R^* = C \cup W$ (note that $|R^*|\in O(k^2)$), $a^* = 12$ and $k^* = 6k + 6(k^2-k) + 6(k^2-k) + 3 (k^4 -2k^3 - k^2 + 2k) + 12(k^3 - 3k^2 + 2k)$. This finishes the construction. 
	It is not hard to see that this construction can be performed in polynomial time.
	For an illustration see \cref{fig:MASLfromMCC-W1}.
	At the end $G^*$ is a graph with $3n + 10m + \frac{1}{2}(k^2 + k) + \frac{11}{8}(k^4 -2k^3 - k^2 + 2k) + \frac{11}{2}(k^3 - 3k^2 + 2k)$ vertices 
	and $3n + 12m + \frac{3}{2}(k^4 - 2k^3 - k^2 + 2k) + 6(k^3 - 3k^2 + 2k)$ edges. 
	
	\begin{figure}[ht]
		\centering
		\includegraphics[width=0.8\linewidth]{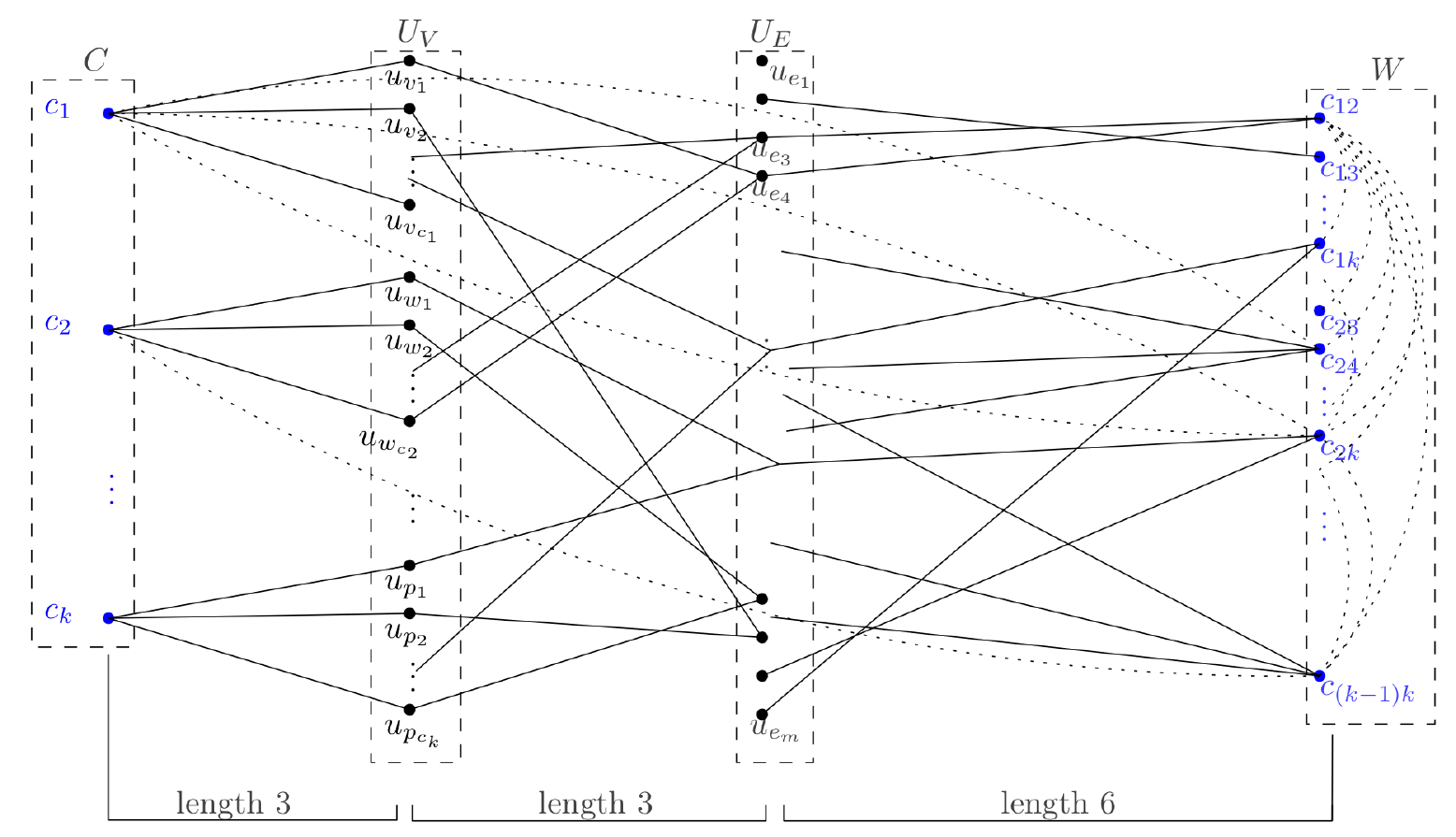}
		\caption{An example of the construction of the input graph for \MASL. For better readability, some paths among the vertices in $W$ and paths among $c_i \in C$ and $c_{jk} \in W$ ($i \neq j \neq k$), are not depicted.}
		\label{fig:MASLfromMCC-W1}
	\end{figure}
	
	We claim that $(G,k,c)$ is a \textsc{YES} instance of the \MCC\ if and only if $(G^*,R^*,a^*,k^*)$ is a \textsc{YES} instance of the \MASL.
	
	($\Rightarrow$):
	Assume $(G,k,c)$ is a \textsc{YES} instance of the \MCC. Let $S \subseteq V(G)$ be the set of vertices that form a multicolored clique in $G$.
	We construct a labeling $\lambda$ for $G^*$ that uses $k^*$ labels, which are not larger than $a^* = 12$, and admits a temporal path between all vertices from $R^*$ as follows.

	Let $U_S$ be the set of corresponding vertices to $S$ in $G^*$.
	For each $v \in S$ of color $i$ we label the three edges connecting $c_i$ to $u_v$ 
	with labels $1,2,3$, one per each edge, in order to create temporal paths starting in $c_i$ and
	with labels $12,11,10$, one per each edge, in order to create temporal paths that finish in $c_i$.
	For every edge $vw=e \in E$ with endpoints in $S$
	we label the path from both of its endpoint vertex-vertices $u_v, u_w$ to the edge-vertex $u_e$ with labels $4,5,6$, one per each edge, and
	with labels $9,8,7$, one per each edge.
	This ensures the existence of both temporal paths between $c_i$ and $c_j$.
	More precisely, $(c_i,c_j)$-temporal path (resp.~$(c_j,c_i)$-temporal path) uses labels $1,2,3$ to reach $u_v$ (resp.~$u_w$), from where it continues with $4,5,6$ to $u_e$, then with $7,8,9$ reaches $u_w$  (resp.~$u_v$) and finally with $10,11,12$ it finishes in $c_j$  (resp.~$c_i$).
	Note, since $S$ is a multicolored clique then each vertex $v' \in S$ is of a unique color $i'$
	and all vertices in $S$ are connected.
	Therefore, using the above construction for all vertices in $S$, vertex $c_i$ reaches and is reached by every other color vertex $c_j$ through the vertex-vertex $u_v$.
	Even more, since there is an edge $e$ connecting any two vertices $v,w \in S$,
	there is a unique edge vertex $u_e$ (and consequently a unique path), that is used for both temporal paths between vertex-vertices $u_v, u_w$ and their corresponding color-vertices.
	The above construction clearly produces a temporal path (of length $12$) between any two color-vertices.
	This construction uses $2 \cdot 3$ labels between every color-vertex $c_i$ and its unique vertex-vertex $u_v$, where $v \in S$ and $c(c)=i$,
	and $2 \cdot 6$ labels from each edge vertex $u_e$ to both of its endpoint vertex-vertices, where $e$ is an edge of the multicolored clique formed by the vertices in $S$.
	All in total we used $6k + 12 \binom{k}{2} = 6k + 6(k^2 -k)$ labels, to connect all edge-vertices corresponding to edges formed by $S$ with their endpoints vertex-vertices.
	
	Now, let $c_{ij}$ and $c_{i'j'}$ be two arbitrary color-combination-vertices.
	By the construction of $G^*$ there is a unique path of length $12$ connecting them, which we label with labels $1,2, \dots , 12$ in both directions.
	This labeling uses $2 \cdot 12$ labels for each pair of color-combination-vertices, hence all together we use $24 \frac{|W|(|W| - 1)}{2}$ labels, since $|W| = \binom{k}{2}$ this equals to $3(k^4-2k^3-k^2+2k)$.
	
	Finally, let $c_{i'}$ and $c_{ij}$ be two arbitrary color and color-combination-vertices, respectively.
	In the case when $i' \notin \{i,j\}$ there is a unique path of length $12$ in $G^*$ between them (that uses only the dummy vertices).
	We label this path with labels $1,2, \dots , 12$ in both directions.
	This procedure uses $2 \cdot 12$ labels for each pair of such vertices, hence all together we use $24 k \binom{k-1}{2}$ labels, which equals to $12(k^3 - 3k^2 + 2k)$.
	In the case when $i' \in \{i,j\}$ (w.l.o.g.\ $i' = i$)
	we connect the vertices using the following path.
	In $S$ exists a unique vertex of color $i$, denote it $v$. By the definition of $S$ there is also vertex $w$ of color $j$, which is connected to $v$ with some edge, denote it $e$.
	Therefore, to obtain a $(c_i, c_{ij})$-temporal path, we first reach $u_v$ from $c_i$ with labels $1,2,3$, then continue to $u_e$, using labels $4,5,6$, from where we continue to $c_{i,j}$ using the labels $7,8, \dots, 12$.
	The $(c_{ij}, c_{i})$-temporal path uses the same edges, with labels in reversed order.
	This construction introduced $2 \cdot 6$ new labels on the path of length $6$ between the edge-vertex $u_e$ and the color-class-vertex $c_{ij}$
	and reused all labels on the $(c_i, u_e)$-temporal paths.
	Repeating this for every color-class-vertex we use $2 \cdot 6 |W|$ new labels, since $|W| = \binom{k}{2}$ this equals to $6(k^2-k)$.
	
	All together $\lambda$ uses $6k + 6(k^2-k) + 6(k^2-k) + 3(k^4-2k^3-k^2+2k) + 12(k^3-3k^2+2k)$ labels.
	
	($\Leftarrow$):
	Assume that $(G^*,R^*,a^*,k^*)$ is a \textsc{YES} instance of the \MASL\ and let $\lambda$ be the corresponding labeling of $G^*$. 
	Before we construct a multicolored clique for $G$,
	we prove that the distance between any two terminal vertices from $R^*$ in $G^*$ is $12$.
	
	\medskip
	
	\noindent\textbf{Case A.} Let $c_i, c_j \in C$ be two arbitrary color-vertices
	and let $e$ be an edge in $G$ with endpoints of color $i$ and $j$, \ie $e = vw \in E(G)$ and $c(v) = i, c(w)=j$.
	There are two options how to reach $c_j$ from $c_i$.
	One when the path connecting them passes through the set $E$ and the other, when it passes through the set $W$.
	
	\noindent\textbf{Case A-1.}
	If the path passes through the set $E$,
	we must first go through a vertex-vertex $u_v$, then we go to the edge-vertex $u_e$, continue to the vertex-vertex $u_w$ and finish in $c_j$.
	Since all these vertices are connected with a path of length $3$, we get that the distance of the whole $(c_i,c_j)$-path is $12$.
	
	\noindent\textbf{Case A-2.}
	If the path passes through the set $W$,
	then we must go through the color-class-vertex $c_{ij}$.
	Since the path between any color-vertex and color-class-vertex is of length $12$ (we prove this in the following paragraph), the whole $(c_i,c_j)$-path is of length $24$.
	
	\medskip
	
	Therefore, the shortest path connecting two color-vertices is of length $12$ and must go through the appropriate edge-vertex.
	
	\medskip
	
	\noindent\textbf{Case B.} Let $c_{ij}$ and $c_{i'}$ be two arbitrary vertices from the color-class-vertices and color-vertices.
	We distinguish two cases. 
	
	\noindent\textbf{Case B-1.}
	First, when $i' \notin \{i,j\}$.
	Then, by the construction of $G^*$, there exists a direct path of length $12$, connecting them.
	Any other $(c_{i'}, c_{ij})$-path must 
	either go from $c_{i'}$ to some color-class-vertex $c_{i'j'}$, which is then connected with a path of length $12$ to the $c_{ij}$,
	or go to one of the color-vertices and then continue to the $c_{ij}$. 
	In both cases the constructed path is strictly longer than $12$.
	
	\noindent\textbf{Case B-2.}
	Second, when $i' \in \{i,j\}$.
	Let $c(v) = i$ and $vw = e \in E(G)$. Then there is a path from $c_i$ to $c_{ij}$ that goes through the vertex-vertex $u_v$ (using a path of length $3$), continues to the edge-vertex $u_e$ (using a path of length $3$), which is connected to the color-class-vertex $c_{ij}$ (using a path of length $6$).
	Hence the constructed $(c_i, c_{ij})$-path is of length $12$.
	There exists also another $(c_i, c_{ij})$-path, that goes through some other $c_{ij'}$ color-class-vertex, but it is longer than~$12$.
	
	\medskip
	
	\noindent\textbf{Case C.} Let $c_{ij}$ and $c_{i' j'}$ be two arbitrary color-class-vertices.
	By construction of $G^*$, there is a path of length $12$ connecting them.
	Any other $(c_{ij} ,c_{i' j'})$-path, must use at least one vertex-vertex, which is on the distance $9$ from the color-class-vertices (therefore the path through it would be of length at least $18$), 
	or a color-vertex, which is on the distance $12$ from the color-class-vertices.
	In both cases the constructed path is strictly longer than $12$.
	
	\medskip
	
	It follows that the distance between any two terminal vertices in $R^*$ is $12$, hence a temporal path connecting them must use all labels from $1$ to $12$.
	Using this property we know that any labeling that admits a temporal path among all terminal vertices must definitely use all labels $1,2, \dots, 12$ on the temporal paths
	among any two color-combination-vertices $c_{ij}$ and $c_{i'j'}$, and
	among a color-vertex $c_{i'}$ and a color-combination-vertex $c_{ij}$, where $i' \notin \{i,j\}$.
	This is true as there are unique paths of length $12$ among them.
	For these temporal paths we must use $2 \cdot 12 \frac{|W|(|W| - 1)}{2}$ labels (since $|W| = \binom{k}{2}$ this equals to $3(k^4-2k^3-k^2+2k)$)
	and $2 \cdot 12 k \binom{k-1}{2}$ labels (which equals to $12(k^3 - 3k^2 + 2k)$).
	Therefore, the labeling $\lambda$ can use only $6k + 6(k^2-k) + 6(k^2-k)$ labels to connect all other terminals.
	
	Let us now observe what happens with the temporal paths connecting remaining temporal vertices.
	To create a temporal path starting in a color-vertex $c_i$ and ending in some other color-vertex (or color-combination-vertex), $\lambda$ must label at least $3$ edges, to allow $c_i$ to reach one of its corresponding vertex-vertices $u_v$. Similarly it holds for a temporal path ending in $c_i$.
	Since the path connecting $c_i$ to some other terminal is of length $12$, the labels used on the temporal paths starting and ending in $c_i$ cannot be the same. In fact the labels must be $1,2,3$ for one direction and $12,11,10$ for the other.
	Therefore, $\lambda$ uses at least $6k$ labels on edges between vertices of $C$ and $U_V$.
	Extending the arguing from above, for $c_i$ to reach some (suitable) edge vertex $u_e$ the path needs to continue from $u_v$ to $u_e$ and must use the labels $4,5,6$ (or $9,8,7$ in case of the path in the opposite direction). 
	From $u_e$ the path can continue to the corresponding color-combination-vertex $c_{i,j}$ where it must use the labels $7,8, \dots, 12$, or to the vertex-vertex corresponding to the other endpoint of $e$.
	This finishes the construction of the temporal path from a color-vertex to the color-class-vertex
	and the temporal paths among color-vertices.
	The remaining thing is to connect a color-class-vertex with its corresponding color-vertices.
	The temporal path must go through some edge vertex $u_e$, that is on the distance $6$ from it, therefore the labeling must use the labels $1,2, \dots , 6$. From $u_e$ the path continues to the suitable vertex-vertex and then to the color-vertex.
	Using the above labeling we see that $\lambda$ must use at least $2 \cdot 6 |W|$ labels (which equals to $6(k^2-k)$ labels) on the edges between the color-class-vertices in $W$ and the edge vertices in $U_E$
	and at least $2 \cdot 6 \binom{k}{2}$ labels (which equals to $6(k^2-k)$ labels) on the edges between the edge-vertices in $U_E$ and vertex-vertices in $U_V$.
	Since all this together equals to $k^*$, all of the bounds are tight, \ie labeling cannot use more labels.
	
	We still need to show that for every color-vertex $c_i$ there exists a unique vertex-vertex $u_v$ connected to it such that all temporal paths to and from $c_i$ travel only through $u_v$.
	By the arguing on the number of labels used, we know that there can be at most two vertex-vertices that lie on temporal paths to or from $c_i$.
	More precisely, one that lies on every temporal path starting in $c_i$ and the other that lies on every temporal path that finishes in $c_i$.
	Let now $u_v, u_{v'}$ be two such vertex-vertices. Suppose that $u_v$ lies on all temporal paths that start in $c_i$ and $u_{v'}$ on all temporal paths that end in $c_i$.
	Now let $u_e$ be the edge-vertex on a temporal path from $c_i$ to $c_j$, and let $u_w$ be the vertex-vertex connected to $c_j$ and $u_e$.
	Therefore the $(c_i,c_j)$-temporal path has the following form: it starts in $c_i$, uses the labels $1,2,3$ to reach $u_v$, then continues to $u_e$ with $4,5,6$, then with $7,8,9$ reaches $u_w$ and with $10,11,12$ ends in the $c_j$.
	To obtain the $(c_i,c_{ij})$-teporal path we must label the edges from $u_e$ to $c_{ij}$ with the labels $6,7, \dots, 12$,
	since the edge-vertex $u_e$ is the only edge-vertex connected to the color-class-vertex $c_{ij}$ that can be reached from $c_i$ (if there would be another such edge-vertex, then the labeling $\lambda$ would use too many labels on the edges between $U_V$ and $U_E$).
	Now, for the color-vertex $c_j$ to be able to reach the color-class-vertex $c_{ij}$, it must use the same labels between the $u_e$ and $c_{ij}$ (using the same reasoning as before).
	Therefore the path from $c_j$ to $u_e$ (through) $u_{w}$ uses also the labels $1,2, \dots, 6$.
	But then for $c_j$ to reach $c_i$ the temporal path must use the vertex-vertex $u_w$, even more
	it must use the edge vertex $u_e$ and consequently the vertex-vertex $u_v$, from where it would reach $c_i$.
	But this is in the contradiction with the assumption that the path from $c_i$ to $u_v$ uses only labels $1,2,3$.
	Therefore, every color-vertex $c_i$ admits a unique vertex-vertex $u_v$ that lies on all $(c_i,c_j)$ and $(c_j,c_i)$-temporal paths.
	For the conclusion of the proof we claim that all vertices $v$ corresponding to these unique vertex-vertices $u_v$ of color-vertices $c_i$, form a multicolored clique in $G$.
	This is true as, by construction, a temporal path between two vertex-vertices $u_v, u_w$ corresponds to the edge $vw = e \in E(G)$.
	Since every vertex-vertex is connected to exactly one color-vertex, this corresponds to the vertex coloring of $V(G)$.
	In $G^*$ there is a temporal path among any two color vertices, therefore the vertex-vertices used in these temporal paths can be reached among each other,
	which means that they really do form a multicolored clique.
\end{proof}

\medskip

Note here that, in the constructed instance of \MASL\ in the proof of~\cref{thm:MACLW1hard}, 
the number of labels is also upper-bounded by a function of the number of colors in the instance of \MCC. Therefore the proof of~\cref{thm:MACLW1hard} implies also the next result, which is even stronger (since in every solution of \MASL\ the number of time-labels is lower-bounded by a function of the number $|R|$ of terminals).

\begin{corollary}
	\MASL\ is W[1]-hard when parameterized by the number $k$ of time-labels, even if the restriction $a$ on the age is a constant.
\end{corollary}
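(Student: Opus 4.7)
The plan is to observe that the very same reduction constructed in the proof of \cref{thm:MACLW1hard} already certifies the stronger statement, so no new construction is needed. Concretely, I would inspect the parameters of the instance $(G^*,R^*,a^*,k^*)$ produced from an \MCC\ input $(G,k,c)$ and verify that the label budget $k^*$ is bounded by a function of $k$ alone. From the explicit formula
\[
k^* = 6k + 12(k^2-k) + 3(k^4-2k^3-k^2+2k) + 12(k^3-3k^2+2k),
\]
one sees $k^* \in O(k^4)$, so the map $(G,k,c)\mapsto (G^*,R^*,a^*,k^*)$ with new parameter $k^*$ is a parameterized reduction from \MCC\ (parameterized by the number of colors) to \MASL\ (parameterized by the number of time-labels), while the age bound $a^*=12$ remains a constant. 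Since \MCC\ is W[1]-hard under the number-of-colors parameterization, W[1]-hardness of \MASL\ under the label-budget parameterization follows immediately.

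It remains to reconcile this with the already-established hardness in $|R|$. The short remark preceding the corollary can be turned into a one-line argument: in any \textsc{YES} instance of \MASL\ with terminal set $R^*$, each pair of distinct terminals must be connected by a temporal path, each such path uses at least one label, and an easy counting argument shows $k^*\ge f(|R^*|)$ for some increasing function $f$. Hence bounding $k^*$ by a function of the \MCC\ parameter automatically bounds $|R^*|$ by a (possibly different) function of the same parameter, which is consistent with the previously computed $|R^*|\in O(k^2)$. This confirms that parameterization by $k^*$ is at least as restrictive as parameterization by $|R^*|$, so the corollary is indeed stronger than \cref{thm:MACLW1hard}.

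I do not anticipate a genuine technical obstacle here, since the whole point is that the reduction of \cref{thm:MACLW1hard} was already built so that $k^*$ depends only on the \MCC\ parameter $k$. The only care needed is to state explicitly that the transformation is a parameterized reduction with respect to the new parameter $k^*$: it runs in polynomial time in the input size (already argued in \cref{thm:MACLW1hard}) and the new parameter is bounded by a computable function (namely $O(k^4)$) of the old parameter. With these two observations the proof is complete.
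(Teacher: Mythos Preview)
Your proposal is correct and follows essentially the same approach as the paper: you observe that the label budget $k^*$ in the reduction of \cref{thm:MACLW1hard} is bounded by a function of the \MCC\ parameter $k$ alone (namely $O(k^4)$), so the same reduction is a parameterized reduction for the label-budget parameter. Your additional remark that $k^*$ lower-bounds a function of $|R^*|$, and hence the corollary is stronger, matches the paper's own justification.
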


\section{Concluding remarks}

Several open questions arise from our results. As we pointed out in~\cref{thm:BoundsForCycles}, $\kappa(C_n,d) = \Theta(n^2)$, 
while $\kappa(G,d) = O(n^2)$ for every graph $G$ by Observation~\ref{kappa-bound-obs}. 
For which graph classes $\mathcal{G}$ do we have $\kappa(G,d) =o(n^2)$ (resp.~$\kappa(G,d) =O(n)$) for every $G\in\mathcal{G}$?

As we proved in~\cref{thm:NPDiameterStatic}, \MAL\ is NP-complete when the upper age bound is equal to the diameter $d$ of the input graph $G$. In other words, it is NP-hard to compute $\kappa(G,d)$. On the other hand, $\kappa(G,2r)$ can be easily computed in polynomial time, where $r$ is the \emph{radius} of $G$. Indeed, using the results of~\cref{gossip-polynomial-subsec}, it easily follows that, if $G$ contains (resp.~does not contain) a $C_4$ then $\kappa(G,2r)=2n-4$ (resp.~$\kappa(G,2r)=2n-3$). For which values of an upper age bound $a$, where $d\leq a \leq 2r$, can $\kappa(G,a)$ computed efficiently? In particular, can $\kappa(G,d+1)$ or $\kappa(G,2r-1)$ be computed in polynomial time for every undirected graph $G$?

With respect to parameterized algorithmics, is \MAL\ FPT with respect to the number $k$ of time-labels?

\end{document}